\def \hring {{\hat{\cal R}}}
\def \ring {{\cal R}}
\def \ord {{\rm ord}}
\def \Ord {{\rm Ord}}
\def \hu {\hat{u}}
\def \hv {\hat{v}}
\def\Id{\mbox{Id}}
\def \lieg {{\mathfrak{g}}}
\def \raph {{\, \xmapsto{\varphi}\, }}
\newcommand\cF{{\mathcal F}}
\newcommand\cR{{\mathcal R}}
\newcommand\cI{{\mathcal I}}
\newcommand\cT{{\mathcal T}}
\newcommand\cM{{\mathcal M}}
\newcommand\Mon{{\mathcal Mon(\cR)}}
\newcommand\DerR{{\mathcal Der(\cR)}}
\newcommand\DerF{{\mathcal Der(\cF)}}
\newcommand*{\defeq}{\mathrel{\vcenter{\baselineskip0.5ex \lineskiplimit0pt
                     \hbox{\scriptsize.}\hbox{\scriptsize.}}}%
                     =}
\newcommand\cFS{{\cF[\cS,\cS^{-1}]}}
\newtheorem{Def}{Definition}
\newtheorem{Ex}{Example}
\newtheorem{Thm}{Theorem}
\newtheorem{Pro}{Proposition}
\newtheorem{Rem}{Remark}
\def \k {\C}
\def \bu {{\bf u}}
\newcommand\Sg{{\Sigma}}
\newcommand\cS{{\mathcal S}}
\newcommand{\field}[1]{\mathbb{#1}}
\newcommand{\C}{\field{C}}
\newcommand{\Z}{\field{Z}}
\newcommand{\N}{\field{N}}
\newcommand{\D}{\field{D}}
\newcommand\T{{\mathcal S}}
\newcommand\fF{{\mathfrak{F}}}
\newcommand\fA{{\mathfrak{A}}}
\date{}
\begin{document}
\title{Perturbative Symmetry Approach for Differential-Difference Equations.}

\author{Alexander V. Mikhailov$^{+*}$,
Vladimir S. Novikov$^{\dagger}$ and Jing Ping Wang$ ^\& $\\
$+$School of Mathematics, University of Leeds, UK\\
$*$P. G. Demidov
Yaroslavl State University, RF\\
$\dagger$ School of Mathematics, Loughborough University, UK\\
$\&$ School of Mathematics, Statistics and Actuarial Science, University of Kent, UK
}

\maketitle
\begin{abstract}

We propose a new method for solution of the integrability problem for evolutionary differential-difference equations of arbitrary order. It enables us to produce necessary integrability conditions, to determine whether a given equation is integrable or not, and to advance in classification of integrable equations. In this paper we define and develop symbolic representation for the difference polynomial ring, difference operators and formal series. In order to formulate necessary integrability conditions, we introduce a novel quasi-local extension of the difference ring. It enables us to progress in classification of integrable differential-difference evolutionary equations of arbitrary order. In particular, we solve the problem of classification of integrable equations of order $(-3,3)$ for the important subclass of quasi-linear equations and produce a list of 17 equations satisfying the necessary integrability conditions.   For every equation from the list we present an infinite family of integrable higher order relatives. Some of the equations obtained
are new.
\end{abstract}

\section{Introduction}

The problem to determine whether a given equation is integrable (testing for integrability) and much more difficult problem to give an exhaustive description of all integrable cases for a certain type of equations up to invertible transformations (the classification problem) are central in the theory of integrable systems. In this paper we study evolutionary differential-difference equations
\begin{equation}\label{eq1int}
 u_t=F(u_p,\ldots,u_q)
\end{equation}
for a function $u=u(n,t)$ of one discrete variable $n\in\Z$ and a continuous independent variable $t\in\C$. Here
we use the standard notations
\begin{eqnarray*}
 u_t=\partial_t(u), \quad u_k=\cS^k u(n,t)=u(n+k,t)
\end{eqnarray*}
and $\cS$ is the shift operator. Existence of an infinite algebra of (infinitesimal) symmetries of equation (\ref{eq1int}) we take as the definition of its integrability. There are many alternative views on integrability, including  existence of multi-soliton solutions, ``regular dynamics'', Painlev\'e property of finite dimensional reductions, integrable continuous limits, etc., which are useful but difficult to formalise or inconclusive. The symmetry approach provides us with a rigorous framework enabling to formulate necessary integrability conditions  which are suitable for solution of the classification problem for  equations of arbitrary order. 

In the case of partial differential equations (PDEs) the symmetry approach proved to be successful for  classification of evolutionary equations and system of equations \cite{mr86i:58070}-\cite{mr93b:58070}. Its further development, the perturbative symmetry approach for PDEs, based on the natural degree grading structure and  symbolic representation of the differential ring \cite{mr58:22746}-\cite{mr2001h:37147} enabled us to extend the method to nonlocal and/or non-evolutionary equations such as
the Benjamin-Ono equation and the Camassa-Holm equation  \cite{mn1, mn2, mnw07}.
Symbolic representation was successfully used for global classification of integrable scalar homogeneous evolutionary equations \cite{mr99g:35058}. We refer to, for example, the review paper \cite{MNW3} and the recent book \cite{sok-book} for detail discussion of classification problems for integrable PDEs and related publications. 

The classification of integrable differential-difference equations has not enjoyed the same success as for partial differential equations so far.   The first classification result was obtained by Yamilov in 1983 \cite{yam83} for  general differential-difference equations of order $(-1,1)$:
$$
u_t=f(u_{-1}, u, u_1), \qquad \frac{\partial f}{\partial u_1}\neq 0, \qquad \frac{\partial f}{\partial u_{-1}}\neq 0.
$$
Yamilov derived and used for classification necessary conditions for existence of higher order generalised symmetries and conservation laws.
Since then this method has been developed further and used for the classification of other important types of differential-difference equations, including Toda and relativistic Toda-type equations \cite{Yami0}.

Recently,  Garifullin, Yamilov and Levi gave a partial classification result for the five-point differential-difference equations
\cite{Yami1}. They produced a complete list of quasi-linear equations of order $(-2,2)$, i.e. equations of the form
\begin{equation}\label{yamil1}
u_t=A(u_{-1},u, u_1) u_2+B(u_{-1},u, u_1) u_{-2} +C(u_{-1},u, u_1),
\end{equation}
which admit a symmetry of order $(-4,4)$. Their impressive classification list contains 31 equations up to autonomous point transformations, including some new equations, which are all proved to be integrable by the following-up study.  Their list contains equations of two types, namely equations which admit symmetries of orders $(-n,n)$ for all $n\in\N$, 
and equations which admit only even order of symmetries $(-2n, 2n), n\in \N$. Thus the resulting list depends on the assumption on the order of a symmetry that equation \eqref{yamil1} admits. This, list  would represent a complete classification of integrable equations, if it is shown that any integrable equation \eqref{yamil1} necessarily admits a symmetry of order $(-4,4)$. The latter is a challenging problem which cannot be tackled by the methods used in \cite{Yami1}.

To formulate integrability conditions  which are suitable for any lacunae in the sequence of symmetries,
we   develop here a perturbative symmetry approach for differential-difference equations. The adaptation of methods previously used for PDEs is far not straightforward. It requires building a symbolic representation of difference rings and rings of difference operators, quasi-local extensions of rings and formal pseudo-difference series with quasi-local coefficients.  
For a differential-difference equation admitting an infinite algebra of symmetries Adler proved   existence of a formal recursion operator \cite{adler14}. We define a canonical formal recursion operator and for an integrable equation prove its existence and uniqueness. In symbolic representation its coefficients can be found explicitly for any equation, but if the equation admits an infinite algebra of symmetries then the coefficients must be quasi-local.   
Quasi-locality of the coefficients of the canonical formal recursion operator are  universal integrability conditions for differential-difference equations of arbitrary orders. We demonstrate the power of our method by solving the classification problem for an important family of quasi-linear differential-difference equations of order $(-3,3)$.

The paper starts with basic algebraic setting for the study of evolutionary differential-difference equations and algebras of their symmetries. In Section \ref{sec2} we discuss different grading for the difference polynomial ring and  its evolutionary derivations, difference   operators, and formal pseudo-difference series.  An introduction to the symmetry approach is given in Section \ref{sec3}. Approximate symmetries and approximate integrability are defined in Section \ref{secAprox}.  In Section \ref{sec4}, we define symbolic representation 
of a  difference polynomial ring, difference operators and formal series and formulate criteria of approximate integrability using symbolic representation.
For a given evolutionary differential-difference equation, either polynomial or represented by a formal series, we   give a recursive formula for the coefficients of its symmetries (Theorem \ref{theorsym}). The coefficients are uniquely determined by the linear part of the symmetry. This result can be used to test the existence of fixed order symmetries and to derive the necessary integrability conditions if the linear part of a symmetry is known.

For integrable PDEs the coefficients of a formal recursion operator must belong to differential field. These  necessary integrability conditions are  independent of possible lacunae in the hierarchy of symmetries. 
The proof is based on the existence of fractional powers of formal pseudo-differential series with coefficients in the differential field \cite{mr86i:58070}-\cite{mr93b:58070}.
In the case for difference operators or difference formal series, fractional powers   with coefficients in the difference ring (or field) may not exist \cite{MWX}. It motivates us to introduce a quasi-local extension of the difference ring (Section \ref{sec51}). Then the universal integrability conditions can be formulated as the conditions on the coefficients of the canonical formal recursion operator: the coefficients must be quasi-local (Section \ref{sec52}).

In Section \ref{sec6} we give a complete classification integrable  differential-difference equations of the form
\begin{equation}\label{gen1}
u_{t}=u_3f(u_2,u_1,u)-u_{-3}f(u_{-2},u_{-1},u)+g(u_2,u_1,u)-g(u_{-2},u_{-1},u),\qquad f(0,0,0)\ne 0,
\end{equation}
where $f,g$ are polynomial functions or formal series. We list only equations (\ref{gen1}) which do not admit symmetries of order $(-1,1)$ and $(-2,2)$. The latter are   known equations, since they are members of well known integrable hierarchies studied in 
\cite{Yami0, Yami1}.
Our list  consists on 17 equations satisfying necessary integrability conditions (Theorem \ref{class3}). We claim that the list is complete and have shown that all equations from the list are integrable. They either can be reduced to known integrable equations by difference substitutions (not invertible), or admit Lax representations. To the best of our knowledge, the list contains two genuinely new equations:
\begin{eqnarray}
u_{t}&=&(u^2+1)(u_3 \sqrt{u_1^2+1} \sqrt{u_2^2+1} - u_{-3} \sqrt{u_{-1}^2+1} \sqrt{u_{-2}^2+1}).\label{square}\\ 
u_{t}&=&u (u_2u_3 +u u_1-u u_{-1}-u_{-2}u_{-3})- u(u_2+u_1-u_{-1}-u_{-2}),\label{newadler}
\end{eqnarray}

For each of these $17$ equations, we found an infinite family of integrable equations of arbitrary high order. For instance, equation \eqref{square} is a member of the family of integrable equations
\begin{eqnarray}\label{eqrootn}
u_t=(1+u^2)(u_n\prod_{k=1}^{n-1}\sqrt{1+u_k^2}-u_{-n}\prod_{k=1}^{n-1}\sqrt{1+u_{-k}^2}), \quad  n\in \mathbb{N}.
\end{eqnarray}
Equation \eqref{square} is the $3$--relative (i.e. $n=3$) of this family. The $2$--relative ($n=2$) was discovered in classification of $(-2,2)$ order integrable equations \cite{Yami1} and $1$--relative is the well known modified Volterra equation.
For any two distinct values of  $k$ the corresponding $k$--relative non-linear equations from the same family are not compatible.
In  Section \ref{sec63} we have shown that equation (\ref{eqrootn}) has a Lax representation $L_t=[A,L]$ with 
\begin{eqnarray*}
L=Q^{-1}P,\quad A=L_{+}-L_{+}^{\dagger},\quad 
 Q=u- u_1 \sqrt{1+u^2}\ \cS^{-1}, \quad P=\left(u \sqrt{1+u_1^2}\ \cS-u_1\right) \cS^{n-1},
\end{eqnarray*}
where $L_{+}$ is the part with non-negative powers of $\cS$ in the Laurent formal difference series $L$ and $\dagger$ denotes the formal adjoint operator.

We conclude the paper with a short summary and discussion. In particular, we include Adler's Lax representation (with his kind permission \cite{Adler4}) for the integrable family of equations which includes equation \eqref{newadler}.

\section{Derivations, difference operators and formal series}\label{sec2}

The main objects of our study are evolutionary differential-difference equations and algebras of their symmetries. Although the phase space of such systems is infinite dimensional, each equation of the system relates a finite number of dynamical variables which we will treat as independent. In order to develop a rigorous theory we will use elements of the theory of difference rings, difference operators and formal series, including their symbolic representation. It will also enable us to introduce quasi-local extensions of the difference rings and formulate verifiable integrability conditions for differential-difference equations.

\subsection{Difference ring and its evolutionary derivations}

We define the polynomial ring $\cR=\k [\bu]$ and the corresponding field of fractions  $\cF=\k (\bu)$ of the infinite set of (commutative) variables $ \bu=\{u_n\,|\, n\in\Z \}$. We will often omit index zero at $u_0$. The set of all monomials $\Mon =\{u_{n_1}^{m_1} u_{n_2}^{m_2}\cdots u_{n_k}^{m_k}\,|\, n_i\in\Z,\ m_i\in\Z_{\geqslant 0}\}$ is the additive basis in $\cR$.

There is a natural automorphism $\cS$ of the field $\cF$, 
which we call the {\em shift operator}, defined as 
\[
 \cS: a( u _k,\ldots , u _r)\mapsto a( u _{k+1},\ldots , u _{r+1}),\quad 
\cS:\alpha\mapsto\alpha, \qquad a( u _k,\ldots , u _r)\in \cF,\ \ \alpha\in\k.
\]

The field $\cF$   equipped with   the automorphism $\cS$ is a 
difference field. 

The reflection 
$\cT$ of the lattice $\Z$ defined by
\[
 \cT: a( u _k,\ldots , u _r)\mapsto a(u_{-k},\ldots ,u_{-r}),\quad 
\cT:\alpha\mapsto\alpha, \qquad a( u _k,\ldots , u _r)\in \cF,\ \ \alpha\in\k,
\]
is another automorphism of $\cF$. The 
composition $\cS\cT\cS\cT=\Id$ is the identity map. Thus the automorphisms
$\cS,\cT$ generate the infinite dihedral group $\D_\infty$ and the infinite cyclic subgroup 
generated by $\cS$ is normal in $\D_\infty$.
The automorphism $\cT$ defines a $\Z_2$ grading of the difference field $\cF$:
\begin{equation}\label{grT}
  \cF =\cF _0\oplus\cF _1,\qquad \cF _0\cdot\cF _0=\cF _0,\ \ 
\cF _0\cdot\cF _1=\cF _1,\ \ \cF _1\cdot\cF _1=\cF _0, 
\end{equation}
where $ \cF _k=\{a\in\cF \,|\, \cT(a)=(-1)^k a\}$. The ring $\cR\subset\cF$ inherits the same grading  $\cR=\cR_0\oplus\cR_1$.

A derivation  $\partial $ in the field $\cF$ is a $\k$--linear map $\partial\,:\, \cF\mapsto\cF$   satisfying   Leibniz's law  
$$\partial (a\cdot b)=\partial(a)\cdot b+a\cdot\partial(b),\qquad  a,b\in\cF.$$ A set of all derivations in $\cF$ is denoted $\DerF$.
Let $\partial,\partial'\in\DerF$ be two derivations in $\cF$ then their commutator  $[\partial,\partial']\defeq\partial\circ\partial'-\partial'\circ\partial\in\DerF$ is also a derivation. Any three derivations $\partial,\partial',\partial''\in\DerF$ satisfy the Jacobi identity 
$$[\partial,[\partial',\partial'']]+[\partial',[\partial'',\partial]]+[\partial'',[\partial,\partial' ]]=0,$$
and therefore $\DerF$ is a Lie algebra over $\k$. A formal sum
\begin{equation}\label{derX}
 X= \sum_{n\in\Z}f^{(n)} \frac{\partial}{\partial u_{n}},\qquad f^{(n)}\in\cF 
\end{equation}
is a derivation in $\cF $. Its action on   $ X:\cF \mapsto\cF $ is well defined, since any element   $a\in \cF $ depends on a finite subset of variables, and thus the sum $X(a)$ contains only a finite number of non-vanishing terms. 

Partial derivatives $\frac{\partial}{\partial u_i}\in\DerF $ are 
commuting derivations  satisfying the conditions
\begin{equation}\label{spart}
\cS \frac{\partial}{\partial u_{i}}= \frac{\partial}{\partial u_{i+1}} \cS, 
\qquad \cT \frac{\partial}{\partial u_{i}}= \frac{\partial}{\partial u_{-i}} 
\cT.
\end{equation}

A  derivation  $X\in\DerF$ is said to be {\em evolutionary} if it commutes with the shift operator $\cS$. For an evolutionary derivation it follows from the condition $X\circ\cS=\cS\circ X$ and (\ref{spart}) that all coefficients $f^{(n)}$ in (\ref{derX}) can be expressed $f^{(n)}=\cS^n (f )$ in terms of one element $f\in\cF $, which is called the {\em characteristic} of the evolutionary derivation. We will use notation
\begin{equation}
 \label{XF}
 X_f\defeq\sum_{i\in\Z}\cS^i(f) \frac{\partial}{\partial u_{i}}.
\end{equation}
for the evolutionary derivation corresponding to the characteristic  $f$. 

Evolutionary derivations form a Lie subalgebra of $\DerF$. Indeed, 
\[\begin{array}{l}
   \alpha X_f+\beta X_g=X_{\alpha f+\beta g},\qquad \alpha,\beta\in\k,\\
\phantom{}   [X_f,X_g]=X_{[f,g]},
  \end{array}
\]
where $[f,g]\in\cF $ denotes the Lie bracket
\begin{equation}\label{bracket}
 [f,g]=X_f(g)-X_g(f), 
\end{equation}
which is bi-linear, skew-symmetric and satisfying the Jacobi identity. Thus $\cF $ is a Lie algebra with Lie bracket defined by (\ref{bracket}). Evolutionary derivations with characteristics belonging to the ring $\cR$ form a  subalgebra of   $\DerR$.

The reflection $\cT$ acts naturally on evolutionary derivations
\[
 \cT:\, X_f\mapsto X_{\cT(f)}=\cT\cdot X_f\cdot \cT\, 
 \]

The polynomial ring $\cR$ is degree graded 
 \begin{equation}
 \label{gru}
 \cR=\bigoplus\limits_{n=0}^\infty\cR^{n}, 
\end{equation}
where $\cR^n$ is the set of all homogeneous polynomials of degree $n$.   The degree grading (\ref{gru}) and $\cT$--grading (\ref{grT}) are compatible and thus $\cR$ is a bi-graded ring
\begin{equation}\label{bigrR} \cR=\bigoplus\limits_{n=0}^\infty\cR^{n}_{0}\oplus\cR^{n}_{1}, \quad
 \cR^{n}_k=\cR^n\cap\cR_k\, .
 \end{equation}
 The homogeneous parts $\cR^n_k$ are additive groups and $\cR^n_k\cR^m_\ell=\cR^{n+m}_{(k+\ell)\mod2}$.
It follows from the definition of the Lie bracket (\ref{bracket}) that the Lie algebra $\cR$ is also degree graded and bi-graded
\begin{equation}\label{bigrLie}
  [\cR^n ,\cR^m]\subset \cR^{n+m-1}\, ,
  \quad  [\cR^n_p,\cR^m_q]\subset \cR^{n+m-1}_{(p+q)\mod2}. 
\end{equation}

Every element $f$ of the polynomial ring $\cR$ can be uniquely represented as a sum of homogeneous components $f=\sum_{k\geqslant 0} f^{(k)},\ f^{(k)}\in \cR^k$ (some of the components may be equal to zero). The selection of the $k$--th  homogeneous component $f^{(k)}$ is a projection $\pi_k\,:\, \cR \mapsto\cR^k$ defined by
\begin{equation}
 \label{pi}
 \pi_k(f)=\pi_k \left(\sum_{i\geqslant 0} f^{(i)}\right)=f^{(k)}.
\end{equation}
Obviously $\pi_k\pi_s=\delta_{s,k}\pi_k$. Similarly one can define projectors to bi-graded homogeneous components.

\subsection{Difference, pseudo-difference operators and formal series}
Let  $a(u_n,\ldots,u_m)$ be a non-constant element of $\cF$, and we assume that $n\leqslant m$ are the minimal and the maximal index respectively in the sequence of its arguments. Then $X_f(a)$ can be represented by a finite sum
\[
 X_f(a)=\sum_{i=n}^m   \frac{\partial a}{\partial u_{i}}\cS^i(f)=a_*[f],
\]
where 
\begin{equation}\label{astar}
a_*\defeq\sum_{i=n}^m\frac{\partial a}{\partial u_i}\cS^i
\end{equation}
is the {\em  Fr\'echet derivative} of $a(u_n,\ldots,u_m)$ and  $a_*[f]$ is the Fr\'echet derivative of $a$ in the direction $f$.
Using the Fr\'echet derivative  we can represent the Lie bracket (\ref{bracket}) in the form:
\begin{equation}\label{LieF}
 [f\, ,\, g]=g_*[f]-f_*[g].
\end{equation}
It is obvious that
\[
(\cS a)_*=\cS \cdot a_*=\sum_{i=n}^m \cS\left(\frac{\partial a}{\partial u_i}\right) \cS^{i+1} \quad \mbox{and} \quad (\cT a)_*=\cT \cdot a_*=\sum_{i=n}^m\cT\left(\frac{\partial a}{\partial u_i}\right)\cS^{-i}.
\]

\begin{Def}\label{deford}  A difference operator $B$ of order ${\rm ord}\,  
B:=(n,m)$ with 
coefficients in
$\cF $ is a finite sum of the form
\begin{equation}\label{operB}
B= b^{(m)} \cS^{m}+b^{(m-1)} \cS^{m-1}+\cdots +b^{(n)} \cS^{n},\quad b^{(m)} b^{(n)}\ne 0,\  
b^{(k)}\in\cF , \ \ n\le 
m,
\ n,m\in\mathbb{Z}.
\end{equation}
The total order of $B$ is defined as ${\rm Ord}B=m-n$. The total order of 
the zero 
operator is minus infinity $\Ord\,\, 0:=-\infty$ by definition.
\end{Def}

The Fr\'echet derivative (\ref{astar}) is an example of a difference operator 
of order $(n,m)$ and total order ${\rm Ord}\, a_* =m-n$. For a non-constant element  
$f\in \cF $ the order and total 
order are defined as ${\rm ord}\, f_*$ and   ${\rm Ord}\, f_*$ respectively.

Difference operators form a unital ring $\cFS$ of 
Laurent 
polynomials in $\cS$ with coefficients in $\cF $, 
where multiplication is defined by 
\begin{equation} \label{smult}
 a\cS^n \cdot b\cS^m=a\cS^n(b)\cS^{n+m}=a b_n \cS^{n+m}.
\end{equation}
This multiplication is associative, but non-commutative. 

From the above definition it follows that if  $A$ is a difference operator of 
order ${\rm ord}\, A=(p,q)$, then ${\rm 
ord}\, (\cS^n\cdot  A\cdot \cS^m) =(p+n+m,q+n+m)$ and the total order ${\rm Ord}\, (\cS^n\cdot  
A\cdot \cS^m )={\rm Ord}\, 
A=q-p$. For any $A,B\in\cFS $ we have $\Ord\, (AB)=\Ord\, A+\Ord\, B$. 

For a difference operator $B$ given by \eqref{operB}, we define its adjoint operator $B^\dagger$ as
$$
B^\dagger=  \cS^{-m} \cdot b^{(m)} +\cS^{-m+1} \cdot b^{(m-1)}  +\cdots +\cS^{-n} \cdot b^{(n)}.
$$
Note that ${\rm ord}\ B^\dagger=(-m, -n)$ and ${\rm Ord}\ B^\dagger={\rm Ord}\ B$.

Below we define pseudo-difference (or rational) operators and skew-fields of formal series.

\begin{Def}
 A rational (pseudo--difference) operator $M$ is defined as $M=AB^{-1}$ 
for some  $A,B\in\cFS$ and $B\ne 0$. The set of all 
rational operators is
 \[
  \fF=\{ AB^{-1}\,|\, A,B\in \cFS,\ B\ne0\}.
 \]
\end{Def}

\begin{Def} The sets  $\fF_L$  of Laurent and  $\fF_M$ of Maclaurin formal difference series  with coefficients in $\cF$   are 
\[
 \fF_L=\left\{\sum_{n=k}^\infty a^{(-n)}\cS^{-n}\ |\  a^{(m)}\in\cF,\ 
k\in\Z\right\},
\quad
 \fF_M=\left\{\sum_{n=k}^\infty a^{(n)}\cS^{n}\ |\  a^{(m)}\in\cF,\ 
k\in\Z\right\}.
\]
\end{Def}

These sets equipped with addition and composition rules for difference operators have structure of skew fields \cite{CMW18a}.
The skew field  $\fF$ is a minimal subfield of the skew fields $\fF_L$ and $\fF_M$,
containing $\cFS$. 
A rational operator can be expanded in the Maclaurin or Laurent series and thus represented by an element of $\fF_M$ or $\fF_L$ respectively. 
The skew fields $\fF_L$ and $\fF_M$ are isomorphic. The 
isomorphism is given by the reflection map $\cT$.

For $L\in \fF_L$, we denote its part with non-negative powers of $\cS$ by $L_+$, which is a difference operator.

A  rigorous  theory pseudo-difference operators with detail proofs and applications to integrable differential-difference equations can be found in  \cite{CMW18a}.

\section{Perturbative symmetry approach}\label{sec3}

With an evolutionary differential-difference equation
\begin{equation}\label{eqF}
 u_t=F(u _n,\ldots , u _m),\qquad F(u _n,\ldots , u _m)\in\cF 
\end{equation}
we associate the evolutionary derivation $X_F\in\DerF$ (the  vector field corresponding to the dynamical system (\ref{eqF})). Thus, there is a bijection between evolutionary derivations of $\cF$ and differential-difference equations.

Derivation $X_F$ enables us to differentiate any element  $a \in\cF$ in the direction of $F$. It follows from the chain rule that 
(\ref{eqF})
\begin{equation}
 \label{at}
 a_t=a_*[F]=X_F(a).
\end{equation}

In what follow we will study {\em generators of infinitesimal symmetries} of evolutionary equations and for brevity will call them symmetries. 
\begin{Def}\label{defsym}   We say that $G(u_p,\ldots,u_q)\in\cF$ is a {\em symmetry} of  
(\ref{eqF}) if $[G,F]=0.$
\end{Def}
If $G$ is a symmetry of  (\ref{eqF}), equation (\ref{eqF}) is invariant ${\rm mod}\,\epsilon^2$ under the near identity transformation $\hat{ u}=u+\epsilon G$. The evolution equation associated to this symmetry is
\begin{equation}\label{symG}
 u_\tau=G(u_p,\ldots,u_q),
\end{equation}
which is compatible with  (\ref{eqF}). These can be used as equivalent definitions for symmetry.

Let $G_1$ and $G_2$ be any two symmetries of equation (\ref{eqF}).  It follows immediately from the Definition \ref{defsym} and the Jacobi identity that the Lie bracket   $G_3 =[G_1,G_2]$ is also a symmetry of (\ref{eqF}). Thus, symmetries of equation (\ref{eqF}) form a  subalgebra of the Lie algebra $\DerF$ which will be denoted $\fA_F$
\[
 \fA_F\defeq\{G\in\cF\;|\; [F,G]=0\}.
\] 
The existence of an infinite dimensional commutative Lie algebra of symmetries is a characteristic property of integrable systems.
\begin{Def}\label{defint}
 A differential-difference equation $u_t=F$ (\ref{eqF}) is called integrable if its Lie  algebra of symmetries $\fA_F$ is infinite dimensional and contains symmetries of arbitrary high total order. 
\end{Def}

An infinite hierarchy of commuting symmetries of an integrable differential-difference  equation (\ref{eqF})  can be constructed using a recursion operator $\Lambda$ (if $\Lambda$ is known), namely, $ G_k=\Lambda( G_{k-1}  )$, where  $\Lambda\in\fF$ is a rational pseudo-difference operator with coefficients in $\cF$ satisfying the equation
\begin{equation}\label{lamt}
 X_F(\Lambda)-[F_*,\Lambda]=0.
\end{equation}

\begin{Ex}\label{ex1}
The Volterra  equation
\[
 u_t=F,\ \ F=u_1 u-u u_{-1},\ \ F_*=u\cS+u_1-u_{-1}-u\cS^{-1}
\]
has order $\ord(F)=(-1,1)$ and total order $\Ord (F)=2$. Its recursion operator 
\[
 \Lambda=AB^{-1},\qquad A=u(\cS +1)(u \cS-\cS^{-1} u),\  B=u 
(\cS-1)
\]
generates the infinite hierarchy of symmetries $G_k=\Lambda^k(F)$
\begin{eqnarray}\nonumber
 G_1&=&\Lambda(u_1 u-u u_{-1})=A(1-\cS)^{-1}(u_1-u_{-1})=
 A(u+u_{-1}+\gamma_1)\\ \label{G1} 
 &=&u  (u_1 u_2+  u_1^2+u  u_1-u  u_{-1}-  u_{-1}^2-u_{-2}   u_{-1})+\gamma_1 F,
\\ \label{G2} 
 G_2&=&\Lambda^2(u_1 u-u u_{-1})=u  (u_1 u_2 u_3+ u_1 u_2^2+2  u_1^2 u_2+u  u_1 u_2+  u_1^3+2 u  u_1^2+u^2  u_1\\
 &&-u ^2 u_{-1}-2 u u_{-1}^2- u_{-1}^3-u u_{-1} u_{-2}-2 u_{-1}^2 u_{-2}-u_{-1}u_{-2}^2   -u_{-1} u_{-2}  u_{-3})+\gamma_1 G_1+\gamma_2 F,\ldots\nonumber
\end{eqnarray}
where $\gamma_k\in\k$ are arbitrary constants ($\C={\rm Ker} (B)$).
\end{Ex}
To find a recursion operator for a given equation is a difficult problem. There is a regular way to solve it if the equation has a Lax representation. A discussion of this problem and many explicit examples of pseudo-difference recursion operators can be found in \cite{CMW18a, kmw13}. 

\subsection{Testing for integrability: symmetry approach}\label{sec31}
The goal of the symmetry approach is to find necessary conditions for the existence of an infinite dimensional Lie algebra of symmetries for a given equation. Originally, it was proposed and developed by A.B. Shabat and his team, and applied to study of partial differential equations \cite{mr86i:58070}-\cite{mr93b:58070}. The formalism enables them to develop an explicit test for integrability and solve a number of classification problems by producing complete lists of integrable partial differential equations.  Later on the symmetry approach has been extended to some non-evolutionary PDEs \cite{mnw07}, integro-differential equations \cite{mn1, mn2} and recently to partial-difference and differential-difference equations \cite{MWX, adler14}. The method is inspired by the observation that the existence of an infinite algebra of symmetries implies   existence of  a formal   solution $\Lambda$ of equation (\ref{lamt}) in terms of a formal series. The conditions of solvability of the equation can be explicitly formulated and provide us with necessary conditions of integrability. 

In the case of differential difference equations, the only result in this aspect so far is obtained by Adler \cite{adler14}. 

\begin{Thm}[Adler \cite{adler14}]\label{Thad}  
\noindent
If an evolutionary differential-difference equation
\begin{equation}\label{eqFF}
 u_t=F(u _n,\ldots , u _m),\qquad F(u _n,\ldots , u _m)\in\cF ,
\end{equation}
admits  symmetries
\begin{equation}\label{symGG}
 u_{\tau }=G(u_{p},\ldots,u_{q})
\end{equation}
with  $q$ arbitrarily large then equation (\ref{lamt}) admits a solution $\Lambda_L\in\fF_L$ of the
form
\[
 \Lambda_L=F_*+\sum_{k=0}^\infty a^{(-k)}\cS^{-k},\qquad\  a^{(-k)}\in\cF.
\]
If equation (\ref{eqFF})   admits  symmetries (\ref{symGG}) with 
$\ -p$  
arbitrarily large then equation (\ref{lamt}) admits a solution $\Lambda_M\in\fF_M$  of the
form
\[
 \Lambda_M=F_*+\sum_{k=0}^\infty b^{(k)}\cS^{ k},\qquad\ b^{(k)}\in\cF.
\]
\end{Thm}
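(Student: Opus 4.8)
The plan is to build the Laurent series solution $\Lambda_L$ coefficient by coefficient, extracting the necessary relations from the compatibility of the hierarchy of symmetries, and then to obtain $\Lambda_M$ by applying the reflection automorphism $\cT$. First I would fix a symmetry $G=G(u_p,\ldots,u_q)$ of very large positive order $q$ and consider its Fr\'echet derivative $G_*$, a difference operator of order $(p,q)$. The key structural observation is that $G_*$ satisfies the linearised symmetry equation $X_F(G_*)-[F_*,G_*]=0$; this follows by applying $(\cdot)_*$ to the identity $[F,G]=0$ together with $X_G(a)=a_*[G]$ and the chain rule $(X_F(a))_*=X_F(a_*)+a_*\circ F_*$. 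So $G_*$ is already a genuine (polynomial, not formal) solution of \eqref{lamt}. The idea is then to ``divide'' $G_*$ by $F_*$ to order $\Ord(F)$: since $F_*$ has invertible leading coefficient (the coefficient of $\cS^m$ is $\partial F/\partial u_m\neq 0$ in $\cF$), one can perform formal right division in $\fF_L$, writing $G_* = R\circ F_* + (\text{lower order in }\cS)$, and more importantly build a Laurent series $\Lambda_L = F_* + \sum_{k\ge 0} a^{(-k)}\cS^{-k}$ that commutes with $F_*$ up to the accuracy controlled by the order of the available symmetry.

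The core of the argument is a recursive solvability statement: given that the first $N$ coefficients $a^{(0)},\ldots,a^{(-N+1)}$ of a candidate $\Lambda_L$ have been determined so that $X_F(\Lambda_L)-[F_*,\Lambda_L]$ has order $\le m-1-N$ (here $m$ is the top order of $F$), the equation for the next coefficient $a^{(-N)}$ is obtained by extracting the coefficient of $\cS^{m-1-N}$ in $X_F(\Lambda_L)-[F_*,\Lambda_L]=0$; this is an \emph{inhomogeneous linear difference equation} for $a^{(-N)}$ whose homogeneous part has the schematic form $\tfrac{\partial F}{\partial u_m}\bigl(\cS^m - 1\bigr)(a^{(-N)})$ up to a multiplicative factor, hence is solvable in $\cF$ once one checks the obstruction vanishes. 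The existence of a symmetry $G$ with $q$ arbitrarily large is precisely what guarantees the obstruction vanishes at every level: projecting the identity $X_F(G_*)=[F_*,G_*]$ onto the appropriate powers of $\cS$ and comparing with the recursion for $\Lambda_L$ shows, by an induction on $k$, that $G_*$ and the truncation of $\Lambda_L$ agree on the relevant symbols, so the required right-hand side always lies in the image of the difference operator acting on $\cF$. Passing to the limit $k\to\infty$ (purely formal — no convergence is needed, each coefficient $a^{(-k)}\in\cF$ is fixed after finitely many steps) produces $\Lambda_L\in\fF_L$.

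For the Maclaurin case I would not repeat the construction but instead invoke the reflection $\cT$: it is an automorphism of $\cF$ with $\cT\frac{\partial}{\partial u_i}=\frac{\partial}{\partial u_{-i}}\cT$ and $(\cT a)_* = \cT\cdot a_*$, and it intertwines $\fF_L$ with $\fF_M$. If $F$ admits symmetries $G$ with $-p$ arbitrarily large, then $\tilde F \defeq \cT(F)$ admits symmetries $\tilde G\defeq\cT(G)$ with arbitrarily large positive order, so the first part gives $\tilde\Lambda_L\in\fF_L$ solving the symmetry equation for $\tilde F$; applying $\cT$ back and using that $\cT$ sends a solution of \eqref{lamt} for $\tilde F$ to a solution for $F$ yields $\Lambda_M\in\fF_M$ of the stated form. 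The main obstacle I anticipate is the vanishing-of-obstruction step at each level of the recursion: one must show carefully that the accumulated discrepancy $X_F(\Lambda_L)-[F_*,\Lambda_L]$, truncated appropriately, is not merely of low order but actually has its top coefficient in the image of the relevant difference operator on $\cF$ — and it is exactly here that the hypothesis ``symmetries with $q$ arbitrarily large'' is used, rather than the existence of a single high-order symmetry. Bookkeeping the orders (the shift of $\Ord$ under composition, $\Ord(AB)=\Ord A+\Ord B$, and the drop by one in the Lie bracket) cleanly is the other place where care is needed.
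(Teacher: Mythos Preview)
The paper does not supply its own proof of this theorem; it is stated with attribution to Adler \cite{adler14} and then used. So there is no in-paper proof to compare against directly, although the proof of the closely related Theorem~\ref{main} (in symbolic representation) implements essentially the strategy you outline.

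There is, however, a genuine error in your proposal. Your ``key structural observation'' that $G_*$ satisfies the homogeneous equation $X_F(G_*)-[F_*,G_*]=0$ is false. Applying the Fr\'echet derivative to $[F,G]=0$ and using your own chain rule $(X_F(a))_*=X_F(a_*)+a_*\circ F_*$ gives
\[
X_F(G_*)-[F_*,G_*]=X_G(F_*),
\]
not zero. (This is exactly equation~\eqref{thmeq1} in the paper's proof of Theorem~\ref{main}.) So $G_*$ is \emph{not} a solution of \eqref{lamt}, and your sentence ``So $G_*$ is already a genuine (polynomial, not formal) solution of \eqref{lamt}'' must be withdrawn.

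The fix is straightforward and in fact rescues the rest of your argument: the inhomogeneity $X_G(F_*)$ is a difference operator of order at most $(n,m)=\ord F_*$, independent of the order $q$ of the symmetry $G$. Hence the coefficients of $G_*$ at powers $\cS^{q},\cS^{q-1},\ldots,\cS^{m+1}$ \emph{do} satisfy the homogeneous recursion coming from \eqref{lamt}, and it is precisely these $q-m$ top coefficients that witness the vanishing of the obstructions $C_{m,k}\in\mbox{Im}\,\cM_{m,k}$ for $k=0,1,\ldots,q-m-1$. Since $q$ may be taken arbitrarily large, every obstruction eventually vanishes and each $a^{(-k)}\in\cF$ is determined. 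This is the mechanism the paper uses in the proof of Theorem~\ref{main}, and it is the correct version of what you sketched in your third paragraph. Your reflection argument via $\cT$ for the Maclaurin case is fine.
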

We call the solutions $\Lambda_L$ and $\Lambda_M$ of (\ref{lamt}) formal recursion operators (formal symmetries in \cite{adler14}) for the equation. 
Let us consider the implications from the first part of the above statement. We try to find a formal series $\Lambda_L$ satisfying equation (\ref{lamt}) for a function $F= F(u _n,\ldots , u _m)$. The substitution of $\Lambda_L, F,F_*$ in  (\ref{lamt}) results in a formal Laurent series, each coefficient of which should vanish. Collecting the coefficients at the powers  $\cS^{m-k},\ k=0,1,\ldots$ we obtain a triangular system of difference  equations to determine the coefficients $a^{(-k)}$ of $\Lambda_L$. This system is of the form:
\begin{equation}
 \label{eqak}
 \cS^{m-k}:\ \ \cM_{m,k} (a^{(-k)})=C_{m,k},\qquad 
  \qquad k=0,1,\ldots
\end{equation}
where 
\[
  \cM_{m,k}  \defeq\frac{\partial F}{\partial u_m}\cS^m - \cS^{-k}\left( \frac{\partial F}{\partial u_m}\right) 
 \]
and the terms $C_{m,k}$ depend on the function $F(u _n,\ldots , u _m) $,  its partial derivatives and   the coefficients $a^{(0)}, a^{(-1)},\ldots, a^{(1-k)}$ only. For example
\begin{eqnarray}
&&C_{m,0}=X_F\left(\frac{\partial F}{\partial u_m}\right)=\sum\limits_{i=n}^m\cS^i(F) \frac{\partial^2 F}{\partial u_{i}\partial u_{m}},\\
&&C_{m,1}=
 X_F\left(\frac{\partial F}{\partial u_{m-1}}+a^{(0)}\delta_{m,1}\right)+\frac{\partial F}{\partial u_{m-1}}\left(a^{(0)} -\cS^{m-1}(a^{(0)})\right) .
\end{eqnarray}
 According to  Theorem \ref{Thad}, for integrable equation (\ref{eqF}) a solution of the triangular system (\ref{eqak})  exists and $a^{(-k)}\in\cF$. Thus the necessary integrability conditions for equation (\ref{eqF}) are conditions of solvability in $\cF$  of the system   (\ref{eqak}) with respect to the coefficients $a^{(-k)}$.  The latter can be formulated as the conditions that $C_{m,k}$ belong to the image spaces of the linear difference operators $\cM_{m,k}$, i.e.,
 $ C_{m,k}\in \mbox{Im} \cM_{m,k},\ k=0,1,\ldots$.
 In the case $k=0$ it reduces to the problem of membership in the space $ \mbox{Im}\,  (\cS-1)$, which has a well known solution. Namely, if  $a\in\k\bigoplus\mbox{Im}\,  (\cS-1)$ then $\delta_u (a)=0$. Here $\delta _u$ is the variational derivative
 \[
  \delta_u(a)\defeq\sum_{n\in\Z}\cS^{-n}\left(\frac{\partial a}{\partial u_n}\right)=\frac{\partial}{\partial u_0}\sum_{n\in\Z}\cS^{n}(a).
 \]
There is   an algorithmic way to solve the membership problem for the spaces $\mbox{Im} \cM_{m,k}$ and if $C_{m,k}\in \mbox{Im} \cM_{m,k},\ k=0,1,\ldots$ to find the coefficients $a^{(-k)}\in\cF$ recursively.
\begin{Ex}
 Let us consider equation (\ref{eqF}) with $F=f(u) (u_1-u_{-1})$, where   $f(u)\in\cF$. In this case $m=1,\ \frac{\partial F}{\partial u_1}=f(u)$ and  
\[
 C_{1,0}=X_F(f(u) )=f'(u) f(u) (u_1-u_{-1}),\quad \cM_{1,0}=f(u) (\cS-1).
\]
The first necessary integrability condition $C_{1,0}\in \mbox{Im}\, \cM_{1,0}$ leads to
\[
 f'(u)  (u_1-u_{-1})\in\mbox{Im}\,  (\cS-1).
\]
Thus
\[
 \delta_u\left(f'(u)  (u_1-u_{-1})\right)=f''(u) u_1+\cS^{-1}(f'(u) )-f''(u) u_{-1}-\cS(f'(u))=0
\]
Taking the partial  derivative with respect to $u_1$ we get $f''(u)=\cS(f''(u))$, therefore $f''(u)$ is a constant (an element of $\k$) and thus
$f(u)=\alpha u^2+\beta u+\gamma$, where $\alpha,\beta,\gamma\in\k$ are arbitrary constants. The resulting equation
\[
 u_t=(\alpha u^2+\beta u+\gamma)(u_1-u_{-1})
\]
is known to be integrable. 
\end{Ex} 
In this example the first integrability condition enables us to give a complete description of all integrable differential-difference equations of the form $u_t=f(u)(u_1-u_{-1})$.

In Theorem \ref{Thad} the order of local formal recursion operators depends on the given equation and there is no simple way to find  expressions for the coefficients $a^{(-k)},b^{(k)}$ explicitly. In Section \ref{sec5} using symbolic representation we will proof the existence of the universal (canonical) quasi-local formal recursion operators  and present explicit recursive formulae  for its coefficients.

\subsection{Approximate symmetries and integrability}\label{secAprox}

Approximate symmetries proved to be successful in the case of partial differential and integro-differential equations \cite{mn1, mn2}. Our definition of approximate symmetries is based on the degree grading of the polynomial ring $\cR$ defined by (\ref{gru}). 

In order to define approximate symmetries let us consider an evolutionary differential-difference equations 
\begin{equation}
\label{eqR} 
u_t=F,\qquad F\in\cR. 
\end{equation}
 Then $F$ can be uniquely represented as a sum of its homogeneous components
\[
 F=F^{(0)}+\ldots +F^{(N)},
\]
where $F^{(k)}\in\cR^k$ or zero,  $N$ is the degree of the polynomial $F$. It follows from Definition \ref{defsym} that a polynomial $G=G^{(0)}+\ldots +G^{(M)}$ is a symmetry if the Lie bracket $[F,G]=0$ vanishes. Due to the degree grading structure of the Lie algebra (\ref{bigrLie}), the latter is equivalent to a sequence of homogeneous equations
\begin{equation}
 \label{FGp}
 \sum\limits_{k=0}^p [F^{(k)},G^{(p-k)}]=0,\qquad p=0,1,2,\ldots ,N+M\, .
\end{equation}
Here we assume that $F^{(k)}=0$ if $k\not\in\{0,\ldots N\}$ and $G^{(k)}=0$ if $k\not\in\{0,\ldots M\}$. If all $N+M+1$ equations are satisfied, then $G$ is a symmetry. If in the (\ref{FGp}) the first $s+2$ equations with $0\leqslant p\leqslant s+1$ are satisfied, then     
$G$ is called  {\it $s$--approximate symmetry} (or {\it approximate symmetry of degree $s$})  of the equation (we ignore the rest equations). It follows from the Jacobi identity that the Lie bracket of two $s$--approximate symmetries is also a $s$--approximate symmetry and therefore approximate symmetries of degree $s$ form a closed Lie algebra $\fA_F^s$
\[
 \fA_F^s\defeq\{ G\in\cR\;|\; \pi_k([F,G])=0\ \mbox{for}\ k=0,1,\ldots,s\}.
\]
In other words, the move from symmetries to approximate symmetries of degree $s$ is the transition from definitions and computations made in terms of the ring $\cR$ to the quotient ring $\cR\diagup \cI^{s+1}$, where 
\begin{equation}
\label{cI}
 \cI=\langle\bu\rangle\subset\cR
\end{equation}
 is the maximal ideal generated by the variables $\bu$. Namely, we say that $G\in\cR$ is 
$s$--approximate symmetry of equation (\ref{eqR}) if $[F,G]\in\cI^{s+1}$. We will also need difference operators and formal series  with coefficients in the quotient rings $\cR\diagup \cI^{s+1}$, but their definitions will be more natural in symbolic representation which will be introduced in the next section.

\begin{Rem}
There are no obstructions to replace the polynomial ring $\cR=\k [\bu]$ by the ring $ \bar{\cR}$ of formal series  
 \begin{equation}\label{rbar}                                                                                            
        \bar{\cR}=\k [[\bu]]=\{\sum_{k=0}^\infty f^{(k)}\,|\,    f^{(k)}\in\cR^k\}                                                                                          
 \end{equation}
 in variables $\bu$ and consider equations and their symmetries given by the Maclaurin expansion or just formal series. The above definition of $s$--approximate symmetries remains correct since only a finite number of equations (\ref{FGp}) 
 \[
  \pi_k ([F,G])=0, \qquad k=0,1,\ldots, s,\quad F,G\in\bar{\cR}
 \]
 has to be verified. 
\end{Rem}

\begin{Def}\label{defints}
 A differential-difference equation $u_t=F,\ F\in\bar{\cR}$ is called $s$--approximate integrable if its Lie  algebra $\fA_F^s$  is infinite dimensional and contains $s$--approximate symmetries of arbitrary high total order. The equation is called formally integrable, if it is $s$--approximate integrable for arbitrary large $s$.
\end{Def}
Compare Definitions  \ref{defint} and \ref{defints} we conclude that integrabe equations are formally integrable. The converse may not be true since we do not assume any convergence of formal series for the  equation and its formal symmetries. Conditions of $s$--approximate integrability can be explicitly written and verified. They are strong necessary conditions for integrability and proved to be suitable for classification of integrable equation.  
In next section we will formulate criteria of $s$--approximate integrability using symbolic representation, which enable us to test for integrability of a given equation as well as to progress in solution of classification problem for integrable differential-difference equations.

In what follows, we will restrict ourselves by equations $u_t=F$ and symmetries $G$ without a constant term ($\pi_0(F)=0,\ \pi_0(G)=0$). Let $\cR'$ denote a subspace of formal series without a constant term
\begin{equation}
 \label{cR_1}
\cR'\defeq (1-\pi_0)\bar{\cR}=\bigoplus\limits_{n=1}^\infty \cR^n\, .
\end{equation}
Having $N$ linearly independent symmetries containing a constant term we can always construct $N-1$ linear combinations without constant term. A constant term in $F$ can often be   removed by a simple invertible transformation of variables. Thus the restriction  to the linear space $\cR'$ (\ref{cR_1}) of formal series with no constant terms  does not really affect the generality, but considerably reduces unnecessary stipulations in each case.

\section{Symbolic representation}\label{sec4}
Symbolic representation is widely used in theory of pseudo-differential operators. It has been
first applied  in the area of integrable systems  by Gel'fand and Dickey \cite{mr58:22746} and
further developed in works of Beukers, Sanders \& Wang \cite{mr99i:35005,
mr99g:35058} and Mikhailov \& Novikov \cite{mn1, mn2}. In this section, we will extend the methods of symbolic representation 
to the ring of difference polynomials, difference operators and formal series. 
Similar to the differential case, the symbolic representation can be viewed as a simplified notation for a Fourier transform ($Z$-transform).

\subsection{Symbolic representation of difference polynomials}
To define the symbolic representation $\hat{\cR}=\oplus\hat{\cR}^n$ of the degree graded difference
ring (and Lie algebra)  $\bar{\cR}=\oplus\cR^n$ of formal series, we first define
an isomorphism of the $\k$--linear spaces $\varphi:\cR^n\mapsto \hat{\cR}^n$ and then extend it to
the difference   ring  and Lie algebra isomorphism equipping $\hat{\cR}$ with the
multiplication, derivation and Lie bracket. The
isomorphism of the $\k$--linear spaces $\varphi:\cR^n\mapsto \hat{\cR}^n$ is uniquely defined by its action
on monomials.

\begin{Def}\label{repmon}
The {\em symbolic form} of a difference monomial terms is defined as
\begin{equation*}
\varphi\,:\, \alpha \mapsto\alpha,\quad \varphi\,:\,\alpha u_{i_1}u_{i_2}\cdots u_{i_n}\in\cR^n \   \mapsto \  \hu^{n}\alpha
\langle\xi_1^{i_1} \xi_{2}^{i_2} \cdots \xi_{n}^{i_n}\rangle_{\Sg_n}\in\hring^n,\quad \alpha\in\k,
\end{equation*}
where $\langle\cdot \rangle_{\Sg_n}$ denotes the average over the permutation group
${\Sg_n}$    of $n$ elements $\xi_1,\ldots ,\xi_n$:
\[  \langle a(\xi_1, \cdots ,\xi_{n})\rangle_{\Sg_n}= \frac{1}{n!}
\sum_{\sigma\in \Sg_n}
a(\xi_{\sigma (1)}, \cdots , \xi_{\sigma(n)}).
\]
\end{Def}
Notations $\xi_1,\xi_2,\ldots$ are reserved for the variables in symbolic representation. We will omit ${\Sg_n}$ in the group average $\langle\cdot \rangle_{\Sg_n}$, assuming that the average is taken over the permutation group of $n$ variables $\xi_k$ where $n$ is shown as the degree of $\hat{u}^n$.

A few examples:
\begin{eqnarray*} &&u_k\raph \hu\xi_1^k,\ \ \ u^n\raph \hu^n,\ \ \ 
u_1u_2\raph\hu^2\langle\xi_1 \xi_2^2\rangle_{\Sg_2}=\frac{\hu^2}{2}(\xi_1 \xi_2^2+\xi_1^2 \xi_2),\\&&
\alpha uu_p^2+\beta u_q^3 \raph\hu^3 \left(\frac{\alpha}{3}(\xi_2^p \xi_3^p+\xi_1^p \xi_2^p+\xi_1^p \xi_3^p)+
\beta \xi_1^q \xi_2^q \xi_3^q\right),\ \ \alpha,\beta\in\k.\end{eqnarray*}

With this isomorphism a homogeneous polynomial $f\in\cR^n$ is in one-to-one correspondence with a term  $\hat{f}=\hu^n
a(\xi_1,\ldots,\xi_n)\in\hring^n $ in which the coefficient function $a(\xi_1,\ldots,\xi_n)$ is \(n\)-variable symmetric Laurent polynomials, i.e.,
$a(\xi_1,\ldots,\xi_n)\in\Xi_n,$ where $\Xi_n\defeq\k[\xi_1 ,\xi_1^{- 1},\ldots,\xi_n,\xi_n^{- 1}]^{\Sigma_n}$. 

The projector $\pi_k$ (\ref{pi}) selects the $k$-th homogeneous component of an element $f\in\bar{\cR},\ \pi_k(f)\in\cR^k$. Its symbolic representation $\hat{\pi}_k$ is induced by the condition $\hat{\pi}_k\varphi=\varphi\pi_k$. Let 
\[f=\sum_{n\geqslant0}f^{(n)}\in\bar{\cR},\quad  \pi_k(f)=f^{(k)}\in\cR^k,\quad \varphi(f^{(n)})=\hat{u}^n a_n(\xi_1,\ldots,\xi_n),\] 
then 
\[ \hat{\pi}_k\left(\sum_{n\geqslant0} \hat{u}^n a_n(\xi_1,\ldots,\xi_n)\right)=\hat{u}^k a_k(\xi_1,\ldots,\xi_n).\]

The action of automorphisms $\cS,\cT$ in symbolic representation is given by
\[ \cS (f)\quad \raph\quad \hu^n a(\xi_1,\ldots,\xi_n)(\xi_1\cdots \xi_n)\, ,\quad
\cT (f)\quad\raph \quad \hu^n a(1/\xi_1,\ldots,1/\xi_n)\, ,
\]
and thus $\varphi\,:\,\cS^k (f)\ \mapsto \ \hu^n a(\xi_1,\ldots,\xi_n)(\xi_1\cdots
\xi_n)^k$.

\begin{Def}
 Let $f\in\cR^n,\ \varphi(f)= \hu^n a(\xi_1,\ldots,\xi_n)$ and $g\in\cR^m,\ \varphi(g=
\hu^m b(\xi_1,\ldots,\xi_m)$, then
 $\varphi(f g)=\varphi(f)\star\varphi(g)$ where
\begin{equation}\label{starprod}
\varphi(f)\star\varphi(g)=\hu^{n+m}\langle a(\xi_1,\ldots,\xi_n)
b(\xi_{n+1},\ldots,\xi_{n+m})\rangle_{\Sg_{n+m}}.
\end{equation}
\end{Def}

If $f\in\cR^0=\k$, then   $ \varphi(fg)=f\varphi(g)$.
The representation of
difference monomials (Definition \ref{repmon}) can be deduced from $\varphi: u_k\mapsto
\hu \xi_1^k$ and this multiplication rule (\ref{starprod}).

The linear space $\hat{\cR}=\oplus\hat{\cR}^n$  equipped with the $\star$ multiplication is isomorphic to the graded ring $\bar{\cR}$.

\subsection{Difference operators in the symbolic representation}\label{doisr}
We assign the symbol $\eta$ corresponding to the shift operator $\T$ in the symbolic representation with the action
$$
\eta(\hu^pa(\xi_1,\ldots,\xi_p))=\hu^p \xi_1\xi_2\cdots\xi_p a(\xi_1,\ldots,\xi_p).
$$
and the composition rule (corresponding to $\cS\circ f=\cS(f) \cS$):
\[
\eta\circ \hu^n a(\xi_1,\ldots , \xi_n)=\hu^n(\prod_{j=1}^{n}\xi_j) a(\xi_1,\ldots ,
\xi_n)\eta \, ,
\]
and thus $\eta^k\circ \hu^n a(\xi_1,\ldots , \xi_n)=\hu^n(\prod_{j=1}^{n}\xi_j)^k a(\xi_1,\ldots ,
\xi_n)\eta^k$.

Now we can extend the symbolic representation to difference operators with coefficients in $\bar{\cR}$
\begin{equation}\label{phiA}
 \varphi\,:\, A=\sum\limits_{k=p}^q f_{(k)}\cS^k\ \mapsto  \hat{A}=\sum\limits_{k=p}^q \varphi (f_{(k)})\eta^k,\qquad f_{(k)}\in\bar{\cR}.
\end{equation}
Here the coefficients $f_{(k)}$ are  formal series 
$$f_{(k)}=\alpha_k+\sum_{n=1}^\infty f_{(k)}^{(n)},\qquad \alpha_k\in\k,\quad f_{(k)}^{(n)}\in\cR^n,$$
and thus
\[
 \varphi (f_{(k)})=\alpha_k+\sum_{n=1}^{\infty}\hat{u}^n A_{k,n}(\xi_1,\ldots,\xi_n),\quad \ \ A_{k,n}(\xi_1,\ldots,\xi_n)\in\Xi_n.
\]
After substitution of $ \varphi (f_{(k)})$ in (\ref{phiA}) and the change of  the order of summation, the 
 symbolic representation operator $\hat{A}=\varphi(A)$ can also be written in the form
\begin{equation}\label{Aa}
 \hat{A}=\sum\limits_{n=0}^{\infty} \hat{u}^n A_n(\xi_1,\ldots,\xi_n,\eta)
,\qquad 
 A_n(\xi_1,\ldots,\xi_n,\eta)=\sum\limits_{k=p}^q A_{k,n}(\xi_1,\ldots,\xi_n)\eta^k\, .
\end{equation}
In the non-commutative ring of difference operators in the symbolic representation there is a natural degree grading (in powers $\hu^n$). The projector $\hat{\pi}_s$ on the homogeneous component can be extended to operators (\ref{Aa})
\[
 \hat{\pi}_s (\hat{A})=\hat{u}^s A_s(\xi_1,\ldots,\xi_s,\eta).
\]

Symbolic representation enables us to solve immediately the problem to find the pre-image of the Fr\'echet derivative. For example, let $F=u(u_1-u_{-1})$, then 
$F_*=u\cS+(u_1-u_{-1})-u\cS^{-1}$ and
\[
 \varphi(F)=\hu^2 \frac{1}{2}(\xi_1+\xi_2-\xi_1^{-1}-\xi_2^{-1}),\quad \varphi(F_*)=\hu (\xi_1+\eta-\xi_1^{-1}-\eta^{-1})
\]
In general for $f\in\cR^n$ we have a symbol $\varphi(f)=\hu^n a (\xi_1,\ldots,\xi_n)$, where $a$ is a symmetric Laurent polynomial, and
\[  f_*\raph  \left(\hu^n a (\xi_1,\ldots,\xi_n)\right)_*=    n \hu^{n-1}a(\xi_1,\ldots ,\xi_{n-1},\eta )\, .\]
Similar to the differential polynomial case \cite{mn1},  the symbol of the Fr\'echet derivative is always
symmetric with respect to all permutations of its variables, including the variable
$\eta$. 
\begin{Pro}\label{profre} An operator (\ref{Aa}) is a symbolic representation of the Fr\'echet derivative of an element
of $\bar{\cR}$ if and only if each term $\hat{u}^n A_n(\xi_1,\ldots,\xi_n,\eta)$ is a Laurent polynomial in its variables and 
is invariant with respect to
all permutations of its variables, including the variable $\eta$.
\end{Pro}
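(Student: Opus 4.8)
The plan is to prove both directions of the equivalence by exploiting the multilinear structure of symbolic representation. First I would establish the forward implication: suppose $\hat{A} = \varphi(f_*)$ for some $f \in \bar{\cR}$. By linearity it suffices to work degree by degree, so assume $f \in \cR^n$ is homogeneous with $\varphi(f) = \hat{u}^n a(\xi_1, \ldots, \xi_n)$, where $a \in \Xi_n$ is a symmetric Laurent polynomial. The excerpt already records the formula
\[
  f_* \raph n\, \hat{u}^{n-1} a(\xi_1, \ldots, \xi_{n-1}, \eta),
\]
which one derives by applying $\varphi$ to $f_* = \sum_i (\partial f/\partial u_i)\, \cS^i$, using that $\partial/\partial u_i$ acting on a monomial produces a sum of terms each missing one variable $\xi_j^i$, and that the group average over $\Sg_n$ makes all these contributions equal, producing the factor $n$ and promoting the freed index to a power of $\eta$. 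Since $a$ is a symmetric Laurent polynomial in $n$ variables, the function $a(\xi_1, \ldots, \xi_{n-1}, \eta)$ is manifestly a Laurent polynomial that is symmetric under all permutations of the $n$ symbols $\xi_1, \ldots, \xi_{n-1}, \eta$. This gives the necessity of the stated condition for each homogeneous component.

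For the converse, suppose $\hat{A} = \sum_{n \geqslant 0} \hat{u}^n A_n(\xi_1, \ldots, \xi_n, \eta)$ where each $A_n$ is a Laurent polynomial symmetric in the full set of $n+1$ variables $\xi_1, \ldots, \xi_n, \eta$. I would construct a pre-image explicitly, again component by component. The degree-$0$ piece $A_0(\eta)$ is the symbol of $\cF \ni f^{(1)}$ with $f^{(1)}_* = A_0(\cS)$, and indeed any $f^{(1)} = \sum_k \alpha_k u_k$ with $\varphi(f^{(1)}) = \hat{u}\sum_k \alpha_k \xi_1^k$ has Fréchet derivative $\sum_k \alpha_k \cS^k$, symbolically $\sum_k \alpha_k \eta^k = A_0(\eta)$; matching coefficients determines $f^{(1)}$. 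For $n \geqslant 1$, set
\[
  a_n(\xi_1, \ldots, \xi_n) \defeq \tfrac{1}{n}\, A_{n-1}(\xi_1, \ldots, \xi_{n-1}, \xi_n),
\]
i.e. rename the distinguished variable $\eta$ of $A_{n-1}$ as $\xi_n$; by hypothesis $A_{n-1}$ is symmetric in all of $\xi_1, \ldots, \xi_{n-1}, \eta$, so $a_n$ is a symmetric Laurent polynomial in $\xi_1, \ldots, \xi_n$, hence $a_n \in \Xi_n$ and $\hat{u}^n a_n(\xi_1, \ldots, \xi_n)$ is a legitimate symbol of some $f^{(n)} \in \cR^n$ (using that $\varphi : \cR^n \to \hring^n$ is an isomorphism of $\k$-linear spaces). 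Applying the Fréchet-derivative formula back, $f^{(n)}_* \raph n\, \hat{u}^{n-1} a_n(\xi_1, \ldots, \xi_{n-1}, \eta) = \hat{u}^{n-1} A_{n-1}(\xi_1, \ldots, \xi_{n-1}, \eta)$, which is exactly the degree-$(n-1)$ part of $\hat{A}$. Summing over $n$, the element $f = \sum_{n \geqslant 1} f^{(n)} \in \bar{\cR}$ satisfies $\varphi(f_*) = \hat{A}$.

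The only genuine subtlety — and the step I would treat most carefully — is the symmetry bookkeeping in the forward direction: one must check that the group average over $\Sg_n$ in Definition \ref{repmon} is precisely what collapses the $n$ distinct terms $\partial(\cdot)/\partial u_i$ into a single symmetrized expression with the combinatorial factor $n$, and that the variable which becomes $\eta$ is genuinely on the same footing as the remaining $\xi_j$'s after this averaging. This is the discrete analogue of the differential-polynomial computation in \cite{mn1}, and the argument transfers verbatim once one notes that $\partial u_{i}/\partial u_{j} = \delta_{ij}$ behaves identically to the continuous case and that the shift $\cS^i$ contributes the monomial factor $\xi^i$ that, in the Fréchet derivative, is carried by whichever symbol is relabelled $\eta$. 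Everything else is routine: the reduction to homogeneous components uses only the compatibility $\hat{\pi}_s \varphi = \varphi \pi_s$ and that both the Fréchet-derivative construction and the symmetry condition respect the degree grading, and the converse construction is literally the inverse substitution $\eta \mapsto \xi_n$, which is well-defined precisely because of the assumed $\eta$-symmetry.
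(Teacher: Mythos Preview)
Your proof is correct and follows exactly the approach the paper takes: the paper does not give a formal proof of this proposition, but the surrounding remarks record the key formula $f_*\raph n\hu^{n-1}a(\xi_1,\ldots,\xi_{n-1},\eta)$ for the forward direction and the reconstruction rule $\hu^{n-1}\to\hu^n/n,\ \eta\to\xi_n$ for the converse, which is precisely what you have written out in detail.
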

Given the Fr\'echet  derivative $\hat{f_*}$ in the symbolic representation, we can immediately reconstruct $\varphi(f)+\alpha$ by replacing $\hu^{n-1}\rightarrow \hu^{n}/n$ and $\eta\rightarrow\xi_{n}$ in each term, where $\alpha\in\k$ is an arbitrary constant.

The composition rule for difference operators in symbolic representation (\ref{Aa}) follows from 
\begin{equation}
 \label{comprule}
 \hat{u}^n A_n(\xi_1,\ldots,\xi_n,\eta)\circ \hat{u}^m B_m(\xi_1,\ldots,\xi_m,\eta)=
 \hat{u}^{n+m}\langle A_n(\xi_1,\ldots,\xi_n,\eta\prod\limits_{i= 1}^{ m}\xi_{i+n})B_m(\xi_{n+1},\ldots,\xi_{n+m},\eta)\rangle
\end{equation}
and the linearity. In particular, the action of difference operators on elements of $\hat{\cR}$ follows from
\begin{equation}
 \label{actrule}
 \hat{u}^n A_n(\xi_1,\ldots,\xi_n,\eta)( \hat{u}^m b_m(\xi_1,\ldots,\xi_m))=
 \hat{u}^{n+m}\langle A_n(\xi_1,\ldots,\xi_n, \prod\limits_{i= 1}^{ m}\xi_{i+n})b_m(\xi_{n+1},\ldots,\xi_{n+m})\rangle.
\end{equation}


Now we are going to define the action of evolutionary derivations on elements of the ring and difference operators in symbolic representation. 
A differential-difference equation 
\[
u_t=f,\qquad f=\sum_{m\geqslant0} f^{(m)},\qquad f^{(m)}\in\cR^m
\]                        
defines the evolutionary derivation $\partial_t=X_f=\sum_{m\geqslant0} X_{f^{(m)}}$. Let $\varphi(f^{(m)})=\hu^m b_m(\xi_1,\ldots,\xi_m)$ and for an element $g \in\cR^n,\ 
\hat{g}=\varphi(g)=\hu^n a(\xi_1,\ldots,\xi_n)$.  Then
\[
 \hat{g}_t\defeq\varphi(X_f(g))=\sum\limits_{m=0}^\infty 
 \hu^{n+m-1} n \langle a(\xi_1,\ldots,\xi_{n-1},\prod_{i=0}^{m-1}\xi_{n+i})b_m(\xi_n,\ldots, \xi_{n+m-1})\rangle_{\Sigma_{n+m-1}}\, .
\]
Similarly for a difference operator $\hat{A}=\varphi(A)$ (\ref{Aa}) we obtain
\begin{equation}\label{hatAt}
\hat{A}_t\defeq \varphi(X_f(A))=
 \sum\limits_{m=0}^\infty \left(\sum\limits_{n=1}^\infty 
 \hu^{n+m-1} n \langle A_n(\xi_1,\ldots,\xi_{n-1},\prod_{i=0}^{m-1}\xi_{n+i},\eta)b_m(\xi_n,\ldots, \xi_{n+m-1})\rangle_{\Sigma_{n+m-1}}\right)\, .
\end{equation}

Finally we consider the symbolic representation of the  Lie bracket. 
  Let $f\in\cR^n,\ f\raph \hu^n a(\xi_1,\ldots,\xi_n)$ and $g\in\cR^m,\ g\raph
\hu^m b(\xi_1,\ldots,\xi_m)$, then in symbolic representation the Lie bracket $[f,g]$  (\ref{LieF}) takes form
\begin{eqnarray}
[f,\ g] & \raph  &\hu^{n+m-1}
\left\langle  m b(\xi_1,\ldots,\xi_{m-1},\xi_m\cdots \xi_{n+m-1})
a(\xi_{m},\ldots,\xi_{n+m-1})
\right.\nonumber \\
&&\left.  -n a(\xi_1,\ldots,\xi_{n-1},\xi_n\cdots
\xi_{n+m-1})b(\xi_{n},\ldots,\xi_{n+m-1}) \right\rangle _{\Sg_{n+m-1}}\label{slie}
\end{eqnarray}
In particular, if $f\in\cR^1,\ f\raph \hu\omega(\xi_1)$ and $g\in\cR^m,\ g\raph
\hu^m b(\xi_1,\ldots,\xi_m)$, then
\begin{equation}\label{lincomm}
[f,g] \raph -\hu^m\left(\omega(\xi_1\cdots \xi_m)-\omega(\xi_1)-\cdots -\omega(\xi_m)\right)\
 b(\xi_1,\ldots,\xi_m).
\end{equation}
\begin{Pro}\label{pro1}
 Let $u_t=F$ be a linear equation ($F\in\cR^1$), then its linear space of symmetries coincides with $\cR^1$. 
\end{Pro}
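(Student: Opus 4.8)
\emph{Proof plan.} The plan is to pass to the symbolic representation, where the condition $[F,G]=0$ becomes a divisibility-type identity in the integral domain of symmetric Laurent polynomials. First I would split a candidate symmetry into homogeneous components. Under the standing convention (no constant term) write $G=\sum_{m\geqslant 1}G^{(m)}$ with $G^{(m)}\in\cR^m$, and recall $F=F^{(1)}\in\cR^1$. By the bi-graded structure (\ref{bigrLie}), $[F,G^{(m)}]\in\cR^m$, so $[F,G^{(m)}]$ is precisely the degree-$m$ homogeneous component of $[F,G]$; hence $[F,G]=0$ is equivalent to $[F^{(1)},G^{(m)}]=0$ for every $m\geqslant 1$. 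One may assume $F\neq 0$, since for $F=0$ the claim is vacuous (and false: every element would be a symmetry), and for $F\neq 0$ its symbol is $F\raph\hu\,\omega(\xi_1)$ with $\omega$ a nonzero Laurent polynomial in $\Xi_1$.

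Next I would write $G^{(m)}\raph\hu^m b(\xi_1,\ldots,\xi_m)$ with $b\in\Xi_m$ and apply the linear-commutator formula (\ref{lincomm}), which gives
\[
[F,G^{(m)}]\ \raph\ -\hu^m\,\Omega_m(\xi_1,\ldots,\xi_m)\,b(\xi_1,\ldots,\xi_m),\qquad
\Omega_m:=\omega(\xi_1\cdots\xi_m)-\omega(\xi_1)-\cdots-\omega(\xi_m)\in\Xi_m .
\]
For $m=1$ we have $\Omega_1\equiv 0$, so every element of $\cR^1$ commutes with $F$; this already yields the inclusion $\cR^1\subseteq\fA_F$. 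For $m\geqslant 2$ the symmetry condition reads $\Omega_m\cdot b=0$ in $\Xi_m$, and $\Xi_m$ is an integral domain, being a subring of the Laurent polynomial ring $\k[\xi_1^{\pm1},\ldots,\xi_m^{\pm1}]$. Hence it suffices to show $\Omega_m\neq 0$ for $m\geqslant 2$, which then forces $b=0$, i.e. $G^{(m)}=0$, so that $G=G^{(1)}\in\cR^1$ and $\fA_F\subseteq\cR^1$; combined with the $m=1$ case this gives the asserted equality $\fA_F=\cR^1$.

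The only non-routine step — and the main obstacle — is the nonvanishing $\Omega_m\neq 0$ for $m\geqslant 2$, i.e. ruling out that the dispersion relation $\omega$ accidentally satisfies $\omega(\xi_1\cdots\xi_m)=\sum_{i=1}^m\omega(\xi_i)$. Here one uses that $\omega$ has finite nonempty support: pick $k_0$ with the coefficient of $\xi^{k_0}$ in $\omega$ nonzero. If $k_0\neq 0$, the monomial $(\xi_1\cdots\xi_m)^{k_0}$ genuinely involves all $m\geqslant 2$ variables, so it appears in $\omega(\xi_1\cdots\xi_m)$ with nonzero coefficient but cannot occur in $\omega(\xi_1)+\cdots+\omega(\xi_m)$, which is a sum of one-variable monomials; hence it survives in $\Omega_m$. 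If instead the support of $\omega$ is $\{0\}$, then $\omega=c_0$ is a nonzero constant and $\Omega_m=(1-m)c_0\neq 0$ because $\operatorname{char}\k=0$. This disposes of all cases and completes the argument; the degenerate possibility $F=0$ must be excluded explicitly, as noted above.
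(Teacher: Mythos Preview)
Your proof is correct and follows essentially the same route as the paper: decompose $G$ into homogeneous components, observe that $[F,G]=0$ forces $[F^{(1)},G^{(m)}]=0$ for each $m$, and then use the symbolic formula (\ref{lincomm}) to conclude $G^{(m)}=0$ for $m\geqslant 2$. The paper's proof is terser --- it simply asserts that (\ref{lincomm}) implies $b=0$ --- whereas you supply the step it leaves implicit, namely that the factor $\Omega_m=\omega(\xi_1\cdots\xi_m)-\sum_i\omega(\xi_i)$ is nonzero for $m\geqslant 2$ (via the support argument and the integral-domain property of $\Xi_m$), and you also note the degenerate case $F=0$ and the reverse inclusion $\cR^1\subseteq\fA_F$ explicitly.
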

\begin{proof}
 Let us assume that $G=G^{(1)}+G^{(m)}+G^{(m+1)}+\cdots$  for some $m>1$, where $ G^{(k)}=\pi_k(G)$ is a symmetry, i.e. $[F,G]=0$ and thus $[F,G^{(m)}]=0$. It follows from (\ref{lincomm})  with
$f=F,\ g=G^{(m)}$ that $b(\xi_1,\ldots,\xi_m)=0$, and therefore $G^{(m)}=0$.
\end{proof}

\begin{Pro}\label{prolin}
 Let $G\in\cR'$ be a symmetry of equation $u_t=F\in\cR'$  with a non-zero linear term $F^{(1)}=\pi_1(F)\ne 0$. Then $G$ also has a non-zero linear term $\pi_1(G)\ne 0$.
\end{Pro}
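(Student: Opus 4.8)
The plan is to argue by contradiction, isolating the lowest homogeneous component of the symmetry condition $[F,G]=0$ and then invoking the explicit symbolic formula (\ref{lincomm}) for the bracket of a linear characteristic with a homogeneous one. Suppose $\pi_1(G)=0$. Since $G$ is a non-zero symmetry in $\cR'$, we may write $G=\sum_{k\geqslant r}G^{(k)}$ with $G^{(k)}=\pi_k(G)$, $G^{(r)}\ne 0$ and $r\geqslant 2$; similarly write $F=\sum_{k\geqslant 1}F^{(k)}$ with $F^{(1)}\ne 0$.

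First I would pin down the bottom-degree part of $[F,G]$. By the graded inclusion $[\cR^n,\cR^m]\subset\cR^{n+m-1}$ from (\ref{bigrLie}), every summand $[F^{(k)},G^{(\ell)}]$ with $k\geqslant 1$ and $\ell\geqslant r$ lies in $\cR^{k+\ell-1}$ with $k+\ell-1\geqslant r$, and equality holds only for $(k,\ell)=(1,r)$. Hence $\pi_r([F,G])=[F^{(1)},G^{(r)}]$, so the symmetry condition $[F,G]=0$ forces $[F^{(1)},G^{(r)}]=0$.

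Next I would pass to symbolic representation. Writing $F^{(1)}\raph\hu\,\omega(\xi_1)$ with $0\ne\omega\in\Xi_1$ and $G^{(r)}\raph\hu^r b(\xi_1,\ldots,\xi_r)$ with $0\ne b\in\Xi_r$, formula (\ref{lincomm}) applied to $[F^{(1)},G^{(r)}]=0$ gives
\[
\bigl(\omega(\xi_1\cdots\xi_r)-\omega(\xi_1)-\cdots-\omega(\xi_r)\bigr)\,b(\xi_1,\ldots,\xi_r)=0 .
\]
Since the Laurent polynomial ring $\k[\xi_1^{\pm1},\ldots,\xi_r^{\pm1}]$ is an integral domain and $b\ne 0$, this forces the identity $\omega(\xi_1\cdots\xi_r)=\omega(\xi_1)+\cdots+\omega(\xi_r)$.

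Finally I would rule out this identity for $r\geqslant 2$. Writing $\omega(\xi)=\sum_j c_j\xi^j$ (a finite sum), for each $j\ne 0$ the monomial $(\xi_1\cdots\xi_r)^j$ occurs on the left with coefficient $c_j$ but does not occur on the right (whose monomials each involve a single variable $\xi_i$), so $c_j=0$; comparing constant terms then yields $c_0=r\,c_0$, whence $c_0=0$ since $r\geqslant 2$. Thus $\omega=0$, i.e.\ $F^{(1)}=0$, contradicting the hypothesis, and therefore $r=1$, that is $\pi_1(G)\ne 0$. The only step requiring genuine care is the degree bookkeeping that collapses $\pi_r([F,G])$ to the single term $[F^{(1)},G^{(r)}]$; after that the argument is just the integral-domain remark together with a one-line monomial comparison, essentially mirroring the proof of Proposition \ref{pro1} with "$[F,G]=0$" replaced by "$\pi_r([F,G])=0$", so I do not anticipate a real obstacle.
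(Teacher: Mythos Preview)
Your argument is correct and follows the same route as the paper's proof: isolate the lowest degree component $\pi_r([F,G])=[F^{(1)},G^{(r)}]=0$ and invoke formula (\ref{lincomm}). The paper simply asserts that (\ref{lincomm}) forces $b=0$, whereas you supply the missing justification by showing that $G^\omega(\xi_1,\ldots,\xi_r)=\omega(\xi_1\cdots\xi_r)-\sum_i\omega(\xi_i)$ cannot vanish identically for $r\geqslant 2$ unless $\omega=0$; your monomial comparison for this step is fine.
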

\begin{proof}
 Let us assume that $G=G^{(m)}+G^{(m+1)}+\cdots$ and $G^{(m)}\ne0$ for some $m>1$. Thus $\pi_m([F,G])=[F^{(1)},G^{(m)}]=0$. A contradiction follows from (\ref{lincomm}) with
$f=F^{(1)},\ g=G^{(m)}$, so  that $b(\xi_1,\ldots,\xi_m)=0$, and therefore $G^{(m)}=0$.
\end{proof}

\begin{Pro}
Any two symmetries $G_1,G_2\in\cR'$    of equation $u_t=F\in\cR'$  with a non-zero linear term $F^{(1)}=\pi_1(F)\ne 0$ commute $[G_1,G_2]=0$.
\end{Pro}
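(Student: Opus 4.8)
The plan is to observe that $[G_1,G_2]$ is again a symmetry of $u_t=F$ and then show that it has no linear part, which by Proposition \ref{prolin} forces it to vanish. First I would check the two structural facts needed. Since $[F,G_1]=[F,G_2]=0$, the Jacobi identity gives $[F,[G_1,G_2]]=-[G_1,[G_2,F]]-[G_2,[F,G_1]]=0$, so $[G_1,G_2]\in\fA_F$. Moreover $G_1,G_2\in\cR'=\bigoplus_{n\geqslant1}\cR^n$ and $[\cR^n,\cR^m]\subset\cR^{n+m-1}$ by \eqref{bigrLie}, with $n+m-1\geqslant1$ whenever $n,m\geqslant1$; hence $[G_1,G_2]\in\cR'$, so Proposition \ref{prolin} applies to it.

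Next I would compute the linear component $\pi_1([G_1,G_2])$. By the degree grading \eqref{bigrLie}, the only degree-$1$ contribution to $[G_1,G_2]$ comes from $[\pi_1(G_1),\pi_1(G_2)]$, the bracket of two homogeneous degree-one elements. Writing $\pi_1(G_1)\raph\hu\,\omega_1(\xi_1)$ and $\pi_1(G_2)\raph\hu\,\omega_2(\xi_1)$, the symbolic Lie bracket formula \eqref{slie} with $n=m=1$ gives $[\pi_1(G_1),\pi_1(G_2)]\raph\hu\,\langle\omega_2(\xi_1)\omega_1(\xi_1)-\omega_1(\xi_1)\omega_2(\xi_1)\rangle_{\Sg_1}=0$ (equivalently, one sees directly from \eqref{bracket} that $X_f(g)=X_g(f)$ when $f,g$ are linear, so the Lie bracket of two linear elements vanishes). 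Hence $\pi_1([G_1,G_2])=0$.

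Finally I would conclude by contradiction: if $[G_1,G_2]\ne0$, then as a nonzero element of $\cR'$ with vanishing linear component its lowest nonzero homogeneous part sits in some degree $m>1$, contradicting Proposition \ref{prolin}, which asserts that a nonzero symmetry in $\cR'$ of an equation with $F^{(1)}=\pi_1(F)\ne0$ has a nonzero linear term. Therefore $[G_1,G_2]=0$. I do not expect a genuine obstacle here: the argument rests entirely on the Jacobi identity and on the vanishing of the bracket of linear elements, both of which are immediate; the only point needing a moment's care is confirming $[G_1,G_2]\in\cR'$ so that Proposition \ref{prolin} is legitimately applicable.
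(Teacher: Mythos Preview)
Your proof is correct and follows exactly the same approach as the paper's: show that $H=[G_1,G_2]$ is a symmetry with vanishing linear part and invoke Proposition~\ref{prolin}. The paper's version is simply terser, stating these facts without the explicit verifications (Jacobi identity, grading argument for $H\in\cR'$, and the symbolic computation of $\pi_1(H)$) that you have supplied.
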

\begin{proof}
 The commutator of two symmetries $H=[G_1,G_2]$ is a symmetry and it does not have a linear part $\pi_1(H)=0$. It follows from Proposition \ref{prolin} that $H=0$.  
\end{proof}

\subsection{Symmetries in the symbolic representation.}
Symbolic representation enables us to reduce the problem of description of symmetries for a given differential-difference equation
\begin{equation}\label{Fsymb}
u_t=F,\qquad F \raph \hu\omega(\xi_1)+\hu^2 a_2(\xi_1,\xi_2)
+\hu^3a_3 (\xi_1,\xi_2,\xi_3)+\cdots, \quad \omega(\xi_1)\neq 0
\end{equation}
to the multivariate polynomial factorisation  problem.  
\begin{Thm}\label{theorsym}
 Let $G\in\cR'$ be a symmetry of equation (\ref{Fsymb}). Then the coefficients $A_k$ of its symbolic representation
\begin{equation}\label{Gsymb}
 G\raph \hu\Omega(\xi_1)+\hu^2 A_2(\xi_1,\xi_2)
 +\hu^3 A_3(\xi_1,\xi_2,\xi_3)+\cdots
\end{equation}
can be determined recursively:
\begin{eqnarray}
&& A_2(\xi_1,\xi_2)=\frac{ G^\Omega (\xi_1,\xi_2)} {G^\omega
(\xi_1,\xi_2)}a_2(\xi_1,\xi_2);\label{A2}\\
&&A_{m}(\xi_1,...,\xi_{m})=\frac{1}{G^\omega (\xi_1,...,\xi_{m})}\bigg(G^\Omega
(\xi_1,...,\xi_{m})a_{m}(\xi_1,...,\xi_{m})\nonumber\\
&&\quad\qquad + \sum_{j=2}^{m-1} j\left\langle  A_j(\xi_1,...,\xi_{j-1}, \xi_j \ldots \xi_m)
a_{m+1-j}(\xi_{j},...,\xi_{m})\right.\nonumber\\
&&\quad \qquad 
\left. - a_{j} (\xi_1,...,\xi_{j-1},\xi_j \ldots \xi_m)
 A_{m+1-j}(\xi_{j},...,\xi_{m}) \right\rangle
_{\varSigma_{m}}\bigg),\quad m=3, 4, \cdots \label{As}
\end{eqnarray}
where
\begin{equation}\label{Gw}
G^\varkappa (\xi_1,...,\xi_m)=\varkappa(\prod_{i=1}^{m}\xi_i)-\sum_{i=1}^{m}\varkappa(\xi_i), \quad \varkappa=\omega, \Omega.
\end{equation}
\end{Thm}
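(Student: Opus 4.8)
The plan is to substitute the symbolic expansions of $F$ and $G$ into the symmetry condition $[F,G]=0$, and extract the homogeneous component of degree $m$ in $\hat u$. Since $[F,G]$ lives in the degree-graded Lie algebra with $[\cR^n,\cR^k]\subset\cR^{n+k-1}$ (see \eqref{bigrLie}), the vanishing of $[F,G]$ is equivalent to the sequence of equations $\sum_{n+k=m+1}\pi_m([F^{(n)},G^{(k)}])=0$ for every $m\geqslant 1$. Writing $F^{(n)}\raph\hat u^n a_n$ with $a_1=\omega$, and $G^{(k)}\raph\hat u^k A_k$ with $A_1=\Omega$, I would apply the symbolic Lie-bracket formula \eqref{slie} termwise. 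The key point is to isolate, among all pairs $(n,k)$ with $n+k-1=m$, the two extreme contributions $n=1$ (i.e.\ $[F^{(1)},G^{(m)}]$) and $k=1$ (i.e.\ $[F^{(m)},G^{(1)}]$), which are the only terms involving $A_m$ or $a_m$ respectively.

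For the $n=1$ term, formula \eqref{lincomm} with $f=F^{(1)}$, $g=G^{(m)}$ gives
\[
[F^{(1)},G^{(m)}]\raph -\hat u^m\bigl(\omega(\xi_1\cdots\xi_m)-\omega(\xi_1)-\cdots-\omega(\xi_m)\bigr)A_m(\xi_1,\ldots,\xi_m)=-\hat u^m\, G^\omega(\xi_1,\ldots,\xi_m)\,A_m,
\]
using the notation \eqref{Gw}. Symmetrically, $[F^{(m)},G^{(1)}]\raph \hat u^m\,G^\Omega(\xi_1,\ldots,\xi_m)\,a_m$, the sign flipping because $G^{(1)}$ now plays the role of the linear factor and the bracket is skew. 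All remaining pairs have $2\leqslant n\leqslant m-1$ and $2\leqslant k\leqslant m-1$; for these I would apply the general bracket \eqref{slie} with $f=G^{(j)}\raph\hat u^j A_j$ and $g=F^{(m+1-j)}\raph\hat u^{m+1-j}a_{m+1-j}$, which after relabelling the summation variables (the first $j-1$ variables come from $A_j$, with its last slot carrying the product $\xi_j\cdots\xi_m$, and the rest from $a_{m+1-j}$) produces exactly the symmetrized sum $\sum_{j=2}^{m-1}j\langle A_j(\ldots)a_{m+1-j}(\ldots)-a_j(\ldots)A_{m+1-j}(\ldots)\rangle_{\Sigma_m}$ appearing in \eqref{As}. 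Collecting everything and solving for $A_m$ yields \eqref{As}; the case $m=2$ has no middle terms and reduces to \eqref{A2}. For $m=2$ one should also note that $G^\omega(\xi_1,\xi_2)=\omega(\xi_1\xi_2)-\omega(\xi_1)-\omega(\xi_2)\not\equiv 0$ (it is nonzero as a Laurent polynomial since $\omega\neq 0$), so the division is legitimate as an identity of rational functions; the analogous genericity of $G^\omega(\xi_1,\ldots,\xi_m)$ for higher $m$ is what makes the recursion well-defined.

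The main obstacle is purely bookkeeping: correctly matching the variable groupings and the combinatorial factors between the raw output of \eqref{slie} (which carries explicit factors $n$ and $m$ and nested products $\xi_i\cdots\xi_{n+m-1}$) and the compact symmetrized form in \eqref{As}. Concretely, one must check that when $[F^{(n)},G^{(k)}]$ and $[F^{(k)},G^{(n)}]$ are both summed over the same range and then symmetrized over $\Sigma_m$, the two halves of \eqref{slie} recombine into the single antisymmetric bracket $A_j a_{m+1-j}-a_j A_{m+1-j}$ with the stated coefficient $j$; the apparent asymmetry between the factor $j$ and the absence of a factor $m+1-j$ is resolved precisely by the outer symmetrization. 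I expect no analytic difficulty — everything is an identity in the polynomial (or rational-function) ring $\Xi_m$ — but the index relabelling must be done carefully, and one should remark that, since $G\in\cR'$ has a linear part by Proposition \ref{prolin}, the leading coefficient $\Omega(\xi_1)$ is not forced to vanish and serves as the free data determining all higher $A_m$ uniquely.
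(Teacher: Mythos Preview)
Your proposal is correct and follows exactly the approach of the paper's (very brief) proof: project the symbolic Lie bracket $[F,G]$ onto each homogeneous degree $m$, peel off the two extreme contributions $[F^{(1)},G^{(m)}]$ and $[F^{(m)},G^{(1)}]$ via \eqref{lincomm}, and collect the remaining brackets using \eqref{slie}. One small correction on the bookkeeping remark: the factor $j$ in front of the symmetrized sum arises not from the outer symmetrization but from reindexing the two halves of \eqref{slie} across the sum over $n+k=m+1$ (the first half of $[F^{(n)},G^{(m+1-n)}]$ carries a factor $m+1-n$, which becomes $j$ after the substitution $j=m+1-n$, while the second half already carries the factor $n=j$ after renaming).
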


\begin{proof}
 The proof of the Theorem is straightforward. Using (\ref{slie}) we can compute the Lie bracket between $F$ and
$G$. It is obvious that its linear part vanishes. When the Lie bracket vanishes up to $\hring^2$, we have
$$
\hu^2\bigg(\omega(\xi_1\xi_2)A_2(\xi_1,\xi_2)+a_2(\xi_1,\xi_2)(\Omega(\xi_1)+\Omega(\xi_2))-\Omega(\xi_1\xi_2)a_2(\xi_1,\xi_2)-A_2(\xi_1,\xi_2)(\omega(\xi_1)+\omega(\xi_2))\bigg)=0,
$$
which leads to the expression of $A_2(\xi_1,\xi_2)$ as (\ref{A2}). The Lie bracket vanishing up to
$\hring^{m}$ is equivalent to formula (\ref{As}).
\end{proof}

Theorem \ref{theorsym} states that a symmetry $G$ of equation (\ref{Fsymb}) is uniquely determined by its linear part $\Omega(\xi_1)$.  For a given Laurent polynomial $\Omega(\xi_1)$ all coefficients $A_{n}(\xi_1,...,\xi_{n})$ in the formal
series (\ref{Gsymb}) can be found recursively. 
It does not mean that any evolutionary equation has a symmetry. The terms in (\ref{Gsymb})
must represent symbols of difference polynomials, i.e. the coefficients $A_{m}(\xi_1, \ldots , \xi_{m})$ must be symmetric Laurent polynomials. In general, the coefficients $A_k$, presented in Theorem \ref{theorsym}  are symmetric rational functions (\ref{A2}), (\ref{As})  -- they have denominators $G^\omega$ (except $\omega(\xi_1)={\rm const}\neq 0$, see remark below). 
In order to define  symbols
of  difference polynomials, these denominators must cancel with appropriate factors in the
numerators. Factorisation properties of the Laurent polynomials $G^\Omega$ (\ref{Gw}) impose constraints on possible choices of  $\Omega(\xi_1)$. We call a linear term $\hu\Omega(\xi_1),\,\,\Omega(\xi_1)\in\Xi_1$ {\it admissible} for the equation with the symbolic representation (\ref{Fsymb}) if it is the linear term of a symmetry (\ref{Gsymb}). 
A linear combination    of symmetries is again a symmetry, thus the set of all admissible linear terms forms a vector space over $\k$. We denote this space by $V_F$. For integrable equations (Definition \ref{defint}) the algebra of its symmetries  and the vector space $V_F$ are infinite dimensional.

If the coefficients $A_k,\ k=2,3,\ldots,s$ (\ref{A2}), (\ref{As}) are Laurent polynomials, then $G$ (\ref{Gsymb}) is $s$--approximate symmetry of the equation. For equations with a polynomial function $F$  the sequence of the coefficients $A_m$ usually truncates, resulting in a polynomial symmetry (Example \ref{ex4}).   

\begin{Rem} \label{omegaconst}
If $\omega(\xi_1)$ is a nonzero constant $\omega(\xi_1)=\alpha\ne 0$, then $G^\omega(\xi_1, \cdots,\xi_m)=(1-m)\alpha$ is also a non-zero constant. In this case it follows from Theorem \ref{theorsym} that all coefficients  $A_{m}(\xi_1, \ldots , \xi_{m}),\ m=2,3,\ldots$ are symmetric Laurent polynomials and therefore equation 
\begin{equation}\label{equ}
 \hu _t=\alpha \hu+\hu^2 a_2(\xi_1,\xi_2)
+\hu^3a_3 (\xi_1,\xi_2,\xi_3)+\cdots,\qquad \alpha\in\C^*
\end{equation}
admits a formal symmetry (\ref{Gsymb}) for any choice of a Laurent polynomial $\Omega(\xi_1)$. Therefore equation (\ref{equ}) is formally integrable for any choice of the coefficients $a_k(\xi_1,\ldots,\xi_k)\in\Xi_k$. It is not surprising, since there exist formal and formally invertible change of variables
\[
 \hv= \hu+\hu^2 b_2(\xi_1,\xi_2) +\hu^3b_3 (\xi_1,\xi_2,\xi_3)+\cdots,\qquad b_k(\xi_1,\ldots,\xi_k)\in\Xi_k,
\]
such that in terms of the new variable $\hv$ equation (\ref{equ}) and its symmetry become linear 
\[
 \hv _t=\alpha \hv,\qquad \hv _\tau=\Omega(\xi_1)\hv,
\]
and thus integrable.
\end{Rem}

\begin{Ex}
Consider the Narita-Itoh-Bogoyavlensky equation \cite{bogo}
\begin{equation}\label{NIB}
u_{t}=f=u\sum_{i=1}^n(u_i-u_{-i}) 
\end{equation}
The right hand side of the equation does not contain a linear term, but it can be created by a shift $u_i\to u_{i}+1,\,\,i\in\Z$, and so we can consider
\begin{equation}\label{bogo}
u_{t}=\sum_{i=1}^n(u_i-u_{-i})+u \sum_{i=1}^n(u_i-u_{-i}). 
\end{equation}
The symbolic representation of the equation is
$$
\hu_t=\hu\omega(\xi_1)+\hu^2a_2(\xi_1,\xi_2),
$$
$$
\omega(\xi_1)=P(\xi_1)-P(\xi_1^{-1}),\quad P(\xi_1):=\xi_1^n+\xi_1^{n-1}+\cdots+1=\frac{\xi_1^{n+1}-1}{\xi_1-1},
$$
$$
a_2(\xi_1,\xi_2)=\frac{1}{2}(P(\xi_1)+P(\xi_2)-P(\xi_1^{-1})-P(\xi_2^{-1})).
$$
It is known  that equation (\ref{bogo}) possesses a symmetry with linear term of the form $\hu\Omega_k(\xi_1)$ \cite{wang12}, where
$$
\Omega_k(\xi_1)=(P(\xi_1))^k-(P(\xi_1^{-1}))^k,\quad k=2,3,\ldots
$$
that is, $\hu \Omega_k(\xi_1)\in V_f$ for $k\in \mathbb{N}$.
\end{Ex}
Taking an admissible linear term for a given equation, one can use Theorem \ref{theorsym} to determine the symmetry starting with this linear term. 
\begin{Ex}\label{ex4}
  Let us consider the Volterra equation
\begin{equation}
\label{NIB1}
u_{t}=u (u_{1}-u_{-1})\in \cR^2
\end{equation}
In order to introduce a linear term to this equation we make a change of variables $u_i\to u_i+1$:
\begin{equation*}
\label{vch}
u_{t}=u_{1}-u_{-1}+u(u_{1}-u_{-1})
\end{equation*}
In the symbolic representation the equation can be written as
$
\hu_{t}=\hu\omega(\xi_1)+\hu^2 a_2(\xi_1,\xi_2),
$
where
\[
\omega(\xi_1)=\xi_1-\frac{1}{\xi_1},\quad a_2(\xi_1,\xi_2)=\frac{1}{2} \left(\xi_1+\xi_2-\frac{1}{\xi_1}-\frac{1}{\xi_2}\right)
\]
We compute a symmetry starting with $\Omega(\xi_1)=\xi_1^2-\frac{1}{\xi_1^2}\in \hat{\cR}_1$. Its quadratic terms are
\begin{eqnarray*}
 A_2(\xi_1,\xi_2)=\frac{{G^\Omega (\xi_1,\xi_2)}} {G^\omega
(\xi_1,\xi_2)}a_2(\xi_1,\xi_2)
=\frac{1}{\xi_1 \xi_2}(1+\xi_1)(1+\xi_2)(\xi_1 \xi_2+1)a_2(\xi_1,\xi_2)
\end{eqnarray*}
For cubic terms we then obtain
\[
A_3(k_1,k_2,k_3)=\frac{1}{\xi_1^2 \xi_2^2 \xi_3^2}(\xi_1 \xi_2 \xi_3-1)
 \left\langle \xi_1^2\xi_2 + \xi_1 \xi_2 \xi_3^2 + \xi_1 \xi_2^2 \xi_3^2 + \xi_1 \xi_2^2 \xi_3^3\right\rangle
_{\varSigma_{3}}
\]
and all terms of degrees higher than $3$ vanish.
In standard variables this symmetry is
\begin{eqnarray*}
\nonumber
&&u_{\tau}=u_2-u_{-2}+u u_2+u u_1+u_1^2+u_1u_2-u_{-1}u_{-2}-uu_{-2}-u_{-1}^2-u u_{-1}\\ 
&&\quad +u^2u_1+u u_1^2-u u_{-1}^2-u_{-1}u ^2+u u_1u_2-u u_{-1}u_{-2}.
\end{eqnarray*}
After changing variable $u_i\to u_i-1$, we get the symmetry for the Volterra chain (compare with $G_1$ in  Example \ref{ex1})
\begin{eqnarray*}
&&u_{\tau}=u (u_1u_2+u_1^2+u u_1-u u_{-1}-u_{-1}^2-u_{-1}u_{-2})-4u (u_1-u_{-1}).
\end{eqnarray*}
\end{Ex}

Theorem \ref{theorsym} can be used as a  test for integrability if $\Omega(\xi_1)$ is  assumed.
In this case the integrability conditions are conditions of Laurent polynomiality of the coefficient functions $A_2(\xi_1,\xi_2), A_3(\xi_1,\xi_2,\xi_3),$ etc. A negative result (i.e. the coefficient functions fail to be Laurent polynomials) might be inconclusive if the assumption about $\Omega(\xi_1)$ was wrong. 

Most interesting integrable systems possess an infinite hierarchy of local conservation laws. Let us recall that a difference polynomial (or a formal series)  $\rho$ is a density of a local conservation law for the equation $u_t=F$ (\ref{Fsymb}) if $\rho_t \in(\cS-1) \cR'$. In order to exclude trivial densities, i.e. elements of $(\cS-1) \cR'$, the densities are defined on the quotient space (the $\k$-- linear space of functionals) $\rho\in\cR'\diagup (\cS-1) \cR'$. In the symbolic representation the condition that a term $\hu^k a_k(\xi_1,\ldots,\xi_k)$ is in the image of $(\cS-1)$  means that $a_k(\xi_1,\ldots,\xi_k)$ can be presented as a product of $\xi_1\xi_2\cdots \xi_k-1$ and a Laurent polynomial.

The existence of local conservation laws imposed constraints on the linear part of equations and their symmetries.

\begin{Pro}\label{pro2}
Suppose equation $u_t=F$ (\ref{Fsymb}) possesses a conserved density $\rho$ without a linear term, then 
$\omega(\xi)+\omega( \xi^{-1})=0$. 
\end{Pro}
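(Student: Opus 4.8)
The plan is to start from the conserved density condition $\rho_t \in (\cS-1)\cR'$, pass to symbolic representation, and extract the lowest-degree homogeneous component. Write $\rho \raph \hu^2 r_2(\xi_1,\xi_2) + \hu^3 r_3 + \cdots$ (there is no linear term by hypothesis, and a linear density is a total difference only trivially, so the first genuinely constrained piece is $r_2$). Applying the evolutionary derivation $X_F$ with $F \raph \hu\omega(\xi_1) + \hu^2 a_2(\xi_1,\xi_2) + \cdots$, and using formula \eqref{hatAt} specialised to elements of $\hat\cR$ (equivalently the Lie bracket formula \eqref{slie} or \eqref{lincomm}), the degree-$2$ component of $\rho_t$ comes entirely from the linear part of $F$ acting on $\hu^2 r_2$: by \eqref{lincomm} this is
\[
\hat{\pi}_2(\rho_t) = \hu^2\big(\omega(\xi_1) + \omega(\xi_2) - \omega(\xi_1\xi_2)\big)\, r_2(\xi_1,\xi_2)
\]
up to a harmless sign, where I have used that $\rho_t = X_F(\rho)$ and the quadratic piece of the chain-rule expansion only couples $\hu\omega$ with $\hu^2 r_2$.

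Next I would impose that this degree-$2$ term lies in the image of $\cS-1$, i.e. that it is divisible by $\xi_1\xi_2 - 1$ as a Laurent polynomial: there must exist a symmetric Laurent polynomial $h(\xi_1,\xi_2)$ with
\[
\big(\omega(\xi_1)+\omega(\xi_2)-\omega(\xi_1\xi_2)\big)\, r_2(\xi_1,\xi_2) = (\xi_1\xi_2 - 1)\, h(\xi_1,\xi_2).
\]
Now I restrict to the subvariety $\xi_1\xi_2 = 1$, i.e. set $\xi_2 = \xi_1^{-1}$. The right-hand side vanishes identically there, so the left-hand side must vanish: $\big(\omega(\xi_1)+\omega(\xi_1^{-1}) - \omega(1)\big)\, r_2(\xi_1,\xi_1^{-1}) = 0$ in $\k[\xi_1,\xi_1^{-1}]$, which is an integral domain. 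Since $r_2$ is a nonzero symmetric Laurent polynomial (it is the nonvanishing lowest component of a nontrivial density), its restriction $r_2(\xi_1,\xi_1^{-1})$ is not the zero polynomial — here I need the fact that a symmetric Laurent polynomial divisible by $\xi_1\xi_2-1$ corresponds to a trivial density, so a nontrivial $\rho$ has $r_2$ not divisible by $\xi_1\xi_2-1$, hence $r_2|_{\xi_1\xi_2=1}\not\equiv 0$. Therefore $\omega(\xi_1) + \omega(\xi_1^{-1}) = \omega(1)$; and since we work in $\cR'$ with no constant terms, $\omega(\xi_1)$ has no constant term (the degree-$1$ part of $F$ is genuinely linear), so $\omega(1) = 0$, giving $\omega(\xi) + \omega(\xi^{-1}) = 0$.

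The main obstacle, and the step I would be most careful about, is justifying that $r_2(\xi_1,\xi_1^{-1})\not\equiv 0$ — equivalently, ruling out the degenerate possibility that the density is trivial up to lower-degree pieces or that $\rho$ effectively starts in degree $\geq 3$. I would handle this by choosing $\rho$ to be a representative whose lowest nonvanishing homogeneous component $\hu^k r_k$ is not in $(\cS-1)\cR'$ (possible since $\rho$ is a nontrivial functional), and observing that if this lowest component had degree $k > 2$ the same argument applied to $\hu^k r_k$ still yields $\omega(\xi_1\cdots\xi_k) = \sum_i \omega(\xi_i)$ on $\xi_1\cdots\xi_k = 1$; specialising further (e.g. $\xi_3=\cdots=\xi_k=1$, $\xi_2=\xi_1^{-1}$) again forces $\omega(\xi)+\omega(\xi^{-1})=2(k-2)\omega(1)=0$ after using $\omega(1)=0$. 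A minor secondary point is confirming that no contribution to $\hat\pi_k(\rho_t)$ comes from $a_2$ acting on a lower-degree part of $\rho$, which is immediate because $\rho$ has no linear term, so the lowest-degree coupling is exactly $\hu\omega \star \hu^k r_k$. The rest is the routine divisibility-and-specialisation argument sketched above.
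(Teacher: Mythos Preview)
Your overall strategy matches the paper's: take the lowest nontrivial homogeneous component $\rho^{(k)}$ of the density, compute the action of the linear part of $F$ on it, impose divisibility by $\xi_1\cdots\xi_k-1$, and specialise on the hypersurface $\prod_i\xi_i=1$. However, there are two genuine errors in the execution.

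First, your formula for $\hat\pi_k(\rho_t)$ is wrong. You invoke \eqref{lincomm}, but that formula computes the Lie bracket $[F^{(1)},\rho^{(k)}]=X_{F^{(1)}}(\rho^{(k)})-X_{\rho^{(k)}}(F^{(1)})$, not $\rho_t=X_F(\rho)$ itself. The correct computation (using the expression for $\hat g_t$ in Section~\ref{doisr}, or directly $\rho_*[F^{(1)}]$) gives
\[
X_{F^{(1)}}(\rho^{(k)})\ \raph\ \hu^k\,r_k(\xi_1,\ldots,\xi_k)\big(\omega(\xi_1)+\cdots+\omega(\xi_k)\big),
\]
with no $-\omega(\xi_1\cdots\xi_k)$ term. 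This is exactly what the paper uses.

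Second, your deduction of $\omega(1)=0$ is incorrect. The claim that ``$\omega(\xi_1)$ has no constant term because we work in $\cR'$'' conflates degree in $u$ with degree in $\xi$: the constant-in-$\xi$ term of $\omega$ is the coefficient of $u_0=u$ in $F^{(1)}$, which is a perfectly legitimate linear term. Moreover, a Laurent polynomial with no constant term need not vanish at $\xi=1$ (take $\omega(\xi)=\xi$). With the correct formula the argument repairs itself cleanly: divisibility of $\big(\sum_i\omega(\xi_i)\big)r_k$ by $\xi_1\cdots\xi_k-1$, together with nondivisibility of $r_k$, forces $\sum_i\omega(\xi_i)=0$ whenever $\prod_i\xi_i=1$; setting all $\xi_i=1$ gives $k\,\omega(1)=0$, hence $\omega(1)=0$, and then $\xi_1=\xi,\ \xi_2=\xi^{-1},\ \xi_3=\cdots=\xi_k=1$ yields $\omega(\xi)+\omega(\xi^{-1})=0$. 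This is precisely the paper's proof; your expression ``$2(k-2)\omega(1)$'' in the general case is an arithmetic slip stemming from the same confusion.
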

\begin{proof}
Let the conserved density $\rho=\rho^{(k)}+\rho^{(k+1)}+\cdots$  have a nontrivial contribution  with  lowest degree    $k\geq 2$ and   $\rho^{(k)}\raph\hu^k a(\xi_1, ..., \xi_k)$, where the coefficient $a(\xi_1, ..., \xi_k)$ is not divisible by $\xi_1\xi_2\cdots \xi_k-1$ and a Laurent polynomial (otherwise $\rho^{(k)}\in \mbox{Im} (S-1)$). 
 Then    
 \[ X_{F^{(1)}}(\rho^{(k)})\in(\cS-1)\cR',\qquad X_{F^{(1)}}(\rho^{(k)})\ \raph\ \hu^k a(\xi_1, ..., \xi_k) \left(\omega(\xi_1) +\cdots \omega(\xi_k)\right),\] 
 and  should exists a Laurent polynomial $b\in\Xi_k$ such that  $\omega(\xi_1) +\cdots \omega(\xi_k)=(\xi_1\xi_2\cdots \xi_k-1)b(\xi_1, ..., \xi_k)$.  Setting $\xi_1=\ldots=\xi_k=1$ in the latter equation we find $k\omega(1)=0$, and thus $\omega(1)=0$. 
Choosing now $\xi_1=\xi,\xi_2=\xi^{-1}$ and $\xi_\ell=1$ for $\ell>2$ we have $\omega(\xi)+\omega(\xi^{-1})+(k-2)\omega(1)=0$, and therefore $\omega(\xi)+\omega(\xi^{-1})=0$.
\end{proof}
\begin{Rem}
 If an evolutionary equation (\ref{Fsymb}) possesses at least two conserved
densities in $\cR'$, then we can omit the condition ``{\em possesses a conserve density $\rho$ without a linear term}'', since there always exists a linear combination of the densities which does not have a linear term.
\end{Rem}

For a given equation (\ref{Fsymb}), to find the vector space $V_F$ of its admissible linear terms is a non-trivial problem. In the differential case this problem was completely solved for scalar polynomial homogeneous evolutionary partial differential equations \cite{mr99g:35058}, for systems of two-component equations \cite{gz}, as well as for odd order non-evolutionary equations \cite{nw07}. 
In next session we are going to formulate the necessary integrability conditions in the universal form independent on the structure of the vector space of its admissible linear terms.

\section{Integrability conditions for differential-difference equations}\label{sec5}

In the case of partial differential equations universal integrability conditions can be formulated in terms of a formal recursion operator \cite{mr86i:58070}-\cite{mr93b:58070}. Namely, the  existence of an infinite hierarchy of symmetries implies the existence of a first order formal pseudo-differential series with the coefficients in the corresponding differential field, satisfying the same equations as the recursion operator. Universality means that this fact does not depend on unknown a priory possible gaps in the hierarchy of symmetries. Later on this theory has been reformulated in the symbolic representation \cite{mn1} which enable us to tackle some integro-differential and non-evolutionary equations \cite{mn2, mnw07, MNW3}. Universality follows from the existence of fractional powers of formal pseudo-differential series. In the differential-difference case a fractional power represented by a difference formal series with coefficients in the difference field $\cF$ or ring $\bar{\cR}$ may not exist \cite{MWX}. To tackle the problem we introduce in this section a {\em quasi-local} extension of the difference ring $\bar{\cR}$. It will enable us to formulate universal integrability conditions for differential-difference equations in the symbolic representation.

\subsection{Quasi-local extension  of the difference ring $\hat{\cR}$. }\label{sec51}

In Section \ref{doisr} we have shown that  formal difference series with coefficients from $\bar{\cR}$ in the symbolic representation take the form
\begin{equation}\label{sAa}
 \hat{A}=\sum\limits_{n=0}^{\infty} \hat{u}^n A_n(\xi_1,\ldots,\xi_n,\eta)
,\qquad 
 A_n(\xi_1,\ldots,\xi_n,\eta)\in\Xi_n[\eta^{\pm 1}],
\end{equation}
where $\Xi_n[ \eta^{\pm 1}]=\k[\xi_1^{\pm 1},\ldots,\xi_n^{\pm 1}]^{S_n}[\eta^{\pm 1}]$ is a set of Laurent polynomials in the variable $\eta$, whose coefficients are symmetric Laurent polynomials in  variables $\xi_1,\ldots,\xi_n$.
The composition law for difference operators in the symbolic representation is given by (\ref{comprule}). 

Here we define a set of formal series 
\begin{equation}\label{aaa}
 \fA=\{\sum\limits_{n=0}^{\infty} \hat{u}^n  A_n(\xi_1,\ldots,\xi_n,\eta)\, |\,
  A_n(\xi_1,\ldots,\xi_n,\eta)\in\k(\xi_1,\ldots,\xi_n,\eta)^{S_n}\},
\end{equation}
where $ A_n(\xi_1,\ldots,\xi_n,\eta)$ are {\em rational functions} in its variables, symmetric with respect to permutations of the $\xi$--variables. The natural addition and the composition rule (\ref{comprule}) define on $\fA$ a structure of a non-commutative ring. Obviously, difference operators in the symbolic representation and formal difference series, such as (\ref{sAa}) belong to $\fA$, but in general elements of $\fA$ do not represent formal difference series or difference operators.

\begin{Def}\label{defloc} Let $A_n= A_n(\xi_1,\ldots,\xi_n,\eta)$ be a rational function of its variables.
The term $\hu^n A_n$  is called 
 {\em L--local (M--local)}, if the coefficients $ A_{n,k}(\xi_1,\ldots,\xi_n)$ of its power expansion in the variable $\eta$ at infinity 
 $
   A_{n}=\sum_{k\leqslant p_n}A_{n,k}(\xi_1,\ldots,\xi_n)\eta^k
$ (resp.  at zero $
   A_{n}=\sum_{k \geqslant q_n}A_{n,k}(\xi_1,\ldots,\xi_n)\eta^k
$)
are symmetric Laurent polynomials in the variables $\xi_1,\ldots,\xi_n$. \\
A formal series 
 \begin{equation}
  \label{A}
A=\sum\limits_{n=0}^\infty \hu^n A_n(\xi_1,\ldots,\xi_n,\eta)
 \end{equation} 
 is called {\em L--local (M--local)} if all its terms are L--local (resp. M--local). The formal series $A$ (\ref{A}) is called {\em local} if it is L and M local.
\end{Def}
\begin{Ex} The following terms are local, they both M--local and  L--local:
\begin{equation*}
 \hu A_1(\xi_1,\eta)=\hu\frac{\eta (\eta+\xi_1)}{\eta-1}, \quad 
\hu^2 A_2(\xi_1,\xi_2,\eta)=\hu^2\frac{\eta(\eta \xi_1 \xi_2^2+\eta \xi_1^2 \xi_2-\eta \xi_1^2-\eta \xi_2^2-\eta \xi_1-\eta \xi_2-2 \xi_1 \xi_2)}
 {2(\eta-1)(\eta \xi_1-1)(\eta \xi_2-1)}.
\end{equation*}
They correspond to the first two terms of the canonical formal recursion operator (we define it in Section \ref{sec52}) $\Lambda=\eta+\hu A_1(\xi_1,\eta)+\hu^2 A_2(\xi_1,\xi_2,\eta)+\cdots$  for the Volterra chain $u_t=u_1-u_{-1}+u(u_1-u_{-1})$.
 
The term $\hu^2 (\eta+\xi_1+\xi_2)^{-1}$ is L--local, but not M--local. 
\end{Ex}

Let $\hat{A}\in\fA$ be a formal series of the form
\begin{equation}
\label{Aformal}
A=\eta^N+\sum_{p\ge 1}\hu^pa_p(\xi_1,\ldots,\xi_p,\eta),
\end{equation}
where $N$ is a positive integer, and let us formally seek its $N$--th root
\begin{equation}
\label{Bformal}
B=\eta+\sum_{p\ge 1}\hu^pb_p(\xi_1,\ldots,\xi_p,\eta)
\end{equation}
such that
$B^N=A$.
Using the composition rule (\ref{comprule}) and taking projections on the homogeneous components 
 $\hat{\pi}_s(B^N-A)=0,\,\,s=0,1,2, \cdots$, we obtain:
\begin{eqnarray}
\label{b1}
\hat{\pi}_1:&& \eta^{N-1}\Theta_N(\xi_1)b_1(\xi_1,\eta)-a_1(\xi_1,\eta)=0,
\\ \label{bs}
\hat{\pi}_s:&& \eta^{N-1}\Theta_N(\xi_1\xi_2\cdots \xi_s
)b_s(\xi_1,\ldots,\xi_s,\eta)
+
f_s-a_s(\xi_1,\ldots,\xi_s,\eta)=0,
\end{eqnarray}
where function $\Theta_N$ is defined as
\begin{equation}\label{Theta}
 \Theta_N(\xi)\defeq 1+\xi+\cdots +\xi^{N-1}=(1-\xi^N)(1-\xi)^{-1},
\end{equation}
and the functions
$f_s$ depend only on $b_1,\ldots,b_{s-1}$ and $\eta$. For example 
\[
f_2=\sum_{n=0}^{N-2}\left\langle\ \sum_{m=0}^{N-n-2}\xi_1^n\xi_2^{n+m}b_1(\xi_1,\eta\xi_2)b_1(\xi_2,\eta)\right\rangle_{\Sigma_2}\eta^{N-2}.
\]
Relations (\ref{b1}), (\ref{bs}) form a triangular system of equations which enable us to find the rational functions $b_1(\xi_1,\eta),b_2(\xi_1,\xi_2,\eta),\ldots$ successively
\[
 b_1(\xi_1,\eta)=\frac{a_1(\xi_1,\eta)\eta^{1-N}}{\Theta_N(\xi_1)},\quad  b_2(\xi_1,\xi_2,\eta)=\frac{a_2(\xi_1,\xi_2,\eta)-f_2}{\eta^{N-1}\Theta_N(\xi_1\xi_2)},\ldots \ .
\]
For every $A$ of the form (\ref{Aformal}) we can find a unique formal series (\ref{Bformal}) satisfying the equation $B^N=A$. 

Let $A$ be a local series. Then from the relations (\ref{b1}), (\ref{bs}) it follows that the elements of $B$ are generally no longer local. If $\Theta_N(\xi_1)$ does not divide $a_1(\xi_1,\eta)$ then the coefficients of the power expansion of $ b_1(\xi_1,\eta)$ in $\eta$ as $\eta\to0$ and in $\eta^{-1}$ as $\eta\to\infty$ contain $\Theta_N(\xi_1)$ in their denominators and thus fail to be Laurent polynomials in $\xi_1$ and therefore they do not represent symbols of difference polynomials. Similarly, the coefficients of the expansions of $b_2(\xi_1,\xi_2,\eta)$ may contain $\Theta_N(\xi_1),\ \Theta_N(\xi_2)$ and $ \Theta_N(\xi_1\xi_2)$ in their denominators, etc. It motivates us to define  {\em $\Theta_N$-quasi-local} extension of the difference ring $\hat{\cR}$.

The action of the pseudo-difference operator  $\theta_N =\Theta_N^{-1}(\eta)$ on  $\hu^k a_k(\xi_1,\ldots,\xi_k)\in\hat{\cR}$ is given by
\[
 \theta_N (\eta)(\hu^k a_k(\xi_1,\ldots,\xi_k))=\hu^k \frac{a_k(\xi_1,\ldots,\xi_k)}{\Theta_N(\prod_{i=1}^k \xi_i)}.
\]
Let us define the sequence of the ring extensions
\[
  \check{\cR}_{(0)}=\hat{\cR},\quad  \check{\cR}_{(s+1)}=\overline{\check{\ring}_{(s)}\bigcup\theta_N (\check{\ring}_{(s)})},\qquad  s=0,1,2\ldots\ .
\]
Here the horizontal line denotes the ring closure as Abelian groups and with respect to the $\star$ product (\ref{starprod}). The index $s$ in $\check{\ring}_{(s)}$ shows the maximal ``nesting'' degree of $\theta_N $. Obviously 
\[
 \check{\cR}_{(0)}=\hat{\cR}\subset\check{\cR}_{(1)}\subset\check{\cR}_{(2)}\subset \cdots\ .
\]
The  quasi-local extension of the ring $\hat{\cR}$ is defined as the limit 
\begin{equation}\label{chekr}
 \check{\cR}=\lim\limits_{s\to\infty}\check{\cR}_{(s)}\, .
\end{equation}
Saying that $a\in\check{\cR}$ we mean that there exists such non-negative integer $s$ that $a\in\check{\cR}_{(s)}$. Elements of $\check{\cR}_{(s)},\ s\geqslant 1$ are called  quasi-local. The  quasi-local extension defined above depends on the choice of the integer $N$. 
\begin{Def} \label{defqlocal}
An element $\hu^pa(\xi_1,\ldots,\xi_p,\eta)\in\fA$ is called  $\ell$-quasi-local if the first $\ell$ terms $\hu^p a_i(\xi_1,\ldots,\xi_p)$ of its power expansion in $\eta^{-1}$ at $\eta\to\infty$
$$
\hu^p a(\xi_1,\ldots,\xi_p,\eta)=\hu^p a_1(\xi_1,\ldots,\xi_p)\eta^q+\hu^p a_2(\xi_1,\ldots,\xi_p)\eta^{q-1}+\cdots
$$
are quasi-local. 
 
A formal series $A=\phi(\eta)+\sum_{p\ge 1}\hu^pa_p(\xi_1,\ldots,\xi_p,\eta),\,\,\phi(\eta)\in\k[\eta,\eta^{-1}]$ is called $\ell$-quasi-local if all its terms are $\ell$-quasi-local. It is call quasi-local, if it is $\ell$-quasi-local for all $\ell$.
\end{Def}
Above we defined the Laurent  quasi-locality considering power series  expansions in $\eta^{-1}$.
Similarly we could define the Maclaurin  quasi-local formal series using the power expansion in $\eta$ (cf. Definition \ref{defloc}). In the next Section \ref{sec52}, as well as in applications to the problem of classification of integrable equations of the form (\ref{gen}) in Section \ref{sec61} it will be sufficient to use concepts of Laurent  quasi-locality only (which will be addressed as quasi-locality for shortness if $N$ is defined).  

The above computation of the $N$-th root of a formal series $A$ (\ref{Aformal}) can be recast in the following Proposition.
\begin{Pro}\label{Nroot}
Let 
$$
A=\eta^N+\sum_{p\ge 1}\hu^pa_p(\xi_1,\ldots,\xi_p,\eta)
$$
be a formal series whose first $k$ terms $\hu a_1(\xi_1,\eta),\ldots,\hu^k a_k(\xi_1,\ldots,\xi_k,\eta)$ are $\ell$-quasi-local.  Then there exists a unique formal series
$$B=\eta+\sum_{p\ge 1}\hu^pb_p(\xi_1,\ldots,\xi_p,\eta)$$ satisfying the equation $B^N=A$, and first $k$ terms of series $B$ are $\ell$-quasi-local.
\end{Pro}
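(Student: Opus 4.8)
The plan is to run the triangular system \eqref{b1}--\eqref{bs} that determines the coefficients $b_p$ of $B$, and to track the quasi-locality of the $b_p$ by induction on $p$, using the closure properties of the ring $\check{\cR}$ built from $\theta_N=\Theta_N^{-1}(\eta)$. The existence and uniqueness of $B$ as a formal series in $\fA$ is already established in the discussion preceding the statement, so the only new content is the propagation of the $\ell$-quasi-local property to the first $k$ terms of $B$.

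First I would record the base case. From \eqref{b1} we have $b_1(\xi_1,\eta)=a_1(\xi_1,\eta)\eta^{1-N}\Theta_N(\xi_1)^{-1}$. Since $\hu a_1$ is $\ell$-quasi-local by hypothesis, the first $\ell$ coefficients $a_{1,i}(\xi_1)$ in its expansion in $\eta^{-1}$ at $\eta\to\infty$ are quasi-local (i.e.\ lie in $\check{\cR}$). Multiplying by $\eta^{1-N}$ merely shifts the expansion, and dividing by $\Theta_N(\xi_1)$ is precisely the application of $\theta_N$ at level one; hence the first $\ell$ coefficients of $b_1$ lie in $\theta_N(\check{\cR})\subset\check{\cR}$, so $\hu b_1$ is $\ell$-quasi-local. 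For the inductive step, assume $\hu b_1,\dots,\hu^{s-1}b_{s-1}$ are $\ell$-quasi-local for some $2\le s\le k$. By construction $f_s$ is a finite combination, under the $\star$ product and the composition rule \eqref{comprule}, of the $b_1,\dots,b_{s-1}$ (with arguments of the form $\xi_j$ and $\eta$ times products of $\xi$'s) together with the polynomial data $\xi_i^m$ coming from the powers of $\eta^N$; the key point is that substituting $\eta\mapsto\eta\prod\xi_i$ into a quasi-local term and then expanding at $\eta\to\infty$ again produces quasi-local coefficients, because such a substitution sends $\check{\cR}$ into itself (it only rescales $\eta$ by a product of $\xi$'s, and $\check{\cR}$ is closed under multiplication by $\xi$'s). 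Since $\check{\cR}$ is a ring closed under the $\star$ product, the first $\ell$ coefficients of $f_s$ in its $\eta^{-1}$-expansion are quasi-local; as $\hu^s a_s$ is $\ell$-quasi-local by hypothesis, so is $\hu^s(a_s-f_s)$. Finally \eqref{bs} gives $b_s=(a_s-f_s)\eta^{1-N}\Theta_N(\xi_1\cdots\xi_s)^{-1}$, and dividing by $\Theta_N(\xi_1\cdots\xi_s)$ is exactly the action of $\theta_N$ on $\hu^s(\cdot)$; hence the first $\ell$ coefficients of $b_s$ lie in $\theta_N(\check{\cR})\subset\check{\cR}$, i.e.\ $\hu^s b_s$ is $\ell$-quasi-local. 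This closes the induction and proves the claim.

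The step I expect to be the main obstacle is the bookkeeping that $f_s$ does not involve any terms $b_j$ with $j\ge s$ and that the substitutions $\eta\mapsto\eta\prod_i\xi_i$ appearing in \eqref{comprule} do not destroy quasi-locality — in other words, that the ``nesting depth'' in $\theta_N$ needed for the first $k$ terms of $B$ stays finite and is controlled by the nesting depth of the first $k$ terms of $A$ plus a bounded increment at each level. The triangularity of \eqref{b1}--\eqref{bs} in the degree grading $\hu^p$ guarantees the first point, and the explicit form of $f_2$ given in the excerpt is the template for the general $f_s$; one should note that each application of the composition rule introduces at most one extra division by a $\Theta_N$ factor at the next stage, so if $A$ is $\ell$-quasi-local up to degree $k$ with all its terms in $\check{\cR}_{(m)}$, then $B$ is $\ell$-quasi-local up to degree $k$ with terms in $\check{\cR}_{(m+k)}$ or so, which is still inside $\check{\cR}$. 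Uniqueness of $B$ was already argued, so once quasi-locality of the coefficients is secured the proof is complete.
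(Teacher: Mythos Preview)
Your proposal is correct and follows exactly the approach the paper intends: the paper's ``proof'' is the computation preceding the Proposition, which sets up the triangular system \eqref{b1}--\eqref{bs}, observes that the only new denominators at stage $s$ are $\Theta_N(\xi_1\cdots\xi_s)$, and then defines the $\Theta_N$-quasi-local extension $\check{\cR}$ precisely so that these divisions stay inside it; the Proposition is then stated as a recap (``The above computation of the $N$-th root \ldots can be recast in the following Proposition''), with no separate formal argument. Your induction on $p$ simply spells out what the paper leaves implicit, and your check that the composition substitution $\eta\mapsto\eta\prod\xi_i$ and the $\star$-products in $f_s$ preserve $\ell$-quasi-locality (because only the first $\ell$ coefficients of each factor enter the first $\ell$ coefficients of the product, and multiplication by Laurent monomials in the $\xi$'s stays in $\check{\cR}$) is the right bookkeeping.
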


The above proposition admits an immediate generalisation:
\begin{Pro}
\label{proP}
Let $P(\eta)=\sum\limits_{k\leqslant N} c_k \eta^k,\ c_N\ne 0,\ N\ne 0$  and
$$
A=P(\eta)+\sum_{p\ge 1}\hu^pa_p(\xi_1,\ldots,\xi_p,\eta)
$$
be a formal series whose first $k$ terms $\hu a_1(\xi_1,\eta),\ldots,\hu^k a_k(\xi_1,\ldots,\xi_k,\eta)$ are $\ell$-quasi-local. Then there exists a unique  formal series
$$B=\eta+\sum_{p\ge 1}\hu^pb_p(\xi_1,\ldots,\xi_p,\eta)$$ satisfying the equation
$
P(B)=A,
$
and first $k$ terms $\hu b_1(\xi_1,\eta),\ldots,\hu^k b_k(\xi_1,\ldots,\xi_k,\eta)$ are $\ell$-quasi-local.
\end{Pro}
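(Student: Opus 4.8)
The plan is to mimic, with the Laurent polynomial $P(\eta)$ replacing $\eta^N$, the computation behind Proposition \ref{Nroot}. Write $B=\eta+C$ with $C=\sum_{p\ge1}\hu^p b_p(\xi_1,\ldots,\xi_p,\eta)$, substitute into $P(B)=A$, and impose $\hat\pi_s\big(P(B)-A\big)=0$ for $s=0,1,2,\ldots$ (the $s=0$ equation being trivially satisfied). For any integer $k\neq0$ the part of $B^k=(\eta+C)^k$ that is linear in $C$ is a sum $\sum_i\eta^{a_i}C\eta^{b_i}$ with $a_i+b_i=k-1$ (for $k<0$ one first expands $(\eta+C)^{-1}=\sum_{j\ge0}(-1)^j(\eta^{-1}C)^j\eta^{-1}$), and by the composition rule (\ref{comprule}) its $\hu^s$--component equals $\hu^s\,\Theta_k(\xi_1\cdots\xi_s)\,b_s(\xi_1,\ldots,\xi_s,\eta)\,\eta^{k-1}$, where $\Theta_k(\mu):=(1-\mu^k)(1-\mu)^{-1}$ extends (\ref{Theta}) to all $k\neq0$. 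Summing over $k$, the $\hat\pi_s$--equation takes the triangular form
\[
\hu^s\,\Psi_P(\xi_1\cdots\xi_s,\eta)\,b_s(\xi_1,\ldots,\xi_s,\eta)=\hu^s\big(a_s(\xi_1,\ldots,\xi_s,\eta)-g_s\big),
\]
with ``linearized symbol''
\[
\Psi_P(\mu,\eta):=\sum_{k\le N}c_k\Theta_k(\mu)\,\eta^{k-1}=\frac{P(\eta)-P(\mu\eta)}{\eta\,(1-\mu)},
\]
where, exactly as for $f_s$ in Proposition \ref{Nroot}, the term $g_s$ is a finite $\k$--linear combination of $\star$--products of $b_1,\ldots,b_{s-1}$ with powers of $\eta^{\pm1}$ inserted, and hence depends only on $b_1,\ldots,b_{s-1}$.

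For existence and uniqueness I would observe that, since $c_N\neq0$ and $N\neq0$, the leading term of $\Psi_P$ in $\eta$ is $c_N\Theta_N(\mu)\eta^{N-1}$ with $\Theta_N(\mu)\not\equiv0$; hence $\Psi_P(\xi_1\cdots\xi_s,\eta)$ is a non-zero, $\Sigma_s$--symmetric rational function, and dividing by it solves the triangular system recursively for unique $\Sigma_s$--symmetric rational $b_s$, i.e. for the unique $B\in\fA$ of the stated form with $P(B)=A$.

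For the quasi-locality claim I would run an induction on $s=1,\ldots,k$, the crucial point being that division by $\Psi_P$ cannot destroy $\ell$--quasi-locality of the leading $\eta$--terms. Writing $\Psi_P(\mu,\eta)=c_N\Theta_N(\mu)\eta^{N-1}\big(1+\widetilde R(\mu,\eta)\big)$ with $\widetilde R=\sum_{j\ge1}\frac{c_{N-j}\Theta_{N-j}(\mu)}{c_N\Theta_N(\mu)}\eta^{-j}$, and using that every $\Theta_k(\mu)$ is a Laurent polynomial in $\mu$, each coefficient of the expansion $\Psi_P(\mu,\eta)^{-1}=\frac{\eta^{1-N}}{c_N\Theta_N(\mu)}\sum_{i\ge0}(-\widetilde R)^i$ in powers of $\eta^{-1}$ is a Laurent polynomial in $\mu$ over a power of $\Theta_N(\mu)$. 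On a degree-$s$ symbol, multiplying the coefficient by such an expression (with $\mu=\xi_1\cdots\xi_s$) amounts to finitely many applications of $\theta_N$ together with multiplication by Laurent polynomials in $\xi_1\cdots\xi_s$, i.e. by integer powers of $\cS$; $\check{\cR}$ is closed under both (under $\theta_N$ by construction, and under $\cS$ because $\theta_N$ commutes with $\cS$ and $\hat{\cR}$ is $\cS$--stable). Hence, expanding $b_s=(a_s-g_s)\Psi_P(\xi_1\cdots\xi_s,\eta)^{-1}$ in $\eta^{-1}$, its first $\ell$ coefficients are quasi-local as soon as those of $a_s-g_s$ are. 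For $s\le k$ this holds for $a_s$ by hypothesis; for $g_s$ it holds by the induction hypothesis, since in the $\eta^{-1}$--expansion the $\star$--products and $\eta^{\pm1}$--shifts composing $g_s$ translate, term by term via (\ref{comprule}), into $\star$--products and $\cS$--shifts of the first $\ell$ coefficients of $b_1,\ldots,b_{s-1}$, operations under which $\check{\cR}$ is closed. The genuinely new ingredient, compared with Proposition \ref{Nroot}, is the identification of the symbol $\Psi_P(\mu,\eta)$ and the fact that its leading $\eta$--coefficient is $c_N\Theta_N(\mu)$ up to a non-zero constant; I expect the only delicate point to be verifying that inverting $\Psi_P$ produces no denominators other than powers of $\Theta_N(\xi_1\cdots\xi_s)$, with everything else being the bookkeeping of Proposition \ref{Nroot} repeated almost verbatim.
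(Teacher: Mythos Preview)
Your proposal is correct and follows essentially the same approach as the paper. The paper presents Proposition~\ref{proP} as an ``immediate generalisation'' of Proposition~\ref{Nroot} without a printed proof; the intended argument is precisely to project $\hat\pi_s(P(B)-A)=0$ and solve the resulting triangular system, with the coefficient of $b_s$ being $\sum_k c_k\Theta_k(\xi_1\cdots\xi_s)\eta^{k-1}$---exactly your $\Psi_P$. Your treatment is in fact more detailed than the paper's on the quasi-locality step, where you correctly isolate that the leading $\eta$--coefficient of $\Psi_P$ is $c_N\Theta_N(\mu)$, so that inverting $\Psi_P$ introduces only powers of $\Theta_N(\xi_1\cdots\xi_s)$ in the denominators, keeping everything inside the $\Theta_N$--quasi-local extension $\check{\cR}$.
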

Clearly, if formal series $A$ in Propositions \ref{Nroot} and \ref{proP} is local or quasi-local, then the series $B$ is quasi-local. 

\subsection{Canonical formal recursion operator}\label{sec52}

We now proceed to the construction of the universal form of necessary integrability conditions for finite order evolutionary differential-difference equations, i.e., equations whose right hand side is either a polynomial or a formal series in a finite number of variables $u_{-n},\ldots, u_n$:
\begin{equation}\label{eq50}
 u_t=f=f(u_{-n},\ldots, u_n),\qquad f=\sum_{i\ge 1}f^{(i)},\qquad \ \ f^{(1)}\ne 0,\ \ f^{(i)}\in\ring^i \cap \k[u_{-n},\ldots, u_n].\ 
\end{equation}
In symbolic representation equation (\ref{eq50}) takes the form 
\begin{equation}\label{eqsymb}
 \hu_t=\hat{f}=\hu\omega(\xi_1)+\sum_{i\ge 2}\hu^i a_i(\xi_1,\ldots,\xi_i),\qquad a_i\in\Xi_i.
\end{equation}
We shall assume that  $\omega(\xi_1)\ne const$ (see Remark \ref{omegaconst}).
\begin{Def} A quasi-local formal series
$$
\hat{\Lambda}=\phi(\eta)+\sum_{p\ge 1}\hu^p\phi_p(\xi_1,\ldots,\xi_p,\eta),\quad\phi(\eta)\in\k(\eta),
$$
is called a formal recursion operator for  equation (\ref{eqsymb})
if $\hat{\Lambda}$ satisfies the equation
\begin{equation}\label{forsym}
\hat{\Lambda}_t-\hat{f}_*\circ\hat{\Lambda}+\hat{\Lambda}\circ\hat{f}_*=0\,.
\end{equation}
\end{Def}

Equation (\ref{forsym}) is linear in $\hat{\Lambda}$ and for any rational function $\phi(\eta)$ it has a unique solution in terms of a formal series with rational coefficients. Namely, the following theorem holds:

\begin{Thm}\label{theorLam} Let
\begin{equation}\label{Lambrat}
\Lambda=\phi(\eta)+\sum_{p\ge 1}\hu^p\phi_p(\xi_1,\ldots,\xi_p,\eta),\quad \phi(\eta)\in\k(\eta),\ 
\phi_p(\xi_1,\ldots,\xi_p,\eta)\in\k(\xi_1,\ldots,\xi_p,\eta)
\end{equation}
be a formal series with rational coefficients satisfying equation (\ref{forsym}).
Then, for any choice of a rational function $\phi(\eta)$, the coefficients of the series can be recursively determined from the following system: 
\begin{eqnarray}
\label{phi1}
\phi_1(\xi_1,\eta)&=&2\frac{(\phi(\xi_1\eta)-\phi(\eta))}{G^{\omega}(\xi_1,\eta)}a_2(\xi_1,\eta),\\ 
\label{phip}
\phi_p(\xi_1,\ldots,\xi_p,\eta)&=&(p+1)\frac{(\phi(\xi_1\cdots\xi_p\eta)-\phi(\eta))a_{p+1}(\xi_1,\ldots,\xi_p,\eta)}{G^{\omega}(\xi_1,\ldots,\xi_p,\eta)}+\frac{1}{G^{\omega}(\xi_1,\ldots,\xi_p,\eta)}\times\\ \nonumber
&&\times\bigg[\sum_{s=1}^{p-1}s\langle\phi_s(\xi_1,\ldots,\xi_{s-1},\xi_s\cdots\xi_p,\eta)a_{p-s+1}(\xi_s,\ldots,\xi_p)\rangle+\\ \nonumber
&&+\sum_{s=2}^ps\big(\langle\phi_{p-s+1}(\xi_1,\ldots,\xi_{p-s+1},\xi_{p-s+2}\cdots\xi_p\eta)a_s(\xi_{p-s+2},\ldots,\xi_p,\eta)\rangle-\\ \nonumber
&&-\langle a_s(\xi_1,\ldots,\xi_{s-1},\xi_s\cdots\xi_p\eta)\phi_{p-s+1}(\xi_s,\ldots,\xi_p,\eta)\rangle\big)\bigg],\quad p=2,3\ldots.
\end{eqnarray}
\end{Thm}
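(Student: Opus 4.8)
The plan is to substitute the ansatz~(\ref{Lambrat}) into the defining relation~(\ref{forsym}), expand everything in homogeneous degrees (powers of $\hu$), and show that applying the projectors $\hat\pi_p$ produces exactly the triangular system~(\ref{phi1})--(\ref{phip}). First I would record the symbolic form of the Fr\'echet derivative of the right-hand side of~(\ref{eqsymb}): since the linear part of $f$ is $\hu\omega(\xi_1)$ and the degree-$i$ part is $\hu^i a_i$, the symbolic Fr\'echet derivative gives
\[
\hat f_*=\omega(\eta)+\sum_{p\ge1}(p+1)\,\hu^p\, a_{p+1}(\xi_1,\ldots,\xi_p,\eta),
\]
so the degree-zero component of $\hat f_*$ is the scalar Laurent polynomial $\omega(\eta)$ and its degree-$p$ component is $(p+1)\hu^p a_{p+1}$. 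Then I would evaluate $\hat\Lambda_t$ by~(\ref{hatAt}) (with $b_1=\omega$ and $b_m=a_m$ for $m\ge2$) and the two compositions $\hat f_*\circ\hat\Lambda$, $\hat\Lambda\circ\hat f_*$ by the rule~(\ref{comprule}), and collect powers of $\hu$.

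The component $\hat\pi_0$ of~(\ref{forsym}) vanishes identically: $\hat\Lambda_t$ has no degree-zero part, while $\hat f_*\circ\hat\Lambda$ and $\hat\Lambda\circ\hat f_*$ both contribute $\omega(\eta)\phi(\eta)$, which cancels; hence $\phi(\eta)$ is a free rational function. For $\hat\pi_1$ one collects: from $\hat\Lambda_t$ the term $\hu\,\phi_1(\xi_1,\eta)\omega(\xi_1)$ (case $n=1,\,m=1$); from $\hat f_*\circ\hat\Lambda$ the terms $\hu\bigl[\omega(\xi_1\eta)\phi_1(\xi_1,\eta)+2a_2(\xi_1,\eta)\phi(\eta)\bigr]$; from $\hat\Lambda\circ\hat f_*$ the terms $\hu\bigl[\phi_1(\xi_1,\eta)\omega(\eta)+2\phi(\xi_1\eta)a_2(\xi_1,\eta)\bigr]$. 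Summing and using $G^\omega(\xi_1,\eta)=\omega(\xi_1\eta)-\omega(\xi_1)-\omega(\eta)$ from~(\ref{Gw}) gives $-\,\phi_1\,G^\omega(\xi_1,\eta)+2a_2\bigl(\phi(\xi_1\eta)-\phi(\eta)\bigr)=0$, i.e.\ exactly~(\ref{phi1}). The division by $G^\omega$ is legitimate because $\omega$ is non-constant: writing $\omega(\xi)=\sum_k c_k\xi^k$ with some $c_{k_0}\ne0$, $k_0\ne0$, the monomial $c_{k_0}(\xi_1\eta)^{k_0}$ appears in $\omega(\xi_1\eta)$ and cannot be cancelled by the one-variable terms $\omega(\xi_1),\,\omega(\eta)$, so $G^\omega(\xi_1,\eta)\not\equiv0$.

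For general $p$ I would single out, in $\hat\pi_p\bigl(\hat\Lambda_t-\hat f_*\circ\hat\Lambda+\hat\Lambda\circ\hat f_*\bigr)$, the terms linear in the unknown $\phi_p$. There are precisely three: the case $n=p,\,m=1$ in $\hat\Lambda_t$ (pairing $\phi_p$ with the linear part $\omega$ of $f$); the degree-zero part of $\hat f_*$ times the degree-$p$ part of $\hat\Lambda$ inside $\hat f_*\circ\hat\Lambda$; and the degree-$p$ part of $\hat\Lambda$ times the degree-zero part of $\hat f_*$ inside $\hat\Lambda\circ\hat f_*$. Using~(\ref{comprule}), the symmetrisation $\langle\,\cdot\,\rangle_{\Sg_p}$, and the symmetry of the degree-$p$ symbol in $\xi_1,\ldots,\xi_p$, these three combine into
\[
\hu^p\bigl(\omega(\xi_1)+\cdots+\omega(\xi_p)-\omega(\xi_1\cdots\xi_p\eta)+\omega(\eta)\bigr)\phi_p=-\,\hu^p\,G^\omega(\xi_1,\ldots,\xi_p,\eta)\,\phi_p,
\]
and $G^\omega(\xi_1,\ldots,\xi_p,\eta)\not\equiv0$ by the same monomial argument. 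Every remaining term of $\hat\pi_p$ involves only the known data ($\omega$, the $a_i$, and $\phi$) or a coefficient $\phi_s$ with $1\le s\le p-1$: in $\hat\Lambda_t$ a factor $\phi_s$ forces $m=p-s+1\ge2$, and in either composition a factor $\phi_s$, $s\ge1$, must meet a degree-$\ge1$ part of $\hat f_*$, so $s\le p-1$. Reading off these contributions (the $a_{p+1}$-with-$\phi$ compositions yield the first line of~(\ref{phip}), the $\hat\Lambda_t$ terms yield $\sum_{s=1}^{p-1}$, the mixed compositions yield $\sum_{s=2}^{p}$) and dividing by $G^\omega$ reproduces~(\ref{phip}). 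Since the system is triangular, induction on $p$ gives existence and uniqueness of the $\phi_p$ as symmetric rational functions once $\phi$ is fixed.

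The only genuinely delicate step is the collapse of the $\phi_p$-linear terms to $-\hu^p G^\omega(\xi_1,\ldots,\xi_p,\eta)\phi_p$; it requires tracking the index shifts in the composition rule~(\ref{comprule}) and using that $\hat f_*$, being a Fr\'echet derivative, has a degree-zero part depending on $\eta$ alone and contributes symmetrically in all the $\xi$'s once the average is taken. Everything else is routine bookkeeping, and the argument is structurally the same as the recursion for symmetries in Theorem~\ref{theorsym}, with the extra variable $\eta$ and the pair $(\phi,\phi_p)$ playing the roles of $(\Omega,A_p)$ there.
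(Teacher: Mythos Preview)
Your proposal is correct and follows exactly the approach of the paper: project equation~(\ref{forsym}) onto each homogeneous component $\hat\pi_p$ and observe that the resulting system is triangular in the $\phi_p$. The paper's own proof is in fact just this one observation stated in a single sentence; you have additionally carried out the bookkeeping that verifies the explicit formulae~(\ref{phi1})--(\ref{phip}) and checked that $G^\omega\not\equiv0$ under the standing assumption $\omega\ne\text{const}$, which the paper leaves implicit.
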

\begin{proof}
The proof follows immediately from the observation that equations  
$$\hat{\pi}_p(D_t(\Lambda)-\hat{f}_*\circ\Lambda+\Lambda\circ\hat{f}_*)=0,\,\,p=1,2,\ldots$$ 
can be solved recursively with respect to $\phi_p(\xi_1,\ldots,\xi_p,\eta),\,\,p=1,2,\ldots$. 
\end{proof}

Theorem \ref{theorLam} does not mean that any equation (\ref{eqsymb}) possesses a formal recursion operator, since it does not guarantee that the formal series obtained is quasi-local. Below we are going to show that for  integrable equations, i.e., equations possessing an infinite hierarchy of symmetries, the obtained series must be quasi-local. Therefore conditions of quasi-locality of the terms $\hu^p\phi_p(\xi_1,\ldots,\xi_p,\eta), p=1,2,\ldots $ obtained in Theorem \ref{theorLam} are necessary  integrability conditions for equation (\ref{eqsymb}).

Let $\hat{\Lambda}$ be a formal recursion operator, then powers of $\hat{\Lambda}$ and  linear combinations of powers with constant coefficients are also formal recursion operators. Indeed, they satisfy equation (\ref{forsym}) and are quasi-local. A constant is a (trivial) recursion operator. Moreover, for any formal recursion operator $\hat{\Lambda}$ with non-constant $\phi(\eta)$ there exist a unique quasi-local formal series  
\begin{equation}\label{cansym}
\Lambda=\eta+\sum_{p\ge 1}\hu^p\phi_p(\xi_1,\ldots,\xi_p,\eta)
\end{equation}
satisfying the equation $\hat{\Lambda}=\phi(\Lambda)$ (Proposition \ref{proP}). It is easy to see that $\Lambda$ also satisfies equation (\ref{forsym}). A formal recursion operator with $\phi(\eta)=\eta$ we say it is  {\it canonical}.

The following Theorem shows that the existence of a formal recursion operator follows from integrability of equation (\ref{eqsymb}).

\begin{Thm}\label{main} Let the algebra of symmetries of equation (\ref{eqsymb}) be infinite dimensional. Then a formal series (\ref{Lambrat}) satisfying equation (\ref{forsym}) is quasi-local. 
\end{Thm}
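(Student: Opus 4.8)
The plan is to exploit the fact that an infinite-dimensional symmetry algebra produces symmetries $G$ of arbitrarily high total order whose linear parts $\Omega(\xi_1)$ therefore span an infinite-dimensional subspace $V_F\subset\Xi_1$, and to show that this abundance of symmetries forces the rational coefficients $\phi_p$ built in Theorem~\ref{theorLam} to be quasi-local. The key observation linking symmetries and the formal recursion operator is that if $G\raph\hu\Omega(\xi_1)+\hu^2A_2(\xi_1,\xi_2)+\cdots$ is a symmetry of \eqref{eqsymb}, then its Fr\'echet derivative $\hat G_*$ satisfies the same linear equation \eqref{forsym} that defines $\hat\Lambda$, only with leading term $\phi(\eta)$ replaced by $\Omega(\eta)$. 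Thus each admissible linear term $\Omega\in V_F$ produces, via the recursion \eqref{phi1}--\eqref{phip} with $\phi$ replaced by $\Omega$, a \emph{local} formal series $\hat G_*$ (local because $G$ is a genuine difference polynomial/series, so by Proposition~\ref{profre} its Fr\'echet derivative has Laurent-polynomial coefficients).

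First I would set up the comparison of the two recursions. Writing $\Lambda^{\{\phi\}}$ for the solution of \eqref{forsym} with leading symbol $\phi(\eta)$ and noting linearity of \eqref{forsym} in $\Lambda$, the coefficient $\phi_p$ depends on $\phi$ only through the ``source'' factors $\phi(\xi_{i_1}\cdots\xi_{i_k}\eta)-\phi(\eta)$ appearing at each stage, divided repeatedly by factors of $G^\omega$. The crucial algebraic point is that the denominators that can spoil Laurent-polynomiality are controlled: after clearing the innermost $G^\omega(\xi_1,\ldots,\xi_p,\eta)$, the potentially non-polynomial denominators in the $\eta\to\infty$ expansion of $\phi_p$ are of the form $\Theta_N$-type expressions in products of the $\xi$'s, exactly the factors adjoined in the construction of $\check\cR$ in Section~\ref{sec51}. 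So the claim ``$\phi_p$ is $\ell$-quasi-local'' becomes: the coefficients of the $\eta$-expansion of $\phi_p$ lie in some finite iterate $\check\cR_{(s)}$. Next I would argue by induction on $p$: assuming $\phi_1,\ldots,\phi_{p-1}$ quasi-local, the formula \eqref{phip} expresses $\phi_p$ as a sum of terms each of which is (quasi-local)$\times$(Laurent polynomial) divided by $G^\omega(\xi_1,\ldots,\xi_p,\eta)$, so it suffices to show that division by $G^\omega$ and expansion in $\eta^{-1}$ keeps one inside the quasi-local ring — and here is where the symmetries enter.

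The role of the infinite family of symmetries is to certify that the division is ``clean'' modulo quasi-local corrections. Concretely, take sufficiently many admissible $\Omega^{(1)},\ldots,\Omega^{(r)}\in V_F$; each gives a local series $\hat G^{(j)}_*$ whose $p$-th coefficient satisfies the same relation as $\phi_p$ but with numerator factor $\Omega^{(j)}(\xi_1\cdots\xi_p\eta)-\Omega^{(j)}(\eta)$. Since the $\Omega^{(j)}$ can be chosen so that these numerator factors, together with $\phi(\xi_1\cdots\xi_p\eta)-\phi(\eta)=\xi_1\cdots\xi_p\eta-\eta$ for the canonical choice, span enough of the relevant polynomial module, one can write the ``dangerous part'' of $\phi_p$ as a $\k$-linear combination of the (local) $p$-th coefficients of the $\hat G^{(j)}_*$ plus a remainder that is manifestly quasi-local by the induction hypothesis; localness of the $\hat G^{(j)}_*$ then forces the dangerous denominators to cancel up to quasi-local terms. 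This is essentially the difference-equation analogue of the classical PDE argument (the $N$-th root / fractional power step), with $\Theta_N$-quasi-locality of Section~\ref{sec51} playing the role that the differential field played for PDEs; Propositions~\ref{Nroot} and \ref{proP} guarantee that once $\phi$ has been normalised to the canonical $\phi(\eta)=\eta$, quasi-locality is preserved, and a general rational $\phi(\eta)$ reduces to the canonical case via $\hat\Lambda=\phi(\Lambda)$.

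The main obstacle I anticipate is the combinatorial bookkeeping in the inductive step: showing that finitely many admissible linear terms $\Omega^{(j)}$ actually suffice to capture all the denominators of $G^\omega$ that arise at level $p$, i.e. that the map from $V_F$ to the space of possible ``numerator patterns'' modulo quasi-local tails is surjective. This requires a quantitative lemma about $V_F$ — that its elements, evaluated along the substitutions $\xi_i\mapsto$ (monomials in fewer variables) that appear in \eqref{phip}, generate the full polynomial ring needed — and a careful argument that the finite nesting depth $s$ in $\check\cR_{(s)}$ does not blow up uncontrollably with $p$ (it will depend on $p$, but that is allowed: quasi-locality only asks for \emph{some} finite $s$ per term). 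Handling the symmetrisation brackets $\langle\cdot\rangle_{\Sigma_p}$ correctly, so that symmetric Laurent-polynomiality is preserved at each step, is the other place where care is needed, but it is routine given the structure already established for $\hat\cR$.
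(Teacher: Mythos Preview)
Your proposal has a genuine gap at its starting point. You assert that the Fr\'echet derivative $\hat G_*$ of a symmetry satisfies the homogeneous equation \eqref{forsym}; it does not. From $[G,f]=0$ one obtains
\[
(\hat G_*)_t-\hat f_*\circ\hat G_*+\hat G_*\circ\hat f_*=(\hat f_*)_\tau,
\]
where $\partial_\tau=X_G$, and the right-hand side is in general non-zero. Consequently $\hat G_*$ is \emph{not} equal to the solution $\Lambda^{\{\Omega\}}$ produced by the recursion \eqref{phi1}--\eqref{phip} with $\phi=\Omega$, so you cannot simply harvest locality of the $p$-th coefficient of $\Lambda^{\{\Omega\}}$ from locality of $\hat G_*$. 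Your entire spanning/induction scheme in $p$ rests on this identification and therefore does not go through as written; the ``main obstacle'' you flag is in fact downstream of this more basic error.

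The paper's proof repairs exactly this point with one sharp observation: because $f$ has order $(-n,n)$, the inhomogeneous term $(\hat f_*)_\tau$ has $\eta$--degree at most $n$, whereas the left-hand side has $\eta$--degree $N+n$ when $G$ has order $N$. Hence the top $N$ coefficients in the $\eta^{-1}$--expansion of $\hat G_*$ \emph{do} satisfy the homogeneous equation \eqref{forsym}, and by the uniqueness in Theorem~\ref{theorLam} they coincide with those of $\Lambda^{\{\Omega\}}$. Since $\hat G_*$ is local, $\Lambda^{\{\Omega\}}$ is $N$--local; Proposition~\ref{proP} then converts this into $N$--quasi-locality of the canonical series $\Lambda$ with $\phi(\eta)=\eta$. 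Finally $N$ can be taken arbitrarily large, giving quasi-locality. Note that the argument uses a \emph{single} symmetry of high order rather than a linear combination of several --- no spanning lemma for $V_F$ is needed, and the induction is on the $\eta$--expansion depth $\ell$, not on the $\hat u$--degree $p$.
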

\begin{proof}
Let $G_{k},\ k=1,2,\ldots$ 
be an infinite sequence of symmetries. It follows from Proposition \ref{prolin} that  symmetries have non-vanishing linear part
$\hu \Omega_k(\xi_1)=\hat{\pi}_1 (G_k)$ and the linear space of the Laurent polynomials $\{\Omega_{k }(\xi)=\sum_{m=M_k}^{N_k}c_{k,m}\xi^m\, |\, c_{k,m}\in\k,\ M_k,N_k\in\Z\}$ is infinite dimensional. Let  the order of the symmetry $G_k$  be defined as $\deg^+_\xi(\Omega_k(\xi ))\defeq N_k$. Without loss of generality we can assume that the sequence $N_1<N_2<N_3<\cdots$ is unbounded (otherwise we can apply the reflection operator $\cT$ to the equation and its symmetries). Let $\hat{G}$ be a symmetry of order $N$
$$
\hat{G}=\hu\Omega(\xi_1)+\sum_{p\ge 2}\hu^pA_{p}(\xi_1,\ldots,\xi_p),\qquad \deg^+_\xi(\Omega(\xi))= N.
$$
By Definition \ref{defsym} we have $[G,f]=0$ and therefore
\[
 ([G,f])_*=(G_*)_t+G_*\circ f_*-(f_*)_\tau-f_*\circ G_*=0,
\]
where $\partial_\tau=X_G$ is the derivation defined by the evolutionary equation $u_\tau=G$. In symbolic representation it reads
\begin{equation}
\label{thmeq1}
(\hat{G}_{*})_t-\hat{f}_*\circ \hat{G}_{*}+\hat{G}_{*}\circ \hat{f}_*=(\hat{f}_*)_\tau
\end{equation}
where
$$
\hat{G}_{*}=\Omega(\eta)+\sum_{p\ge 2}p\hu^{p-1}A_{p}(\xi_1,\ldots,\xi_{p-1},\eta)
$$
is a local formal series. Its  coefficients are symmetric Laurent polynomials in all variables and $\deg^+_\eta(A_{p}(\xi_1,\ldots,\xi_{p-1},\eta))\leqslant N$ (Proposition \ref{profre}, Theorem \ref{theorsym}). Let us represent the coefficients of $\hat{f}_*$ and $\hat{G}_{*}$ as power series in $\eta^{-1}$
\begin{equation}
 \begin{array}{ll}
  \hat{f}_*=\sum\limits_{p\geqslant1} \, \sum\limits_{k \leqslant n}  pa_{p,k}(\xi_1,\ldots,\xi_{p-1})\eta^k\hu^{p-1}\qquad &  \omega(\xi_1)=\sum\limits_{k \leqslant n}  a_{1,k}(\xi_1)\eta^k,\\&\\
   \hat{G}_*=\sum\limits_{p\geqslant1}\, \sum\limits_{k \leqslant N}  pA_{p,k}(\xi_1,\ldots,\xi_{p-1})\eta^k\hu^{p-1}\qquad &  \Omega(\xi_1)=\sum\limits_{k \leqslant N} A_{1,k}(\xi_1)\eta^k
 \end{array}
\end{equation}
and substitute these expansions in equation (\ref{thmeq1}). We notice that the  degree of $\eta$ in the right hand side of the equation is $\deg^+_\eta(\hat{f}_*)_\tau\leqslant n$ but the degree of the left hand side is $N+n$. Thus, at least first $N$ terms  $pA_{p,k}(\xi_1,\ldots,\xi_{p-1})\eta^k\hu^{p-1},\ k=N,N-1,\ldots,1,\ p\in\N$ of the expansion $\hat{G}_{*}$ satisfy the homogeneous linear equation  (\ref{forsym}) with $\phi(\xi)=\Omega(\xi)$.  The solution of the equation exists and is unique (Theorem \ref{theorLam}). Thus we can identify 
these terms with $\phi_{p-1,k}(\xi_1,\ldots,\xi_{p-1})\eta^k \hu^{p-1}$ in the expansion of 
$\phi_{p-1}(\xi_1,\ldots,\xi_{p-1},\eta)$. Since the Fr\'echet derivative $\hat{G}_*$ of a symmetry $G$ is local, the obtained solution $\Lambda$ is $N$-local. It follows from Proposition \ref{proP} that there exist $N$-quasi-local (canonical) series of the form (\ref{cansym}). Moreover this series in quasi-local since $N$ can be taken arbitrary large. 
\end{proof}

Theorem \ref{main} is constructive, and it provides necessary integrability conditions for equation (\ref{eqsymb}), independent on the symmetry structure of the equation. The fact of existence of a formal recursion operator can also be proved using Adler's theorem \cite{adler14}. It follows from Adler's Theorem as well as from Theorem \ref{theorLam} that for equations of order $N$ the coefficients of the formal recursion operator $\Lambda$ (\ref{Lambrat}) belong to $\Theta_N$-quasi-local extension. Adler's Theorem also suggests that there exists a rational function $\phi(\xi_1)$ such that $\Lambda$ (\ref{Lambrat}) is local. We conjecture   that $\phi(\xi_1)=\omega(\xi_1)_+$, that is, the polynomial part of the Laurent polynomial $\omega(\xi_1)$, results in a local $\Lambda$. We have verified it for a number of equations.

Theorem \ref{theorLam}  shows the advantage of symbolic representation. It provides explicit recurrence formulae  (\ref{phi1}) and (\ref{phip}) for the  coefficients of a formal recursion operator.
We can use it to tackle the classification problem of integrable differential-difference equations. 
For a given family of equations of form (\ref{eqsymb}) the process is as follows:
\begin{itemize}
\item Use (\ref{phi1}) and (\ref{phip}) to find a few first coefficients  $\phi_p(\xi_1,\ldots,\xi_p,\eta)$.
\item Find constraints on the equations imposed by the  quasi-locality conditions of $\hu^p \phi_p(\xi_1,\ldots,\xi_p,\eta)$.
\end{itemize}
We illustrate the procedure in the following simple examples.

\begin{Ex} Let us describe all integrable equations of the form
\begin{equation}
\label{toy1} u_t=u_2+\alpha u+u(u_2+\beta u_1+\gamma u),\quad\alpha,\beta,\gamma\in\C.
\end{equation}
Its symbolic representation is of the form (\ref{eqsymb}) with
\[
\omega(\xi_1)=\xi_1^2+\alpha,\quad a_2(\xi_1,\xi_2)=\frac{1}{2}(\xi_1^2+\xi_2^2)+\frac{\beta}{2}(\xi_1+\xi_2)+\gamma
\]
and $a_s(\xi_1,\ldots,\xi_s)=0,\,\, s>2$. Using Theorem \ref{theorLam} we recursively compute the coefficients $\phi_1,\phi_2$ of the formal recursion operator starting with $\phi(\eta)=\eta$. The first coefficient reads as
\[
\phi_1(\xi_1,\eta)=\frac{\eta  \left(\xi _1-1\right) \left(\eta ^2+\beta  \eta +\xi _1^2+\beta  \xi _1+2 \gamma \right)}{\eta ^2 (\xi _1^2-1)-\xi _1^2-\alpha },
\]
and its power expansion in $\eta$ at $\eta\to\infty$ is of the form
\[
\phi_1(\xi_1,\eta)=\frac{\eta+\beta}{\xi _1+1}+\frac{\xi _1^4+\beta  \xi _1^3+2 \gamma  \xi _1^2-\beta  \xi _1+\alpha -2 \gamma }{\left(\xi _1-1\right) \left(\xi _1+1\right){}^2}\eta^{-1}+O(\eta^{-2}).
\]
We see that the element $\hu\phi_1(\xi_1,\eta)$ is $2$-quasi-local as
\[
\hu \frac{\eta+\beta}{\xi _1+1}\in \check{\cR}_{(1)},
\] 
and the ring extension is performed by using the operator $\theta_2=(1+\eta)^{-1}$.
The third term in the expansion contains $\xi_1-1$ in the denominator, and therefore 
\[
\hu \frac{\xi _1^4+\beta  \xi _1^3+2 \gamma  \xi _1^2-\beta  \xi _1+\alpha -2 \gamma }{\left(\xi _1-1\right) \left(\xi _1+1\right){}^2}\eta^{-1}
\]
is quasi-local if and only if $\xi_1-1$ divides $\xi _1^4+\beta  \xi _1^3+2 \gamma  \xi _1^2-\beta  \xi _1+\alpha -2 \gamma $. The latter occurs if and only if $\alpha=-1$. Then, if $\alpha=-1$, the first term of the formal recursion operator reads as
\[
\hu\phi_1(\xi_1,\eta)=\hu \frac{\eta  \left(\eta ^2+\beta  \eta+\xi _1^2 +\beta  \xi _1+2 \gamma \right)}{(\eta -1) (\eta +1) \left(\xi _1+1\right)},
\]
and it is easy to see that this term is $\Theta_2$-quasi-local.

Let $\alpha=-1$. We now consider the power expansion in $\eta$ at $\eta\to\infty$ of the second coefficient $\phi_2(\xi_1,\xi_2,\eta)$:
\[
\phi_2(\xi_1,\xi_2,\eta)=-\frac{\xi _1+\xi _2}{2 \left(\xi _1+1\right) \left(\xi _2+1\right) \left(\xi _1 \xi _2+1\right)}\eta-\frac{\beta  \left(\xi _1+\xi _2+2\right)}{2 \left(\xi _1+1\right) \left(\xi _2+1\right) \left(\xi _1 \xi _2+1\right)}
\]
\[
+\frac{P(\xi_1,\xi_2,\beta,\gamma)}{2\left(\xi _1 \xi _2-1\right) \xi _1\xi _2 \left(\xi _1+1\right)  \left(\xi _2+1\right)  \left(\xi _1 \xi _2+1\right){}^2}+O(\eta^{-2}),
\]
where $P(\xi_1,\xi_2,\beta,\gamma)$ is a polynomial in its variables. The third term in this expansion is the obstruction to quasi-locality of $\hu^2\phi_2(\xi_1,\xi_2,\eta)$, unless $\xi_1\xi_2-1$ divides $P(\xi_1,\xi_2,\beta,\gamma)$. We have
\[
P(\xi_1,\xi_1^{-1},\beta,\gamma)=-\left(1+\xi_1^{-1}\right){}^2 \left(2 (\gamma+1)  \xi _1+\beta(1+\xi _1^2) \right).
\] 
So the division occurs if and only if $\beta=0$ and $\gamma=-1$. The latter implies $\Theta_2$-quasi-locality of the element $\hu^2\phi_2(\xi_1,\xi_2,\eta)$. The resulting equation (\ref{toy1}) is
\[
u_t=u_2-u+u(u_2-u)=(u+1)(u_2-u).
\]
Upon the change of variables $u_k\to u_k-1,\,\,k\in\Z$, it becomes the stretched Burgers equation 
\[
u_t=u(u_2-u).
\]
\end{Ex}

\begin{Ex}
 Let us find all integrable equations of the form
\begin{equation}
\label{toy}
u_t=u_2+\alpha u_1-\alpha u_{-1}-u_{-2}+u(u_2+\beta u_1-\beta u_{-1}-u_{-2}),\quad \alpha,\beta\in\C.
\end{equation}
Its symbolic representation is of the form (\ref{eqsymb}), where
\[
\omega(\xi)=\xi^2+\alpha\xi-\alpha\xi^{-1}-\xi^{-2},\quad a_2(\xi_1,\xi_2)=\frac{1}{2}(\xi_1^2+\xi_2^2-\xi_{1}^{-2}-\xi_{2}^{-2}+\beta(\xi_1+\xi_2-\xi_1^{-1}-\xi_2^{-1}))
\]
and $a_s(\xi_1,\ldots,\xi_s)=0,\,\, s>2$. Using the theorem \ref{theorLam} we recursively compute the coefficients $\phi_1,\phi_2,\phi_3$ of the formal recursion operator starting with $\phi(\eta)=\eta$. We have
\[
\phi_1(\xi_1,\eta)=\frac{\eta  \left(\beta  \eta ^2 \xi _1+\beta  \eta  \xi _1^2+\eta ^3 \xi _1+\eta ^2+\eta  \xi _1^3+\xi _1^2\right)}{(\eta -1) \left(\alpha  \eta  \xi _1+\eta ^2 \xi _1^2+\eta ^2 \xi _1+\eta  \xi _1^2+2 \eta  \xi _1+\eta +\xi _1+1\right)},
\]
and it is easy to see that this term is quasi-local as all coefficients of its power expansion at $\eta\to\infty$ (as well as at $\eta\to 0$) are $\Theta_2$-quasi-local. 

The direct computation shows that the next coefficient $\phi_2(\xi_1,\xi_2,\eta)$ is of the form
\[
\phi_2=\frac{\Phi_2(\xi_1,\xi_2,\eta,\alpha,\beta)}{\left(\eta  \xi _1-1\right) \left(\eta  \xi _2-1\right) \left(\alpha  \eta  \xi _1 \xi _2+\eta ^2 \xi _1^2 \xi _2^2+\eta ^2 \xi _1 \xi _2+\eta  \xi _1 \xi _2^2+\eta  \xi _1+\eta  \xi _1^2 \xi _2+\eta  \xi _2+\xi _1 \xi _2+1\right)},
\]
where $\Phi_2(\xi_1,\xi_2,\eta,\alpha,\beta)$ is a polynomial in its arguments. From this it follows that the second term $\hu^2\phi_2(\xi_1,\xi_2,\eta)$ is quasi-local.

The quasi-locality condition of next coefficient $\phi_3(\xi_1,\xi_2,\xi_3,\eta)$ imposes restrictions on parameters $\alpha,\beta$. The function $\phi_3$ can be represented as
\[
\phi_3=\frac{\Phi_3(\xi_1,\xi_2,\xi_3,\eta,\alpha,\beta)}{\Psi_3(\xi_1,\xi_2,\xi_3,\eta,\alpha)},
\]
where $\Phi_3,\Psi_3$ are polynomials in their arguments. The polynomial $\Psi_3$ contains the irreducible factor
\begin{eqnarray*}
(\xi_1\xi_2\xi_3\eta)^2G^{\omega}(\xi_1,\xi_2,\xi_3,\eta)&=&\left(\xi _1^2 \xi _2^2 \xi _3^2-1\right) \left(\eta ^4 \xi _1^2 \xi _2^2 \xi _3^2+1\right)+\alpha  \eta  \xi _1 \xi _2 \xi _3 \left(\xi _1 \xi _2 \xi _3-1\right) \left(\eta ^2 \xi _1 \xi _2 \xi _3+1\right)\\
&+&\eta ^2 (-\alpha  \xi _2^2 \xi _3^2 \xi _1^3-\alpha  \xi _2^2 \xi _3^3 \xi _1^2-\alpha  \xi _2^3 \xi _3^2 \xi _1^2+\alpha  \xi _2 \xi _3^2 \xi _1^2+\alpha  \xi _2^2 \xi _3 \xi _1^2+\alpha  \xi _2^2 \xi _3^2 \xi _1
\\&-&\xi _2^2 \xi _3^2 \xi _1^4-\xi _2^2 \xi _3^4 \xi _1^2+\xi _2^2 \xi _1^2-\xi _2^4 \xi _3^2 \xi _1^2+\xi _3^2 \xi _1^2+\xi _2^2 \xi _3^2).
\end{eqnarray*}
The presence of this factor results in violation of the quasi-locality of the term $\hu^3\phi_3$, unless it cancels out by the numerator $\Phi_3$. The cancellation takes place if and only if $\beta=\alpha$ and $\alpha=0$ or $\alpha=1$. If $\alpha=0$ then the resulting equation is
\[
u_t=u_2-u_{-2}+u(u_2-u_{-2})=(u+1)(u_2-u_{-2}),
\]
which upon the change of variables $u_k\to u_k-1,\,\,k\in\Z$, becomes the stretched Volterra equation 
\[
u_t=u(u_2-u_{-2}).
\]
In the case $\alpha=1$ we have
\[
u_t=u_2+u_1-u_{-1}-u_{-2}+u(u_2+u_1-u_{-1}-u_{-2})=(u+1)(u_2+u_1-u_{-1}-u_{-2}),
\]
and after the same change of variables $u_k\to u_k-1,\,\,k\in\Z$, we obtain the Narita-Itoh-Bogoyavlensky equation (\ref{NIB}) when $n=2$, that is,
\begin{equation}\label{NIB2}
u_t=u(u_2+u_1-u_{-1}-u_{-2}).
\end{equation}
\end{Ex}

\section{Classification of integrable differential-difference equations}\label{sec6}

Here we apply the previous section results to the problem of classification of anti-symmetric quasi-linear integrable  differential-difference equations of order $(-n,n)$  
\begin{equation}
\label{gen0}
u_{t}=u_n f(u_{n-1},\ldots,u_{1-n})-u_{-n}f(u_{1-n},\ldots,u_{n-1})+g(u_{n-1},\ldots,u_{1-n})-g(u_{1-n},\ldots,u_{n-1}), 
\end{equation}
where $f,g$ are polynomial functions or formal series. Quasi-linear equations are called equivalent if they are related by  invertible transformations 
$u_k  \mapsto \alpha u_k+\beta,\ t\mapsto\gamma t,\ \alpha,\gamma\in\k^*,\ \beta\in\k$.  The equivalence classes may contain equations with $f(0,\ldots,0)= 0$ (see Example \ref{ex4}). In the classification list it is sufficient to present a single representative from each equivalence class. Integrable equations are members of infinite hierarchy of symmetries. Each hierarchy has a seed which is hierarchy member of a minimal  possible order. Thus instead of presenting all integrable equations of a certain fixed order, we only present the seeds of integrable hierarchies removing the ones possessing lower order symmetries.


In this section we give a complete list for equations of the form \eqref{gen0} when $n=3$ satisfying necessary integrability conditions - the quasi-locality conditions for the canonical formal recursion operator. The integrability for each  equation from the list is proved by either using difference substitutions or presenting Lax representations. Moreover, we investigate the higher order integrable analogue for each equation from the list. Similar to the Narita-Itoh-Bogoyavlensky lattice (\ref{NIB}), this is a family of integrable equations for any specific $n\in \N$, and for any two different $n$, the corresponding flows do not commute
(cf. Section \ref{sec62}). Finally we present a Lax representation for a new integrable differential-difference hierarchy.

\subsection{Classification results}\label{sec61}

In this session we present the exhaustive list of integrable differential-difference equations of the form
\begin{equation}
\label{gen}
u_{t}=u_3f(u_2,u_1,u)-u_{-3}f(u_{-2},u_{-1},u)+g(u_2,u_1,u)-g(u_{-2},u_{-1},u),\qquad f(0,0,0)=1,
\end{equation}
where $f,g$ are polynomial functions or formal power series. We omit equations admitting  a symmetry of order $(1,-1)$, which have been studied in detail in \cite{yam83,Yami0}.
\begin{Thm}\label{class3} Every integrable differential-difference equation (\ref{gen}) with no symmetries of order $(-1,1)$  can be obtained from one of the equations in the following list
\begin{eqnarray}
\label{V1}
u_{t}&=&u(u_3-u_{-3}),\\
\label{V2}
u_{t}&=& u^2 (u_3-u_{-3}),\\
\label{V3}
u_{t}&=&(u^2+u)(u_3-u_{-3}),\\
\label{eq1}
u_{t}&=&u (u_1u_2u_3 - u u_1u_2 +u u_{-1}u_{-2}-u_{-1}u_{-2}u_{-3}),\\
\label{eq4}
u_{t}&=&u (u_2u_3-u_1u_2+u u_1- u u_{-1}+u_{-1}u_{-2}-u_{-2}u_{-3}),\\
\label{ser3}
u_{t}&=&u \left(\frac{u_3u_1}{u_2}-\frac{u_{-3}u_{-1}}{u_{-2}}\right)+u^2\left(\frac{u_2}{u_1}-\frac{u_{-2}}{u_{-1}}\right),\\
\label{ser2}
u_{t}&=&u \left(\frac{u_3}{u_2}-\frac{u_{-3}}{u_{-2}}\right)+u \left(\frac{u_2}{u_1}-\frac{u_{-2}}{u_{-1}}\right)+u_1-u_{-1},\\
\label{bg2}
u_{t}&=&u\left(u_3+u_2+u_1-u_{-1}-u_{-2}-u_{-3}\right),\\
\label{bg1}
u_{t}&=&u\left(u_1u_2u_3-u_{-1}u_{-2}u_{-3}\right),
\\
\label{bg11}
u_{t}&=&u^2\left(u_1u_2u_3-u_{-1}u_{-2}u_{-3}\right),
\\
\label{bg12}
u_{t}&=&(u^2+ u)\left(u_1u_2u_3-u_{-1}u_{-2}u_{-3}\right),
\\
\label{eq51}
u_{t}&=&u(u_1u_3+ u u_2-u u_{-2}-u_{-1}u_{-3}),\\
\label{eq31}
u_{t}&=&u (u_2u_3 +u u_1-u u_{-1}-u_{-2}u_{-3}),\\
\label{eq2}
u_{t}&=&u^2(u_1u_2u_3-u_{-1}u_{-2}u_{-3})- u(u_1u_2-u_{-1}u_{-2}),\\
\label{eq5}
u_{t}&=&u(u_1u_3+ u u_2-u u_{-2}-u_{-1}u_{-3})-u (u_2+u_1-u_{-1}-u_{-2}),\\
\label{eq3}
u_{t}&=&u (u_2u_3 +u u_1-u u_{-1}-u_{-2}u_{-3})- u(u_2+u_1-u_{-1}-u_{-2}),\\
\label{ser1}
u_{t}&=&(u^2+1)(u_3 \sqrt{u_1^2+1} \sqrt{u_2^2+1} - u_{-3} \sqrt{u_{-1}^2+1} \sqrt{u_{-2}^2+1}),
\end{eqnarray}
by shift $u_k\mapsto u_k+const$, re-scaling transformations $u_k\to \mu u_k,\,k\in\Z,\,\,t\to \nu t,\,\,\mu,\nu\in\C^*$ and, where necessary, a power expansion.
\end{Thm}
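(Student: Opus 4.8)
The proof combines the explicit recurrence of Theorem~\ref{theorLam} with the necessary integrability condition of Theorem~\ref{main}: if \eqref{gen} is integrable then the coefficients $\hu^{p}\phi_{p}(\xi_1,\ldots,\xi_p,\eta)$ of the canonical formal recursion operator ($\phi(\eta)=\eta$) are quasi-local. The plan is to pass to symbolic representation, compute $\phi_1,\phi_2,\phi_3,\dots$ as rational functions of the Taylor coefficients of $f$ and $g$, turn their quasi-locality into a finite algebraic system for those coefficients, solve it by a finite case analysis, and match each surviving branch — after a shift $u_k\mapsto u_k+\mathrm{const}$ and a rescaling $u_k\mapsto\mu u_k,\ t\mapsto\nu t$ — with one of \eqref{V1}--\eqref{ser1}.

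Writing $f=1+\sum_{k\ge1}f^{(k)}$, $g=\sum_{k\ge1}g^{(k)}$ and substituting into \eqref{gen} gives the symbolic form $\hu_t=\hu\,\omega(\xi_1)+\sum_{i\ge2}\hu^{i}a_i(\xi_1,\ldots,\xi_i)$. The anti-symmetric structure $u_t=(1-\cT)(u_3f+g)$ forces $\omega(\xi^{-1})=-\omega(\xi)$, in particular $\omega(\pm1)=0$ (cf.\ Proposition~\ref{pro2}), and $f(0,0,0)=1$ fixes the leading coefficient, so
\[
\omega(\xi)=(\xi^{3}-\xi^{-3})+\beta(\xi^{2}-\xi^{-2})+\gamma(\xi-\xi^{-1}),
\]
where $\beta,\gamma$ are the coefficients of $u_2,u_1$ in the linear part of $g$. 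Each $a_i$ is a symmetric Laurent polynomial depending linearly on the degree-$(i-1)$ Taylor coefficients of $f$ and the degree-$i$ Taylor coefficients of $g$; hence any fixed order of the analysis involves only finitely many unknowns, which is what lets one treat polynomial and formal-series right-hand sides uniformly.

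Then I would run the recurrence \eqref{phi1}--\eqref{phip}. Each $\phi_p$ is a rational function whose denominator is a product of factors $G^{\omega}$ evaluated at various groupings of $\xi_1,\ldots,\xi_p,\eta$ together with $\Theta$-type factors. Expanding $\phi_p$ in $\eta^{-1}$ at $\eta\to\infty$, quasi-locality of $\hu^{p}\phi_p$ (Definition~\ref{defqlocal}) demands that, once the admissible denominators — products of $\Theta_N$ evaluated at monomials in the $\xi_i$'s — are cleared, every coefficient be a symmetric Laurent polynomial in $\xi_1,\ldots,\xi_p$. The remaining ``spurious'' factors of $G^{\omega}$ (those not of $\Theta$-type, the simplest being $\xi_1-1$, $\xi_1\xi_2-1$, $\xi_1\xi_2\xi_3-1$, and the irreducible factors of $G^{\omega}$ itself) must therefore divide the numerators; evaluating the numerators on these loci — setting the $\xi_i$ to roots of unity, imposing relations $\xi_i\xi_j=1$, etc., exactly as in the worked examples after Theorem~\ref{theorLam} — produces polynomial relations among $\beta,\gamma$ and the Taylor coefficients. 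The conditions from $\phi_1$ already pin down the linear and quadratic parts of $f,g$ sharply, and those from $\phi_2,\phi_3$ (and $\phi_4$ if needed) determine the higher Taylor coefficients recursively.

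Solving the resulting polynomial system branches into finitely many cases. In all but one branch the recursion for the Taylor coefficients terminates, yielding polynomial $f,g$; in the exceptional branch the recursion for the coefficients of $f$ is precisely the one solved by the expansion of $\sqrt{1+u_1^{2}}\,\sqrt{1+u_2^{2}}$, which produces \eqref{ser1} (the $n=3$ member of \eqref{eqrootn}). After normalisation one identifies each equation with one of the seventeen. Equations carrying a symmetry of order $(-1,1)$ — excluded by hypothesis — are detected via Theorem~\ref{theorsym} applied to a linear part $\Omega(\xi_1)$ supported on $\{\xi^{-1},1,\xi\}$, checking whether all $A_m$ remain Laurent polynomial, or simply recognised as members of the Yamilov-type hierarchies \cite{yam83,Yami0}; removing them leaves exactly \eqref{V1}--\eqref{ser1}. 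The main obstacle is the size and intricacy of the algebraic system in the last two steps: one must carry enough coefficients $\phi_p$ to be sure the necessary conditions are exhausted and the solution set is zero-dimensional apart from the single algebraic-function branch, recognise that branch as an algebraic function rather than as an obstruction to quasi-locality, and keep track of which admissible $\Theta_N$-denominators actually occur — all of which is done with computer algebra, the structural inputs being only Theorems~\ref{theorLam} and~\ref{main}.
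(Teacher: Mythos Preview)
Your overall strategy matches the paper's: compute the coefficients $\phi_p$ of the canonical formal recursion operator via Theorem~\ref{theorLam}, impose quasi-locality (Theorem~\ref{main}), and solve the resulting algebraic system modulo shifts and rescalings. However, several concrete points are wrong or missing.

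First, for equations of the form \eqref{gen} the terms $\hu\phi_1$ and $\hu^2\phi_2$ are quasi-local for \emph{all} values of the parameters; your claim that ``the conditions from $\phi_1$ already pin down the linear and quadratic parts of $f,g$ sharply'' is false. The first genuine constraints arise from $\hu^3\phi_3$, and these constrain $\alpha,\beta$ and the quadratic, cubic and quartic coefficients simultaneously.

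Second, and more substantially, you are missing the key structural lemma that makes the procedure finite and the list exhaustive. The paper proves (Proposition~\ref{uniq}) that two integrable equations of the form \eqref{gen} sharing the same linear, quadratic and cubic parts must coincide in all degrees; the mechanism is that $\phi_{k}-\tilde{\phi}_{k}$ has numerator of total degree below that of the irreducible factor $\xi_1^3\cdots\xi_k^3\eta^3 G^{\omega}(\xi_1,\ldots,\xi_k,\eta)$ in its denominator, so quasi-locality forces it to vanish. This is what guarantees that once Step~1 has fixed the equation through cubic order (a finite problem), the higher coefficients $a_s,\ s\ge4$, are \emph{uniquely} determined degree by degree, and no further branching can occur. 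Your phrase ``one must carry enough coefficients $\phi_p$ to be sure the necessary conditions are exhausted'' signals that you have not identified this stopping criterion; without it the completeness of the list is not established.

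Finally, there is not one non-polynomial branch but three: the non-terminating recursions yield \eqref{ser3}, \eqref{ser2} and \eqref{ser1}, not just \eqref{ser1}. And Step~3 of the paper --- verifying integrability of every listed equation by exhibiting a transformation to a known hierarchy or a Lax pair --- is an essential part of the theorem (necessary conditions alone do not prove integrability) and should be stated explicitly.
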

The proof of this classification theorem relies on the following result:
\begin{Pro} \label{uniq}
Let 
$$
u_{t}=\sum_{p\ge 1}f_p,\quad f_i\in\ring^i \quad \mbox{and} \quad u_{t}=\sum_{p\ge 1}\tilde{f}_p, \quad \tilde{f}_i\in\ring^i
$$
be two integrable formal differential-difference equations of form (\ref{gen}), such that
\begin{itemize}
\item The linear terms are $f_1=\tilde{f}_1=u_3+c_2u_2+c_1u_1-c_1u_{-1}-c_2u_{-2}-u_{-3}$;
\item The quadratic and cubic terms coincide: $f_2=\tilde{f}_2,\quad f_3=\tilde{f}_3$.
\end{itemize}
Then the formal differential-difference equations coincide, i.e. 
$
f_p=\tilde{f}_p,\quad \forall p\in\N.
$
\end{Pro}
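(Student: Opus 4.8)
The plan is to prove $f_p=\tilde f_p$ by strong induction on $p$; the cases $p=1,2,3$ are the hypotheses. So fix $m\ge3$, assume $f_i=\tilde f_i$ for all $i\le m$, and derive $f_{m+1}=\tilde f_{m+1}$.

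Working in symbolic representation, write the canonical formal recursion operators of the two equations as $\Lambda=\eta+\sum_{p\ge1}\hu^p\phi_p$ and $\tilde\Lambda=\eta+\sum_{p\ge1}\hu^p\tilde\phi_p$, with coefficients produced by the recursion (\ref{phi1})--(\ref{phip}) of Theorem \ref{theorLam} for $\phi(\eta)=\eta$. Since $\phi_p$ is a fixed universal expression in $a_2,\dots,a_{p+1}$ and in $\phi_1,\dots,\phi_{p-1}$, an auxiliary induction gives $\phi_p=\tilde\phi_p$ for all $p\le m-1$, and at level $m$ the two operators can differ only through the first term of (\ref{phip}):
\begin{equation*}
\phi_m-\tilde\phi_m=(m+1)\,\frac{\eta\,(\xi_1\cdots\xi_m-1)\,b}{G^{\omega}(\xi_1,\dots,\xi_m,\eta)},\qquad b:=a_{m+1}-\tilde a_{m+1}.
\end{equation*}
By Proposition \ref{profre}, $b$ is a Laurent polynomial symmetric in all $m+1$ variables $\xi_1,\dots,\xi_m,\eta$, and since the equations are of the form (\ref{gen}) each of these variables occurs in $b$ with exponent between $-3$ and $3$. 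Both equations being integrable, Theorem \ref{main} makes $\hu^m\phi_m$ and $\hu^m\tilde\phi_m$, hence $\hu^m(\phi_m-\tilde\phi_m)$, quasi-local.

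The heart of the proof is then to show that this forces $b=0$. I would expand $\phi_m-\tilde\phi_m$ in powers of $\eta^{-1}$ at $\eta=\infty$. The leading coefficient in $\eta$ of $G^{\omega}(\xi_1,\dots,\xi_m,\eta)$ is $(\xi_1\cdots\xi_m)^3-1=(\xi_1\cdots\xi_m-1)\,\Theta_3(\xi_1\cdots\xi_m)$, so the coefficients of the expansion are rational functions with denominators built from powers of $\Theta_3(\xi_1\cdots\xi_m)$ and of $\xi_1\cdots\xi_m-1$. For equations of order $(-3,3)$ the modulus is $N=3$, so $\Theta_3$-powers are admissible in $\check{\cR}$ while any surviving power of $\xi_1\cdots\xi_m-1$ obstructs quasi-locality. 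A look at the expansion of $1/G^{\omega}$ shows that all non-Laurent-polynomial contributions originate from the single $\eta^0$-coefficient $-\sum_i\omega(\xi_i)$ of $G^{\omega}$ (its other coefficients being divisible by $\xi_1\cdots\xi_m-1$), and that using this term $s$ times leaves, after cancellation with the factor $\xi_1\cdots\xi_m-1$ in the numerator, a denominator divisible by $(\xi_1\cdots\xi_m-1)^{s}$. Since $-\sum_i\omega(\xi_i)$ is not divisible by $\xi_1\cdots\xi_m-1$ when $m\ge3$ --- its restriction to $\xi_1\cdots\xi_m=1$ is, up to sign, the $(m-1)$-variable instance of $G^{\omega}$, which is nonzero --- quasi-locality at deep enough orders of the expansion forces $(\xi_1\cdots\xi_m-1)^{s}\mid b$ for every $s$, hence $b=0$ by the degree bound. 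Therefore $f_{m+1}=\tilde f_{m+1}$, which closes the induction.

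The point requiring real care is the claim that the obstructions genuinely accumulate: at a given order of the $\eta^{-1}$-expansion several terms contribute, and one must verify that the term carrying the highest power of $\xi_1\cdots\xi_m-1$ in its denominator is not cancelled by the others. For $m\ge4$ this can be bypassed, since by symmetry $b$ vanishes whenever any $m$ of its variables have product $1$, so $b$ is divisible by the product of the $m+1$ corresponding irreducible linear forms, which already has degree $m>3$ in each variable and hence forces $b=0$ at once; the genuine work is confined to the case $m=3$, where one uses the accumulation argument (equivalently, one divisibility step beyond the first).
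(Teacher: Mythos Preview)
Your setup is exactly the paper's: strong induction, the canonical formal recursion operators, and the identity
\[
\phi_m-\tilde\phi_m=(m+1)\,\frac{\eta\,(\xi_1\cdots\xi_m-1)\,b}{G^{\omega}(\xi_1,\dots,\xi_m,\eta)},\qquad b=a_{m+1}-\tilde a_{m+1},
\]
together with Theorem~\ref{main} to force this to be quasi-local. Where you diverge is in the argument that quasi-locality implies $b=0$.

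The paper does not expand in $\eta^{-1}$ at all. It clears denominators, setting $\hat a_{m+1}=(\xi_1\cdots\xi_m\eta)^3 b$ and $g_{m+1}=(\xi_1\cdots\xi_m\eta)^3 G^{\omega}$, and then invokes two algebraic facts: (i) $g_{m+1}$ is an \emph{irreducible} polynomial for all $c_1,c_2$, and (ii) by the specific shape~(\ref{gen}) the total degree of $\hat a_{m+1}$ is strictly less than that of $g_{m+1}$. Hence $g_{m+1}\nmid \hat a_{m+1}$, and the paper asserts that an uncancelled irreducible factor $g_{m+1}$ in the denominator obstructs quasi-locality, forcing $\hat a_{m+1}=0$. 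This is short and uniform in $m$; the price is that the irreducibility of $g_{m+1}$ is stated without proof.

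Your expansion-and-accumulation route is a legitimate alternative, but the gap you flag is more serious than you present it. Your shortcut for $m\ge 4$ requires $D_1=\xi_1\cdots\xi_m-1$ to divide $b$ \emph{as a Laurent polynomial in all $m+1$ variables}, i.e.\ $D_1\mid b_j$ for every $\eta$-coefficient $b_j$ of $b$. What a single obstruction in the $\eta^{-1}$-expansion gives you directly is only $D_1\mid b_3$ (the top coefficient); to reach the remaining $b_j$ you must descend through successive orders, at each of which several $b_j$'s mix, and argue inductively that the already-known divisibilities isolate the next one. That is doable but is precisely the ``non-cancellation'' bookkeeping you hoped to avoid, so the $m\ge4$ case is not actually bypassed. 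For $m=3$ you then need a second round of the same analysis to get $D_1^2\mid b$. In short, your approach can be completed, but not with the level of detail given; the paper's irreducibility-plus-degree argument sidesteps all of this.
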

\begin{proof}
Since the formal differential-difference equations are integrable, they possess the canonical formal recursion operators
$$
\Lambda=\eta+\sum_{p>0}\hu^p\phi_p(\xi_1,\ldots,\xi_p,\eta),\quad \tilde{\Lambda}=\eta+\sum_{p>0}\hu^p\tilde{\phi}_p(\xi_1,\ldots,\xi_p,\eta),
$$
such that all their terms are quasi-local. Moreover, since the linear, quadratic and cubic terms of the formal differential-difference equations coincide, the linear and quadratic terms of the formal recursion operators also coincide, i.e.
$$
\phi_1(\xi_1,\eta)=\tilde{\phi}_1(\xi_1,\eta),\quad \phi_2(\xi_1,\xi_2,\eta)=\tilde{\phi}_2(\xi_1,\xi_2,\eta).
$$
Consider cubic elements of $\Lambda$ and $\tilde{\Lambda}$: $\hu^3\phi_3(\xi_1,\xi_2,\xi_3,\eta)$ and $\hu^3\tilde{\phi}_3(\xi_1,\xi_2,\xi_3,\eta)$, and let $\hu^4a_4(\xi_1,\xi_2,\xi_3,\xi_4)$ and $\hu^4\tilde{a}_4(\xi_1,\xi_2,\xi_3,\xi_4)$ be the symbolic representations of $f_4$ and $\tilde{f}_4$. Then from (\ref{phip}) it follows that
\begin{equation}\label{ag4}
\phi_3(\xi_1,\xi_2,\xi_3,\eta)-\tilde{\phi}_3(\xi_1,\xi_2,\xi_3,\eta)=4\frac{\eta(\xi_1\xi_2\xi_3-1)(a_4(\xi_1,\xi_2,\xi_3,\eta)-\tilde{a}_4(\xi_1,\xi_2,\xi_3,\eta))}{\omega(\xi_1\xi_2\xi_3\eta)-\omega(\xi_1)-\omega(\xi_2)-\omega(\xi_3)-\omega(\eta)}
\end{equation}
since all the lower terms coincide. Here
$$
\omega(x)=x^3+c_2x^2+c_1x-c_1x^{-1}-c_2x^{-2}-x^{-3}.
$$
We can rewrite the right hand side of \eqref{ag4} as
$$
4\eta(\xi_1\xi_2\xi_3-1)\frac{\hat{a}_4(\xi_1,\xi_2,\xi_3,\eta)}{g_4(\xi_1,\xi_2,\xi_3,\eta)},
$$
where 
\begin{eqnarray*}
\hat{a}_4(\xi_1,\xi_2,\xi_3,\eta)&=&\xi_1^3\xi_2^3\xi_3^3\eta^3\left(a_4(\xi_1,\xi_2,\xi_3,\eta)-\tilde{a}_4(\xi_1,\xi_2,\xi_3,\eta)\right),\\
g_4(\xi_1,\xi_2,\xi_3,\eta)&=&\xi_1^3\xi_2^3\xi_3^3\eta^3 G^{\omega}(\xi_1,\xi_2,\xi_3,\eta) .
\end{eqnarray*}
Here both $\hat{a}_4(\xi_1,\xi_2,\xi_3,\eta)$ and $g_4(\xi_1,\xi_2,\xi_3,\eta)$ are polynomials in their arguments. One can show
that the polynomial $g_4(\xi_1,\xi_2,\xi_3,\eta)$ is irreducible for any choice of parameters $c_1,c_2\in\k$. Moreover, the total degree of $g_4(\xi_1,\xi_2,\xi_3,\eta)$ is $24$, while the total degree of $\hat{a}_4(\xi_1,\xi_2,\xi_3,\eta)$ is less than $24$
since the equations are of form (\ref{gen}).
Therefore, the polynomial $g_4(\xi_1,\xi_2,\xi_3,\eta)$ cannot divide $\hat{a}_4(\xi_1,\xi_2,\xi_3,\eta)$. It follows from Theorem \ref{main} that
$\hu^3 (\phi_3(\xi_1,\xi_2,\xi_3,\eta)-\tilde{\phi}_3(\xi_1,\xi_2,\xi_3,\eta))$ is quasi-local, while the expression $4\hu^3\eta(\xi_1\xi_2\xi_3-1)\frac{\hat{a}_4(\xi_1,\xi_2,\xi_3,\eta)}{g_4(\xi_1,\xi_2,\xi_3,\eta)}$ is quasi-local if and only if $\hat{a}_4(\xi_1,\xi_2,\xi_3,\eta)=0$. Therefore, $a_4(\xi_1,\xi_2,\xi_3,\eta)=\tilde{a}_4(\xi_1,\xi_2,\xi_3,\eta)$, and thus $f_4=\tilde{f}_4$. 

By the same argument, we can show that if $f_i=\tilde{f}_i,\,\,i=1,\ldots,k$, then $f_{k+1}=\tilde{f}_{k+1}$, for $k\geq 4$.
\end{proof}

{\bf Sketch of the proof of the theorem \ref{class3}}: The symbolic representation of a generic equation (or formal series)  (\ref{gen}) is of the form (\ref{eqsymb}) with
\[
\omega(\xi)=P_1(\xi)-P_1(\xi^{-1}),\quad 
a_s(\xi_1,\ldots,\xi_s)=P_s(\xi_1,\ldots,\xi_s)-P_s(\xi_1^{-1},\ldots,\xi_s^{-1}),\quad s=2,3,\ldots,
\]
where
\[
P_1(\xi)=\xi^3+\alpha\xi^2+\beta\xi,\quad\alpha,\beta\in\C,
\]
\[
 P_s(\xi_1,\ldots,\xi_s)=\left\langle\sum_{i_1=0}^3\sum_{i_2,\ldots,i_{s}=0}^2c_{i_1\ldots i_s}\xi_1^{i_1}\cdots\xi_s^{i_s}\right\rangle,\quad c_{i_1\cdots i_s}\in\C.
\]
The algorithm can be split into two steps.

\underline{Step 1}. The coefficients $\phi_1,\phi_2,\phi_3$ of the formal recursion operator with $\phi(\eta)=\eta$ can be found explicitly (Theorem  \ref{theorLam}). It is easy to show that terms $\hu\phi_1,\hu^2\phi_2$ are quasi-local for any $\alpha,\beta$ and constants $c_{i_1i_2},c_{i_1i_2i_3}$. The requirement of the quasi-locality of the term $\hu^3\phi_3(\xi_1,\xi_3,\xi_3,\eta)$ results in the system of polynomial equations on $\alpha,\beta$ and $c_{i_1i_2},c_{i_1i_2i_3},c_{i_1i_2i_3i_4}$ - the necessary integrability conditions. The action of the re-scaling group $u_k\to \mu u_k,\,k\in\Z,\,\,t\to \nu t,\,\,\mu,\nu\in\C^*$ on constants $\alpha,\beta$ and $c_{i_1i_2},c_{i_1i_2i_3},c_{i_1i_2i_3i_4}$ is of the form
\[
\alpha\to\alpha,\quad\beta\to\beta,\quad c_{i_1\cdots i_s}\to\mu^{s-1}\nu^{-1}c_{i_1\cdots i_s},
\]
and modulo the action of this group the set of solutions of the system of integrability conditions is finite. Thus we obtain the finite list of $3$-approximate integrable equations of the form (\ref{gen}). 

\underline{Step 2}. By Proposition \ref{uniq} the requirement of quasi-locality of $\hu^s\phi_s(\xi_1,\ldots,\xi_s,\eta),\,s\ge 4$ uniquely determines the terms  $\hu^sa_s(\xi_1,\ldots,\xi_s),\,\,s\ge 4$ and imposes further restrictions on the constants $\alpha,\beta,c_{i_1i_2},c_{i_1i_2i_3},c_{i_1i_2i_3i_4}$. The obtained sequence of coefficients $a_s(\xi_1,\ldots,\xi_s)$ either truncates at $s=5$ or continues indefinitely. In the former case this leads to polynomial equations of the above, and in the latter case, results in equations (\ref{ser3}), (\ref{ser2}) and (\ref{ser1}).

\underline{Step 3}. Integrability of every equation from the list is shown below either by transforming to a known integrable equation or by providing the Lax representation.
\hfill $\square$

The equations in Theorem \ref{class3} can be split into the following four lists:
\begin{itemize}
 \item[List 1:] Equations related to the stretched Volterra equation: \eqref{V1}--\eqref{ser3}. It is obvious that \eqref{V1}, 
 \eqref{V2} and \eqref{V3} are from the Volterra equation of the form $u_t=(\alpha+ \beta u+\gamma u^2)(u_1-u_{-1}), \alpha,\beta,\gamma\in\k.$ The other equations in this list are transformed into \eqref{V1}, for which we write as
 \begin{equation}\label{vol3}
 w_t=w (w_3-w_{-3})
 \end{equation}
 for clarity, by the following transformations:
 $$
 \eqref{eq1}: w=u u_1 u_2; \quad  \eqref{eq4}: w=u u_1; \quad \eqref{ser3}: w=\frac{u u_2}{u_1} .
 $$
 \item[List 2:] Linearisable equations: \eqref{ser2}, which is  related to the linear equation $w_t=w_3-w_{-3}$ 
 by the transformation $u=w w_1 w_2$.
 \item[List 3:] Equations related to the Narita-Itoh-Bogoyavlensky equation: \eqref{bg2}--\eqref{eq31}. Equation \eqref{bg2}
 is the well-known Narita-Itoh-Bogoyavlensky chain. All other are related to it. To be clear, we write it in different variable
 \begin{equation}\label{INB}
 w_{t}=w\left(w_3+w_2+w_1-w_{-1}-w_{-2}-w_{-3}\right) .
 \end{equation}
The transformations are as follows:
$$
\eqref{bg1}: w=u u_1 u_2; \ \eqref{bg11}: w=u u_1 u_2 u_3; \  \eqref{bg12}: w=u u_1 u_2 (u_3+1);
\  \eqref{eq51}:w=u u_2;\  \eqref{eq31}: w=u u_1.
$$
 \item[List 4:] Other equations: \eqref{eq2}--\eqref{ser1}. Equations \eqref{eq2} and \eqref{eq5} appeared in \cite{Adler2} as  discrete Sawada-Kotera equations and they are related by $u\to u u_1$. Equations \eqref{eq3} and \eqref{ser1} are new to best of our knowledge. We are going to show their integrability in Section \ref{sec63} by presenting their Lax representations.
 
\end{itemize}

\subsection{Integrable hierarchies of higher orders}\label{sec62}
The Narita-Itoh-Bogoyavlensky (NIB) lattice (\ref{NIB}) 
is a family of integrable equations of order  $(-n,n),\ n\in \N$. For any two distinct values of  $n$, the corresponding flows do not commute and belong to different integrable hierarchies. We shall call them $n$--{\sl relatives} of the NIB family.  Equations \eqref{NIB1}, \eqref{NIB2} and \eqref{bg2} are $1-,\ 2-$ and $3$--relatives respectively. In this section, we explore higher order relatives of all integrable equations listed in  Theorem \ref{class3}. 
\begin{itemize}
 \item[List 1:] 
Equations \eqref{V1}, \eqref{V2} and \eqref{V3} are   particular cases   of the equation
$$
u_t=(\alpha+\beta u+\gamma u^2)(u_n-u_{-n}), \quad \alpha, \beta, \gamma\in \k
$$
which is obtained from the integrable Volterra type  equation $u_t=(\alpha+\beta u+\gamma u^2)(u_1-u_{-1})$   by stretching of the discrete variable.

The other equations in the List 1 can be transformed into \eqref{vol3} and their  integrable  higher order relatives can transformed into \begin{equation}\label{stren}
w_t=w(w_n-w_{-n}) .
\end{equation}
Equation (\ref{eq1}) is the $3$--relative of the family
\begin{eqnarray}
\nonumber
u_t=u\left(\prod_{k=1}^n u_k-\prod_{k=1}^n u_{-k}\right)-u^2\left(\prod_{k=1}^{n-1}u_k-\prod_{k=1}^{n-1}u_{-k}\right).
\end{eqnarray}
The latter can be mapped into \eqref{stren} by the transformation $w=\prod_{k=0}^{n-1}u_k$. 
When $n=2$, it reduces to equation ${\rm (E.1^\prime)}$ in \cite{GGY19}.

The family
\begin{eqnarray*}
u_t=u\sum_{i=0}^{n-1} (-1)^i \left(u_{n-1-i} u_{n-i}-u_{i+1-n} u_{i-n}\right) ,
\end{eqnarray*}
is transformed to (\ref{stren})
by setting $w=uu_1$. It reduces to equation ${\rm (E.1^\prime)}$ in \cite{GGY19} when $n=2$ and to equation (\ref{eq4})
when $n=3$.

Equation \eqref{ser3} is the $1$--relative in the family 
\begin{eqnarray*}
&&u_t= u \left(\prod_{l=1}^n \frac{u_{2l-1}}{u_{2l}} u_{2n+1}+\prod_{l=0}^{n-1} \frac{u_{2l}}{u_{2l+1}} u_{2n}
-\prod_{l=0}^{n-1} \frac{u_{-2l}}{u_{-1-2l}} u_{-2n}-\prod_{l=1}^n \frac{u_{1-2l}}{u_{-2l}} u_{-2n-1}\right),
\end{eqnarray*}
which reduces  to 
$
w_t=w (w_{2n+1}-w_{-2n-1})
$
by the transformation $w=\prod_{l=0}^{n-1} \frac{u_{2l}}{u_{2l+1}} u_{2n}$. 
\item[List 2:]
Equation (\ref{ser2}) can be linearised. In general, we let $u=\prod_{k=0}^{n-1}w_k$, where $w$ satisfies a linear equation 
$w_t=w_n-w_{-n}$. 
Note that 
$\frac{u_{k+1}}{u_{k}}=\frac{w_{n+k}}{w_{k}} $ and
$\frac{u_{-k}}{u_{1-k}}=\frac{w_{-k}}{w_{n-k}} $ for $k>0$. Thus
\begin{eqnarray}\label{linfam}
 u_t=\sum_{k=0}^{n-1} \frac{u}{w_k} (w_{n+k}-w_{k-n})
 =u \sum_{k=0}^{n-1}  (\frac{u_{k+1}}{u_k}-\frac{w_{-1-k}}{w_{n-1-k}})
  =u \sum_{k=0}^{n-1}  (\frac{u_{k+1}}{u_k}-\frac{u_{-1-k}}{u_{-k}}) .
\end{eqnarray}
Thus, $1$--relative of the family (\ref{linfam}) is a linear equation, $2$--relative is a special case of  ${\rm (E.8)}$ in \cite{GGY19}  and $3$--relative is (\ref{ser2}).
\item[List 3:]
As we mentioned at the beginning of this section,  equation \eqref{bg2} is $3$--relative of the NIB family (\ref{NIB}), which here we  write in different variable for convenience
 \begin{equation}\label{INBn}
 w_{t}=w\left(w_n+\cdots+w_1-w_{-1}-\cdots-w_{-n}\right) .
 \end{equation}
The families corresponding to equations  \eqref{bg1}--\eqref{eq31} we denote as $\eqref{bg1}_n-\eqref{eq31}_n$. They are can be transformed  to (\ref{INBn}) by polynomial maps:
\begin{eqnarray}
&& \eqref{bg1}_n:\quad u_{t}=u\left(\prod_{j=1}^n u_j-\prod_{j=1}^n u_{-j}\right), \quad \quad \quad w=\prod_{j=0}^{n-1} u_j;\label{bg1n} \\
&&\eqref{bg11}_n:\quad u_{t}=u^2\left(\prod_{j=1}^n u_j-\prod_{j=1}^n u_{-j}\right), \quad\quad \ \ w=\prod_{j=0}^{n} u_j;\label{bg11n} \\
&&\eqref{bg12}_n:\quad u_{t}=(u^2+ u)\!\!\left(\prod_{j=1}^n u_j-\prod_{j=1}^n u_{-j} \!\!\right), \quad
w=\prod_{j=0}^{n-1} u_j (u_n+1);\label{bg12n}\\  
&& \eqref{eq51}_n:\quad u_{t}=u \sum_{i=0}^{\lfloor{\frac{n-1}{2}}\rfloor}\left(u_{\lfloor{\frac{n}{2}}\rfloor-i} u_{n-i}-u_{i-\lfloor{\frac{n}{2}}\rfloor} u_{i-n}\right),
\quad \quad w=u u_{\lceil{\frac{n}{2}}\rceil};\label{eq51n}\\  
&&\eqref{eq31}_n:\quad u_{t}=u \sum_{l=0}^{\lfloor{\frac{n-1}{2}}\rfloor}(u_{n-2l-1} u_{n-2l} -u_{2l+1-n} u_{2l-n}),\quad\ w=u u_1. \label{eq31n}
\end{eqnarray}
Here $\lfloor \cdot \rfloor $ and $\lceil \cdot \rceil $ are floor and ceiling functions respectively, and these can be checked by direct computation. For example, for (\ref{bg12n}),  we have
\begin{eqnarray*}
 w_t&=&w \sum_{k=0}^{n-1} (u_k+1) \left(\prod_{j=1}^n u_{j+k}-\prod_{j=1}^n u_{k-j}\right)+w u_n \left(\prod_{j=1}^n u_{n+j}-\prod_{j=1}^n u_{n-j}\right)\\
 &=&w \sum_{k=0}^{n-1} \left(w_k -\prod_{j=0}^{n-1} u_{j+k}+\prod_{j=0}^{n-1} u_{j+1+k}-w_{k-n}\right)
 +w \left(w_n -\prod_{j=1}^n u_{n+j}-w+\prod_{j=0}^{n-1} u_{j}\right),
\end{eqnarray*}
which is (\ref{INBn}) after straightforward simplification. 

The families \eqref{bg1} and \eqref{bg11} are two known modifications of the Narita-Itoh-Bogoyavlensky family \cite{bogo}. The 
family (\ref{bg12n}) can be found in a recent paper \cite{Adler3} on discrete integrable equations of higher order.

\item[List 4:]
Equation (\ref{eq2}) and its family 
\begin{eqnarray}
u_t=u^2\left(\prod_{k=1}^n u_k-\prod_{k=1}^n u_{-k}\right)-u\left(\prod_{k=1}^{n-1}u_k-\prod_{k=1}^{n-1}u_{-k}\right).\label{adler}
\end{eqnarray}
have been previously found by Adler and Postnikov \cite{Adler2}, where the authors also presented the corresponding fractional Lax representation.
This family can be viewed as an inhomogeneous deformation of the modified Narita-Itoh-Bogoyavlensky family (\ref{bg11n}). When $n=2l+1, l\geq 1$ we can use the transformation $w=u u_1$
which maps the family (\ref{adler}) into the family
\begin{eqnarray}\label{eq5n}
 w_t=w\sum_{i=0}^1\left(\prod_{k=0}^l w_{2k+i}-\prod_{k=0}^l w_{-2k-i}-\prod_{k=1}^{l}w_{2k-i}+\prod_{k=1}^{l}w_{-2k+i}\right).
\end{eqnarray}
Equation (\ref{eq5}) is $1$--relative of the family (\ref{eq5n}).

It seems that equation (\ref{eq3}) has not appeared in the literature previously.
It belongs to a new family of integrable equations\footnote{After we have completed this work, we found out that this equation and the family (\ref{neweq1}) were known to V.E. Adler.
He kindly sent us his unpublished notes, also containing their Lax representations. Under his permission, we present the Lax representation in Section \ref{sec7} for completeness.} 
\begin{eqnarray}
u_t
&=&u (\cS^{2n-1}-1) \left((u-1)\prod _{l=1}^{n-1} u_{-l}+(u_{1-2n}-1) \prod_{l=n}^{2n-2} u_{-l} 
\right)\label{neweq1} .
\end{eqnarray}
Its $1$--relative  reduces to the Volterra chain (\ref{NIB1}), while $2$--relative coincides with (\ref{eq3}). 
We also explicitly check that the $3-$ and $4$--relatives of the family, namely
\begin{eqnarray*}
&&u_t=u(\cS^5-1)\left( (u-1) u_{-1} u_{-2}+u_{-3} u_{-4} (u_{-5}-1)\right),\\
&&u_t=u(\cS^7-1)\left( (u-1) u_{-1} u_{-2} u_{-3}+ u_{-4} u_{-5}u_{-6} (u_{-7} -1)\right),
\end{eqnarray*}
possess quasi-local canonical formal recursion operators. 

Equation (\ref{ser1}) is another new equation. Its family and Lax representation we discuss in the next Section \ref{sec63}.
\end{itemize}
\begin{Rem}
We have already seen that one low order equation may belong to several integrable families. For example the Volterra equation (\ref{NIB1}) is $1$--relative in (\ref{NIB}) and (\ref{neweq1}) families. Equation \eqref{eq31} belongs to the family (\ref{eq31n}) as well as the family 
$$
u_{t}=u \left((1+\cS^{-\lceil\frac{n}{2}\rceil})\prod_{i=0}^{\lfloor{\frac{n-1}{2}}\rfloor} u_{n-i}
-(1+\cS^{\lceil\frac{n}{2}\rceil})\prod_{i=0}^{\lfloor{\frac{n-1}{2}}\rfloor} u_{i-n}\right),\quad w=\prod_{i=0}^{\lfloor{\frac{n-1}{2}}\rfloor}u_i.
$$
which has the same $w$ ancestor (\ref{INBn}). 
\end{Rem}

\subsection{Integrable deformation of the Narita-Itoh-Bogoyavlensky lattice}\label{sec63}
In this section, we study the integrability of the equation
\begin{eqnarray}\label{squareq}
u_t=(1+u^2)(u_n\prod_{k=1}^{n-1}\sqrt{1+u_k^2}-u_{-n}\prod_{k=1}^{n-1}\sqrt{1+u_{-k}^2}), \quad  n\in \mathbb{N}.
\end{eqnarray}
It can be viewed as an inhomogeneous deformation of the Narita-Itoh-Bogoyavlensky equation (\ref{bg11n}). Indeed, after the re-scaling $u_k\to \varepsilon^{-1} u_k,\ t\to \varepsilon^{n+1}t$ and in the limit $\varepsilon\to 0$ it turns into (\ref{bg11n}).
In the case $n=2$ its integrability was established in \cite{GY17}. 

\begin{Thm}\label{sqrt}
Equation \eqref{squareq}
possesses a Lax representation  $L_t=[A, L]$ with a rational (pseudo-difference) operator $L=Q^{-1}P$, where 
\begin{eqnarray*}
 Q=u-w u_1 \cS^{-1}, \quad P=(u w_1 \cS-u_1) \cS^{n-1},\qquad  w=\sqrt{1+u^2},
\end{eqnarray*}
and a skew-symmetric difference operator $A=L_{+}-(L_{+})^{\dagger}$. Here $L_+$ denotes the polynomial part of the Laurent representation 
for the rational operator $L$. 
\end{Thm}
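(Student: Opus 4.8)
The verification proceeds by reducing the Lax equation to a pair of dressing identities for the two factors of $L=Q^{-1}P$. For an arbitrary operator $A$ one has the elementary equivalence
\[
L_t=[A,L]\quad\Longleftrightarrow\quad Q_tQ^{-1}+QAQ^{-1}=P_tP^{-1}+PAP^{-1},
\]
obtained by differentiating $L=Q^{-1}P$, substituting into $L_t-[A,L]=0$, and multiplying on the left by $Q$ and on the right by $P^{-1}$. Hence it suffices to produce a \emph{difference} operator $B$ with
\[
Q_t=BQ-QA\qquad\text{and}\qquad P_t=BP-PA,
\]
since then $B=Q_tQ^{-1}+QAQ^{-1}=P_tP^{-1}+PAP^{-1}$ and the Lax equation follows. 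Here $Q_t$ and $P_t$ are read off directly from \eqref{squareq} together with $w_t=u\,u_t/w$, so both displayed identities are equalities between difference operators with finitely many coefficients.

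The second ingredient is an explicit description of $A$. Writing $Q=u\,(1-u^{-1}w\,u_1\,\cS^{-1})$ and summing the geometric series gives
\[
Q^{-1}=u^{-1}+\sum_{k\ge 1}\big(w\,w_{-1}\cdots w_{1-k}\big)\,u_1\,(u_{1-k}u_{-k})^{-1}\,\cS^{-k}\in\fF_L,
\]
so $L=Q^{-1}P$ has order $n$ with leading coefficient $w_1$. Matching non-negative powers of $\cS$ in $QL_+=P$ one computes the positive part $L_+=\sum_{j=0}^{n}\ell_j\cS^{j}$ recursively, finding $\ell_n=w_1$, $\ell_{n-1}=u\,u_1$, and $\ell_{n-k}=\big(w\,w_{-1}\cdots w_{2-k}\big)\,u_1\,u_{1-k}$ for $2\le k\le n$; all coefficients are local, built from consecutive products of the $w_i$ and a couple of factors $u_k$. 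Then $A=L_+-(L_+)^{\dagger}$ is the asserted skew-symmetric difference operator of order $(-n,n)$, with $\cS^{n}$-coefficient $w_1$, $\cS^{-n}$-coefficient $-w_{1-n}$, and vanishing $\cS^{0}$-coefficient.

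With $A$ fixed, the plan is: (i) form $Q_t+QA$ and show that $Q$ right-divides it, which produces $B$ as a difference operator of order $(-n,n)$ whose leading term is forced to be $w_1\cS^{n}$ by the top term of $QA$; (ii) with this $B$, verify $BP=P_t+PA$ by comparing the coefficients of $\cS^{\,n-1+j}$ for $-n\le j\le n$. Step (i) is the first substantial point: the negative tail of $Q_tQ^{-1}+QAQ^{-1}$ cancels precisely because of the special form of $A$, and establishing this uniformly in $n$ rests on the way the products $\prod_k w_k$ telescope when shifted past one another, together with repeated use of $w_k^{2}=1+u_k^{2}$. Step (ii) then collapses, after the same telescoping, to a small number of independent coefficient relations.

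The main obstacle is this uniform-in-$n$ bookkeeping in the presence of the irrational coefficients $w_k=\sqrt{1+u_k^{2}}$. I would manage it by working in the difference ring obtained from $\cR$ by adjoining symbols $w_k$ subject to $w_k^{2}=1+u_k^{2}$ and $\cS(w_k)=w_{k+1}$, and, where convenient, rationalising via $u_k=\sinh\theta_k$, $w_k=\cosh\theta_k$, which turns \eqref{squareq} into a polynomial lattice and makes $Q$, $P$, $A$, $B$ rational in $e^{\theta_k}$. The case $n=2$, whose integrability is already known \cite{GY17}, provides a convenient consistency check for the explicit forms of $B$ and $A$.
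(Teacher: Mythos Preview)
Your proposal is correct and follows essentially the same route as the paper: reduce $L_t=[A,L]$ to the pair of dressing identities $Q_t=BQ-QA$, $P_t=BP-PA$, compute $L_+$ (and hence $A$) explicitly via the geometric expansion of $Q^{-1}$, then determine $B$ from the $Q$-equation coefficient by coefficient and check consistency with the $P$-equation. The paper carries out exactly this, writing out the recursions for the $b^{(l)}$ and recovering \eqref{squareq} by eliminating $b^{(0)}$ between the $\cS^0$ and $\cS^{-1}$ coefficients; your suggested hyperbolic substitution $u_k=\sinh\theta_k$ is a convenient bookkeeping device that the paper does not use but is otherwise inessential.
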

{\bf Proof}. Note that
$$
Q^{-1}=(1-\frac{w u_1}{u} \cS^{-1})^{-1}\frac{1}{u}=-(1+\frac{w u_1}{u} \cS^{-1}+\frac{w u_1}{u} \cS^{-1}\frac{w u_1}{u} \cS^{-1}+\cdots)\frac{1}{u}.
$$
Thus 
\begin{eqnarray*}
&&L_{+}=a^{(n)}\cS^n+a^{(n-1)}\cS^{n-1} \cdots +a^{(1)}\cS,\\
&&a^{(n)}=w_1, \ a^{(n-1)}=u u_1, \ a^{(n-l)}=w w_{-1} \cdots w_{2-l} u_1 u_{1-l},\ 2\leq l\leq n-1.
\end{eqnarray*}
These are the coefficients of positive shifts in $A$. For the negative parts, we have
$$ a^{(-l)}=-\cS^{-l} a^{(l)}=-a^{(l)}_{-l}. $$

The Lax equation $L_t=[A, L]$ implies that there exists a difference operator $B$ such that
\begin{eqnarray}\label{lpq}
 P_t=BP-PA, \qquad Q_t=BQ-QA,
\end{eqnarray}
where the operator $B=\sum_{l=-n}^n b^{(l)}\cS^l$
can be determined by $A$, $P$ and $Q$. To do so, we write down the equivalent equations for (\ref{lpq}) according to the orders of $\cS$. Using the equation for the operator $Q$, we get
\begin{eqnarray}
 \cS^{n}:&& b^{(n)} u_n =u a^{(n)};\label{q1}\\
 \cS^{l}: && b^{(l+1)}u_{l+2}w_{l+1}-b^{(l)} u_{l}-u_1 w a_{-1}^{(l+1)}+u a^{(l)}=0, \ 1\leq l\leq n-1\ \mbox{or} \ -n\leq l\leq -2;\label{q2}\\
 \cS^0:&& u_{t}=-b^{(1)} u_{2} w_1+b^{(0)} u+u_1 w a_{-1}^{(1)};\label{q3}\\
 \cS^{-1}: && u_{1t} w+u_1 w_{t}= b^{(0)}u_1 w-b^{(-1)} u_{-1}+u a^{(-1)};\label{q4}\\
 \cS^{-n-1}: && b^{(-n)} u_{1-n}w_{-n}=u_1 w  a_{-1}^{(-n)}=-u_1 w w_{-n}.\label{q5}
\end{eqnarray}
From (\ref{q1}), it follows that $b^{(n)}=\frac{u w_1}{u_n}$. We substitute it into (\ref{q2}) for $l=n-1$ and obtain
\begin{eqnarray*}
 &&b^{(n-1)}= \frac{1}{u_{n-1}} \left(b^{(n)}u_{n+1}w_{n}-u_1 w a_{-1}^{(n)}+u a^{(n-1)}\right)\\
 &&\quad=\frac{1}{u_{n-1}} \left(\frac{u w_1}{u_n}u_{n+1}w_{n}-u_1 w2+u^2 u_1\right)=\frac{u u_{n+1} w_1w_n}{u_n u_{n-1}}
 -\frac{u_1}{u_{n-1}}.
\end{eqnarray*}

When $1\leq l\leq n-2$ notice that
$
-u_1 w a_{-1}^{(l+1)}+u a^{(l)}=0. 
$
Thus in this case (\ref{q2}) becomes $$b^{(l)} u_{l}=b^{(l+1)}u_{l+2}w_{l+1},$$
which leads to
\begin{eqnarray*}
b^{(l)}=\frac{u_{n-1}u_n}{u_l u_{l+1}}w_{l+1} \cdots w_{n-1}  b^{(n-1)}=
\frac{1}{u_l u_{l+1}} \left(\prod_{j=l+1}^{n-1} w_{j}\right) \left(u u_{n+1} w_1w_n-u_1 u_n\right).
\end{eqnarray*}
Similarly, from (\ref{q5}), we get $b^{(-n)}=-\frac{u_1 w}{ u_{1-n}}$. Substituting it into (\ref{q2}) for $l=-n$, we have
\begin{eqnarray}
 &&b^{(-n+1)}= \frac{1}{u_{2-n}w_{1-n}} \left(b^{(-n)}u_{-n}+u_1 w a_{-1}^{(1-n)}-u a^{(-n)}\right)\nonumber\\
 &&\quad=\frac{1}{u_{2-n}w_{1-n}} \left(-\frac{u_1 w}{ u_{1-n}} u_{-n}-u_1 w u_{-n} u_{1-n}+u w_{1-n}\right)
=\frac{u u_{1-n}-u_1 u_{-n} w w_{1-n}}{u_{1-n} u_{2-n}} \ .\label{bmn}
\end{eqnarray}

When $1-n\leq l\leq -2$, notice that
\begin{eqnarray}
 &&-u_1 w a_{-1}^{(l+1)}+u a^{(l)}=u_1 w \cS^{l} a^{(-l-1)}-u \cS^l a^{(-l)}
=u_1 w \left(\prod_{k=0}^{n+l-1}w_{l-k} \right) u_{l+1} u_{-n}-u\left(\prod_{k=0}^{n+l-2}w_{l-k} \right) u_{l+1} u_{1-n}\nonumber\\
 &&\qquad=-\left(\prod_{k=0}^{n+l-2}w_{l-k} \right) u_{l+1} (u u_{1-n}-u_1 u_{-n} w w_{1-n}) .\label{bmin}
\end{eqnarray}
Thus combining (\ref{bmn}) and (\ref{bmin}), using (\ref{q2}) we obtain, for $2-n\leq l\leq -1$,
\begin{eqnarray*}
b^{(l)}=
\frac{1}{u_l u_{l+1}} \left(\prod_{j=2-n}^{l} w_{j}\right) \left(u u_{1-n}-u_1 u_{-n} w w_{1-n}\right).
\end{eqnarray*}
We now eliminate $b^{(0)}$ from (\ref{q3}) and (\ref{q4})  and get
\begin{eqnarray*}
&&\qquad u_{1t} w+u_1 w_{t}-\frac{u_1 w}{u} u_t=w u_{1t}-\frac{u_1}{uw} u_t\\
&&=\frac{u_1 w}{u} \left(b^{(1)} u_{2} w_1-u_1 w a_{-1}^{(1)} \right)-b^{(-1)} u_{-1}+u a^{(-1)}
=\frac{u_1 u_2 w w_1}{u} b^{(1)} -\frac{u_1^2 w^2}{u} a_{-1}^{(1)} -b^{(-1)} u_{-1}-u a_{-1}^{(1)}\\
&&=\frac{w w_1}{u} \left(\prod_{j=2}^{n-1} w_{j}\right) \left(u u_{n+1} w_1w_n-u_1 u_n\right)
-\frac{1}{u} \left(\prod_{j=1}^{n-2} w_{-j}\right) \left(u u_{1-n}-u_1 u_{-n} w w_{1-n}\right)\\
&&\quad -\frac{u_1^2 w^2+u^2}{u} \left(\prod_{j=1}^{n-2} w_{-j}\right) u u_{1-n}\\
&&=\left(w \cS-\frac{u_1}{uw} \right) w^2 \left( \left(\prod_{j=1}^{n-1} w_{j}\right) u_n-\left(\prod_{j=1}^{n-1} w_{-j}\right) u_{-n}\right),
\end{eqnarray*}
which is satisfied by the given equation. 
Finally, we check the obtained $B$ is consistent with the equation for the operator $P$ in (\ref{lpq}) by directly computation,
and thus we proved the statement. \hfill $ \Box$

\begin{Rem}
In \cite{GY17}, Garifullin and Yamilov established the integrability properties of the equation
\begin{eqnarray*}\label{gy17}
u_t=(u^2-1)\left(\sqrt{u_1^2-1} u_2-\sqrt{u_{-1}^2-1} u_{-2} \right),
\end{eqnarray*}
which is the equation (\ref{squareq}) when $n=2$ under the scaling transformation.
They provided its Lax pair as $L \psi=0$ and $\psi_t= A \psi$ with
\begin{eqnarray*}
&& L=u w'_1 \cS^2+u_1 \cS -\lambda \left(u_1 w' \cS^{-1} -u \right);\qquad w'=\sqrt{u^2-1},\\
&& A= \frac{w'}{u} \left(w'(u_1 \cS+u_{-1} \cS^{-1})-\lambda^{-1} u_{-1} \cS+\lambda u_1 \cS^{-1}\right).
\end{eqnarray*}
This is the same as the Lax representation given in the theorem (after rescaling and simple gauge).
Let $P=u w'_1 \cS^2+u_1 \cS$ and $Q=u_1 w' \cS^{-1} -u$. 
If we eliminate $\lambda$ in $A$, we get
\begin{eqnarray*}
 &&A=\frac{w'^2 u_1}{u}  \cS+\frac{w'^2 u_{-1}}{u} \cS^{-1}- \frac{w' u_{-1}}{u} \cS P^{-1} Q
 +\frac{w' u_1}{u}\cS^{-1} Q^{-1} P\\
&&=\frac{w'^2 u_1}{u}  \cS+\frac{w'^2 u_{-1}}{u} \cS^{-1}- \frac{1}{u} \cS^{-1}  (u_1 w' \cS^{-1} -u) +P^{-1}Q
 +\frac{1}{u} (u w'_1 \cS^2+u_1 \cS)+Q^{-1}P\\
 &&=w'_1 \cS^2 + u u_1 \cS+u u_{-1} \cS^{-1} -w'_{-1} \cS^{-2} +P^{-1}Q+Q^{-1}P,
\end{eqnarray*}
where the nonlocal part commutes with the fractional operator $Q^{-1}P$ and thus we can discard it. 
\end{Rem}
We prove that the hierarchy of commuting symmetries is well defined for any $n\in \mathbb{N}$. To do so, we use the $r$-matrix approach \cite{BM94}. We consider the Lie algebra denoted by $\lieg$ of the formal Laurent series of the shift operator with
the commutator $[A, B]=AB-BA$, where $A, B\in \lieg$.  It is easy to see that any element
$$
B= b^{(m)} \cS^m + b^{(m-1)} \cS^{m-1} + \cdots \in \lieg
$$
admits a unique decomposition of the form
$$
B=B_{+}-(B_{+})^{\dagger}+H,
$$
where $B_{+}$ is a difference operator with positive powers of $\cS$. We denote the antisymmetric part $B_{+}-(B_{+})^{\dagger}$ 
of any element $B$ as $\lieg_{+}$ and the other part $H$ as $\lieg_{-}$. It is obvious that these both parts form Lie subalgebras.
Thus we have the decomposition of the Lie algebra
$$
\lieg= \lieg_{+} \oplus \lieg_{-}
$$
into the direct sum of two Lie subalgebras. We define the projections 
$$
\pi_{\pm}: \lieg \rightarrow \lieg_{\pm}
$$
and the $r$-matrix $r=\frac{1}{2} (\pi_{+}-\pi_{-}).$ We now formulate the statement generating commuting symmetries, which can be proved in the standard way in the $r$-matrix approach \cite{BM94}.

{\bf Corollary}. The flows defined by the Lax equations
$
\partial_{t_p} L =[\pi_{+}(L^p), \ L]
$
commute with each other. 

{\bf Proof}.  Let $A^{(p)}=\pi_{+}(L^p)$. Using Lax equations, we have
\begin{eqnarray*}
\partial_{ t_q} \partial_{t_p} L-\partial_{ t_p} \partial_{t_q} L
=[\partial_{ t_q}A^{(p)}
-\partial_{ t_p}A^{(q)}-[A^{(q)}, A^{(p)}],L].
\end{eqnarray*}
So it is sufficient to show that
$$
\partial_{ t_q}A^{(p)}
-\partial_{ t_p}A^{(q)}-[A^{(q)}, A^{(p)}]=0.
$$
For $l\in \mathbb{Z}$, we have $\partial_{t_p} L^l=[A^{(p)},L^l] $ since
\begin{eqnarray*}
[\partial_{t_p} L^l, \ L]=-[L^l, \partial_{t_p}L]
=-[L^l, [A^{(p)}, L]]
=[[A^{(p)},L^l], L]\ .
\end{eqnarray*}
Thus
\begin{eqnarray*}
&&\partial_{ t_q}A^{(p)}
-\partial_{ t_p}A^{(q)}-[A^{(q)}, A^{(p)}]
\\&&
=\pi_{+}(\partial_{ t_q}L^p)
-\pi_{+}(\partial_{ t_p}L^q)-[A^{(q)}, A^{(p)}]
=\pi_{+}([A^{(q)}, L^p]) -\pi_{+}([A^{(p)}, L^q])
-[A^{(q)}, A^{(p)}]
\\&&
=\pi_{+}([-\pi_{-}(L^q), L^p]) -\pi_{+}([-\pi_{-}(L^p), L^q])
-[L^q-\pi_{-}(L^q), L^p-\pi_{-}(L^p)]
\\&&
=\pi_{+}(-[\pi_{-}(L^q), \pi_{-}(L^p)])=0.
\end{eqnarray*}
This leads to
$\partial_{ t_q} \partial_{t_p} L=
\partial_{ t_p} \partial_{t_q} L$ as required.
\hfill $\diamond$

\section{Summary and Discussion}\label{sec7}

In order to give an exhaustive description of all integrable differential-difference equations of certain type one needs to find strong and verifiable necessary integrability conditions. All previous attempts to tackle this problem were   based on the integrability conditions dependent of the symmetry structure of equations which may not be known in advance.
In this paper, we developed an approach to establish the universal integrability conditions by introducing the notion of quasi-locality in the context of symbolic representation. We proved that if an equation of the form
\begin{equation*}
 u_t=f=f(u_{-n},\ldots, u_n),\qquad f=\sum_{i\ge 1}f^{(i)},\qquad f^{(i)}\in\ring^i,\qquad \ \ f^{(1)}\ne 0,
\end{equation*}
possesses an infinite dimensional algebra of its symmetries,  then there exists a unique quasi-local formal recursion operator with
symbolic representation
$$
\eta+\sum_{p\ge 1}\hu^p\phi_p(\xi_1,\ldots,\xi_p,\eta).
$$
Moreover, the recursive formulae for all terms $\phi_p(\xi_1,\ldots,\xi_p,\eta)$ are explicitly given. This is achieved by developing symbolic representation for the difference polynomial ring, difference operators and formal series.
The requirement of quasi-locality for all terms $\phi_p(\xi_1,\ldots,\xi_p,\eta)$ leads to the necessary integrability conditions, which are independent on the structure of the symmetry algebra.

We applied our new approach to classification of integrable equations of the form (\ref{gen0}).
We reproduced the integrable equations of such type in \cite{Yami1} for $n=2$, and obtained a complete list of integrable equations for $n=3$, which includes 17 equations which are complimentary to the lower order hierarchies.
For each equation in the list we found an infinite family of integrable equations of arbitrary high order. Integrability of the families obtained we proved in each case either by providing a transformations to a known integrable family, or by presenting a Lax representation.

Here, for completeness, we would like present a fractional Lax representation 
\[ L_t=[A,L],\ L=Q^{-1}P,\ \Leftrightarrow\ P_t=BP-PA, \quad Q_t=BQ-QA 
\]
 for the new family of equations \eqref{neweq1} found by Adler \cite{Adler4} in which 
\begin{eqnarray*}
 &&P=\left( \prod_{l=0}^{n-1} u_l \right)\cS^{2n}+\cS; \ \ 
 A= -\left(\prod_{l=-n}^{n-2} u_l\right) \cS^{2n-1} +\sum_{j=0}^{2n-2} \left( (1-u_{j-n})
 \prod_{l=n+1}^{2n-1} u_{j-l} \right)+\cS^{1-2n}; \\
 &&Q=\left( \prod_{l=0}^{n-2} u_l \right)\cS^{2n-1}+1; \ \ 
 B=-\left(\prod_{l=0}^{2n-2} u_l \right) \cS^{2n-1} +\sum_{j=0}^{2n-2} \left( (1-u_{j-2n+1})
 \prod_{l=n}^{2n-2} u_{j-l} \right)+\cS^{1-2n}.
\end{eqnarray*}

The global classification problem of integrable differential-difference equations still remains a very challenging problem. In the case of scalar polynomial 
evolutionary equations, we believe that there are new integrable hierarchies of high orders starting from any order.  Thus it is impossible to achieve the global classification result as it was done for scalar evolution PDEs \cite{mr99g:35058, mr2001h:37147}. The higher the order $n$ is in the equation, the more integrable equations there are even after we remove the ones being symmetries of lower order ones. Thus it is important to develop an approach to identify the transformations between integrable equations and to develop the concept of integrable families which have been discussed, but not yet defined in this paper.

Integrable differential-difference equations are generalised symmetries for discrete integrable equations. The authors of \cite{GGY19} constructed the autonomous quad-equations which admit as symmetries five-point differential-difference equations belonging to classification lists in \cite{Yami1}. Recently it was shown that there exist multi-points integrable discrete equations on a plain lattice related to B{\"a}cklund-Darboux transformations for the Narita-Itoh-Bogoyavlensky equation (\ref{NIB}) \cite{Adler3, pavlos19}.
 It would be interesting to construct discrete integrable equations for families discussed in Section \ref{sec62},
in particular, for the new integrable families \eqref{neweq1} and \eqref{squareq}.

\section*{Acknowledgments}
All authors would like to thank V.E. Adler for useful discussions and sending us his unpublished research notes.
The paper is supported by EPSRC small grant scheme EP/V050451/1, and partially by 
grants EP/P012655/1 and EP/P012698/1. AVM is grateful to the Ministry of Sciences  and Higher Education RF 
(agreement   075-02-2021-1397) for partial support. 

\end{document}